\newtheorem{theorem}{Theorem}[section]
\newtheorem{proposition}[theorem]{Proposition}
\newtheorem{lemma}[theorem]{Lemma}
\newtheorem{corollary}[theorem]{Corollary}
\theoremstyle{definition}
\newtheorem{def-theorem}[theorem]{Definition-Theorem}
\newtheorem{remark}[theorem]{Remark}
\newtheorem{example}[theorem]{Example}
\numberwithin{equation}{section}
\newcommand{\be}{\begin{equation}}
\newcommand{\LME}{{\cal S}_{LME}}
\newcommand{\ee}{\end{equation}}
\newcommand{\bea}{\begin{eqnarray}}
\newcommand{\eea}{\end{eqnarray}}
\newcommand{\beas}{\begin{eqnarray*}}
\newcommand{\eeas}{\end{eqnarray*}}
\newcommand{\ba}{\begin{array}}
\newcommand{\ea}{\end{array}}
\def\identity{{\rlap{1} \hskip 1.6pt \hbox{1}}}
\newcommand{\tr}{{\rm tr}}
\newcommand{\GITquot}{/\!/}
\newcommand{\SL}{\operatorname{SL}}
\newcommand{\SU}{\operatorname{SU}}
\newcommand{\GL}{\operatorname{GL}}
\newcommand{\bbP}{\mathbb{P}}
\begin{document}

\begin{titlepage}
\hfill
\vbox{
    \halign{#\hfil \cr
    MIT-CTP/4977  \cr
        } 
      }  
\vspace*{20mm}
\begin{center}
{\Large \bf Locally Maximally Entangled States \\ of Multipart Quantum Systems}

\vspace*{15mm}
\vspace*{1mm}
 Jim Bryan$^a$, Samuel Leutheusser$^{b,c}$, Zinovy Reichstein$^a$, and Mark Van Raamsdonk$^b$
\vspace*{1cm}
\let\thefootnote\relax\footnote{sam.leutheusser@alumni.ubc.ca, jbryan@math.ubc.ca, reichst@math.ubc.ca, mav@phas.ubc.ca}

{${}^{a}$ Department of Mathematics,
University of British Columbia\\
1984 Mathematics Road,
Vancouver, B.C., V6T 1Z1, Canada\\
${}^{b}$ Department of Physics and Astronomy,
University of British Columbia\\
6224 Agricultural Road,
Vancouver, B.C., V6T 1Z2, Canada \\
${}^{c}$
Center for Theoretical Physics,
Massachusetts Institute of Technology,
77 Massachusetts Avenue,
Cambridge, MA 02139, USA
}

\end{center}
\begin{abstract}

For a multipart quantum system, a locally maximally entangled (LME)
state is one where each elementary subsystem is maximally entangled
with its complement. This paper is a sequel to~\cite{mathpaper},
which gives necessary and sufficient conditions for a system to admit LME states in terms of its subsystem dimensions $(d_1, d_2, \dots, d_n)$, and computes the dimension of the space $\LME/K$ of LME states up to local unitary transformations for all non-empty cases. Here we provide a pedagogical overview and physical interpretation of the underlying mathematics that leads to these results and give a large class of explicit constructions for LME states. In particular, we construct all LME states for tripartite systems with subsystem dimensions $(2,A,B)$ and give a general representation-theoretic construction for a special class of stabilizer LME states. The latter construction provides a common framework for many known LME states. Our results have direct implications for the problem of characterizing SLOCC equivalence classes of quantum states, since points in $\LME/K$ correspond to natural families of SLOCC classes. Finally, we give the dimension of the stabilizer subgroup $S \subset \SL(d_1, \mathbb{C}) \times \cdots \times \SL(d_n, \mathbb{C})$ for a generic state in an arbitrary multipart system and identify all cases where this stabilizer is trivial.

\end{abstract}

\end{titlepage}
\tableofcontents

\setcounter{footnote}{0}

\section{Introduction}

Consider a multipart quantum system whose pure states are vectors in a tensor product Hilbert space
\[
{\cal H} = {\cal H}_{1}  \otimes \cdots  \otimes {\cal H}_{n}
\]
with subsystems ${\cal H}_i$ of dimension $d_i$. We will say that a state in ${\cal H}$ is locally maximally entangled (LME) if the reduced density matrix corresponding to each elementary subsystem is a scalar multiple of the identity operator on that subsystem.\footnote{Such states are sometimes called
locally maximally mixed states or 1-uniform states. Various authors in the past (see,
for example \cite{GB, HS,Scott, GZ, Helwig:2012nha, FFPP, HGS}) have also considered
the more constrained problem of requiring in addition that certain composite subsystems are maximally mixed \cite{Scott}. LME states are called $m$-uniform when all $m$-party reduced density matrices are maximally mixed. States for which all subsystems with dimension less than or equal to the dimension of the complementary subsystem are called absolutely maximally entangled (AME) or maximally multipartite entangled states (MMES) \cite{FFPP}. We will discuss and provide examples of these more special states in section 3.1, examples 7 and 8.} Fixing the dimension vector $(d_1,\dots,d_n)$, we define $\LME \subset {\cal H}$ to be the subset of states which are locally maximally entangled:
\be
\LME = \left\{|\Psi \rangle \in {\cal H} \; \Big| \; \rho_i \equiv \tr_{\bar{i}}|\Psi \rangle \langle \Psi | = \frac{1}{d_i} \identity \right\}\; .
\ee
LME states play an important role in many applications of quantum mechanics and quantum information
theory. Their importance has been pointed out by many authors in the past, for example \cite{Kly02, VDD, Scott}.
Well-known examples of LME states include Bell states
\be
\label{Bell}
|\Psi \rangle = \frac{1}{\sqrt{d}} \sum_{i} |i \rangle  \otimes |i \rangle \in {\cal H}_d  \otimes {\cal H}_d \; ,
\ee
the GHZ state
\be
\label{GHZ}
|\Psi \rangle = \frac{1}{\sqrt{2}} (|0 \rangle  \otimes |0 \rangle  \otimes |0 \rangle  + |1 \rangle  \otimes |1 \rangle  \otimes |1 \rangle) \in {\cal H}_2  \otimes {\cal H}_2  \otimes {\cal H}_2
\ee
and its generalizations, various quantum error-correcting code states, cluster states, and graph states.

In this paper, we consider the following basic questions:
\begin{enumerate}
\item
For which dimensions $(d_1,\dots,d_n)$ do LME states exist?\footnote{It has been suggested previously that the inequalities $d_i \le \prod_{j \ne i} d_j$ provide necessary and sufficient conditions. We will see that these conditions are necessary but not sufficient.}
\item
How can we characterize the space of these states? What is the dimension, geometry, etc... ?
\item
Can we give explicit constructions of states in $\LME$?
\end{enumerate}
These questions exhibit a remarkable mathematical richness: it turns out that they are related to natural questions in representation theory, symplectic and algebraic geometry, and geometric invariant theory, as reviewed for example in \cite{Walter}, \cite{SOK14} (see additional reference below). Making use of tools from all of these areas, we are able to provide a complete answer to (1), and new results for questions (2) and (3).

\subsubsection*{Explicit constructions}

We begin in section 2 with some explicit constructions, reviewing the
case $n=2$ and considering in detail the case $n=3$ with dimensions $2
\le d_2 \le d_3$. In these cases, we can explicitly construct all
states in $\LME$. For $n=2$ these states exist if and only
if $(d_1,d_2) = (N,N)$ while for $(d_1, d_2, d_3) = (2,d_2,d_3)$ we
find that these states exist if and only if $(d_2,d_3) = (N,N)$ with
$N \ge 2$ or $(d_2,d_3) = (Nk,(N+1)k)$ with $N k \ge 2$. Except for the
cases $(2,N,N)$ with $N \ge 4$, we find that there is a unique LME
state up to local unitary transformations (i.e. up to a change of
basis for each subsystem). For the case $(2,N,N)$ with $N \ge 3$ the
space of LME states up to local unitary transformations has real
dimension $2(N-3)$. It is equivalent to the space of collections of
$N$ unit vectors in $\mathbb{R}^3$ adding to zero, with sets of
vectors related by an $SO(3)$ rotation considered as
equivalent.\footnote{Remarkably, this space is also equivalent to $N$-tuples of
points on $\mathbb{CP}^{1}$, with no more than $N/2$ of the points
coinciding, up to M\"obius transformations. This is a very concrete
realization of the Kempf-Ness equivalence between symplectic
and algebraic quotients which we discuss in subsequent sections.}

\subsubsection*{Connection to representation theory}

In section 3, we point out a general way to construct LME states using
data from the representation theory of finite or connected compact
groups. Given any group $H$ with unitary irreducible representations
$R_1,\dots ,R_n$ of dimensions $d_1, \dots, d_n$ whose tensor product
contains the trivial representation, let \be |0 \rangle =
\sum_{\vec{i}} C_{i_1 \cdots i_n} |i_1 \rangle  \otimes \cdots  \otimes
|i_n \rangle \ee be an explicit representation of a vector in $V_{d_1}
 \otimes \cdots  \otimes V_{d_n}$ which transforms trivially. Then we
can show that the vector $|0 \rangle$, viewed as a quantum state in
the Hilbert space ${\cal H}$, is locally maximally entangled. Many of
the well-known LME states can be understood in this way. For example,
for any group $H$, the trivial representation obtained from the tensor
product of any representation and its dual gives a Bell state, while
the GHZ state corresponds to the trivial representation obtained in
the tensor product of three two-dimensional representations of the
symmetric group $S_3$.

We show that all LME states obtained through this construction can be represented via tensor networks corresponding to tree graphs with trivalent vertices, where the edges are labeled with representations of $H$ and the tensors corresponding to each vertex are Clebsch-Gordon like coefficients which construct one representation from the product of the other two. We point out (based on an observation of Eliot Hijano) that it is possible to construct quantum systems for which all states are of this type by starting with a multipart quantum system whose symmetry group $H$ acts irreducibly on each subsystem and then gauging $H$.

Finally, we point out that our results provide some insight into an open question in representation theory of finding the set of all dimensions $\{d_1, d_2, d_3\}$ for which there exist (for some group) unitary irreducible representations of dimension $d_1$ and $d_2$ whose tensor product contains a representation of dimension $d_3$.

\subsubsection*{General characterization of $\LME$}

In sections 4 and 5, we move on to the question of
characterizing the space $\LME$ in general.
Remarkably, the question of characterizing LME states may be phrased as a natural question
in symplectic geometry, and then as an equivalent question in algebraic geometry / geometric
invariant theory.  These connections are well documented in the literature; see
for example~\cite{Kly02}, \cite{VDD}, \cite[\S3]{Kly07}, \cite[\S4]{wallach}, \cite{SOK12}
(and \cite{Walter} or \cite{SOK14} for recent reviews).

 \begin{figure}
\centering
\includegraphics[width=0.8\textwidth]{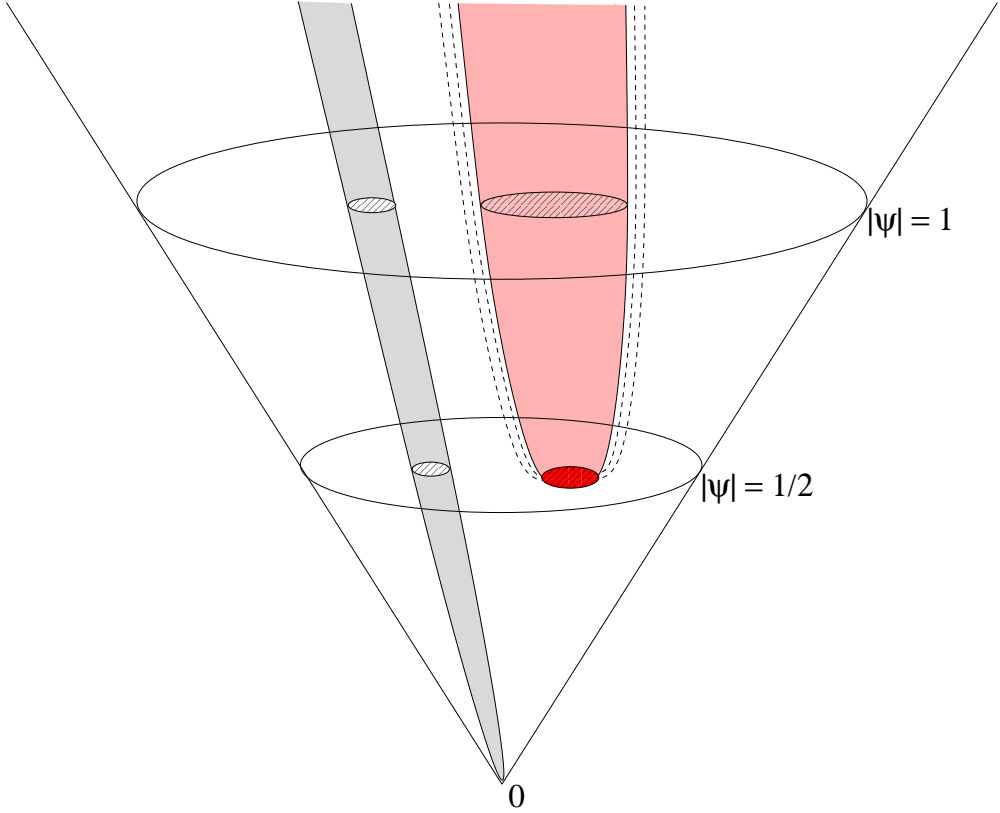}
\caption{Correspondence between LME states and SLOCC equivalence classes. The cone denotes the full Hilbert space ${\cal H}$ of unnormalized states. Horizontal sections correspond to states with specific normalization. These sections decompose into orbits of $K = \SU(d_1) \times \cdots \times \SU(d_n)$ (crosshatched horizontal ellipses). SLOCC equivalence classes correspond to orbits of $G = \SL(d_1, \mathbb{C}) \times \cdots \times \SL(d_n,\mathbb{C})$ (vertical); these are made up of $K$-orbits with a range of normalizations. LME states are precisely those which minimize the norm on their orbit.  Orbits with such states are closed and called {\it polystable} (e.g. solid orbit on the right). Each polystable orbit contains a single $K$-orbit of LME states (red ellipse). Any orbit with a lower bound on the norm (known as a {\it semistable} orbit) contains a single polystable orbit in its closure. Orbits with the same polystable orbit in their closure are considered equivalent. Thus, $K$-orbits of LME states are in one-to-one correspondence with equivalence classes of semistable $G$-orbits (e.g. the family denoted by dashed curves). Each equivalence class corresponds to a  natural family of SLOCC orbits, and these families encompass all states apart from those (called {\it unstable}) in orbits containing states of arbitrarily small norm (e.g. the orbit shown on the left). Mathematically, the space of these equivalence classes is known as the {\it geometric invariant theory} (GIT) quotient ${\cal H}\GITquot G$. See section 4 for a detailed review. }
\label{fig:cone}
\end{figure}

Through these connections, the space $\LME/K$ of LME states with equivalence under local unitary transformations $K = U(d_1) \times \cdots \times U(d_n)$ is identified with the ``geometric invariant theory'' (GIT) quotient of the full space of states $\bbP({\cal H})$
by the larger group $G = \SL(d_1, \mathbb{C}) \times \cdots \times \SL(d_n, \mathbb{C})$.
\footnote{Here $\bbP({\cal H})$ denotes the space of normalized states with equivalence up to phase;
mathematically, this is the projective space $\mathbb{CP}^{d_1 \cdots d_n -1}$.}
Note that $G$ is the complexification of $K$.
The GIT quotient has the structure of a K\"ahler manifold and can be described as a complex algebraic variety in a weighted projective space.

Physically, the group $G$ corresponds to stochastic local operations and classical communications (SLOCC) \cite{DVC}. Orbits under $G$ are SLOCC equivalence classes of quantum states.  The GIT quotient represents the space of these equivalence classes for states with ``generic'' entanglement,\footnote{Here, generic states are those which cannot be transformed via SLOCC transformations to vectors with arbitrarily small norm; equivalently, they can be distinguished from the 0 vector by some SLOCC invariant polynomial.} with the additional identification of orbits if one orbit is in the closure of another - these identifications group natural families of SLOCC classes \cite{SOK14}. A pedagogical introduction to the beautiful mathematics behind this equivalence is provided in section 4; some key points are summarized in figure \ref{fig:cone}. The presence of LME states as special points on the $G$-orbits of more general states was understood by \cite{VDD} and led these authors to refer to the LME states as ``normal forms,'' which is still a widely used terminology.

In section 5, we use the equivalence and the tools of geometric invariant theory to provide necessary and sufficient conditions on the dimensions $\vec{d} = (d_1, \dots, d_n)$ that determine whether the space $\LME$ is non-empty, and we determine the dimension of $\LME/K$ in the non-empty cases. This section provides a non-technical  overview of the rigorous mathematical results presented in \cite{mathpaper}.

To state the existence condition, we define for positive integers $k_i$
\be
\label{defN}
N(k_1, \dots, k_n) = \sum_{l=1}^n (-1)^{l+1} \sum_{1 \le i_1 < \cdots < i_l \le n} (\gcd(k_{i_1},\dots,k_{i_l}))
\ee
which is equal to the number of rational numbers in $(0,1]$ whose denominator divides some $k_i$.\footnote{This interpretation of $N$ has been pointed out to us by David Savitt.} Then we have:
\begin{theorem}\label{thm.R}
For a multipart Hilbert space with elementary subspace dimensions $\vec{d} = (d_1, \dots, d_n)$ there exist locally maximally entangled states if and only if $R(\vec{d}) \ge 0$, where
\be
R(\vec{d}) = \prod_i d_i - N(d_1^2,\dots,d_n^2) \; .
\ee
\end{theorem}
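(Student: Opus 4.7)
My plan is to reduce the statement to a question in geometric invariant theory and then to an arithmetic computation on the character lattice of $G = \SL(d_1,\mathbb{C}) \times \cdots \times \SL(d_n,\mathbb{C})$. By the Kempf-Ness correspondence reviewed in Section 4, a $G$-orbit contains an LME representative iff it is polystable, and $\bbP(\mathcal{H}) \GITquot G$ parametrizes such orbits. Hence $\LME$ is non-empty iff this quotient is non-empty iff the ring of $G$-invariants $\mathbb{C}[\mathcal{H}]^G$ contains non-constants (equivalently, some homogeneous SLOCC invariant separates a generic state from the null cone). The theorem thus reduces to showing that $\mathbb{C}[\mathcal{H}]^G \neq \mathbb{C}$ is equivalent to $R(\vec{d}) \ge 0$.

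For the ``if'' direction I would compute the Hilbert series of $\mathbb{C}[\mathcal{H}]^G$ via the Molien-Weyl integral over $K = \SU(d_1) \times \cdots \times \SU(d_n)$,
\[
H(t) = \int_K \frac{d\mu(k)}{\det_{\mathcal{H}}(1 - t\rho(k))},
\]
and show it is not identically $1$ when $R(\vec{d}) \ge 0$. Expanding in weights of a maximal torus $T \subset G$ and integrating term-by-term, the contribution indexed by any subset $\{i_1,\dots,i_l\}$ of factors is scaled by $\gcd(d_{i_1}^2,\dots,d_{i_l}^2)$, which is the source of the inclusion-exclusion defining $N(d_1^2,\dots,d_n^2)$. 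Interpreting $N$ as the size of $\bigcup_i \{j/d_i^2 : 1 \le j \le d_i^2\} \subset \mathbb{Q}/\mathbb{Z}$ identifies it as the effective number of independent character directions cut out by the weights of $\mathcal{H}$ together with the Weyl denominators; when $\prod d_i > N$ there is ``room'' for positive-degree invariants to survive the residue.

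For the ``only if'' direction I would invoke the Hilbert-Mumford numerical criterion: when $R(\vec{d}) < 0$ the character count above guarantees that for every nonzero $|\Psi\rangle$ one can align a one-parameter subgroup of $G$ with the weight support of $|\Psi\rangle$ so that all weights are strictly positive, forcing the orbit closure of $|\Psi\rangle$ to contain $0$. No $G$-orbit is then polystable and $\LME = \emptyset$. The main obstacle is the precise matching between the Molien-Weyl pole structure (or equivalently the moment polytope data) and the arithmetic formula for $N$: one must carefully track cancellations between the Weyl denominators of the individual $\SL(d_i,\mathbb{C})$ factors and the weight poles of $\mathcal{H}$, since the cross-factor overlaps produce exactly the $\gcd(d_{i_1}^2,\dots,d_{i_l}^2)$ terms. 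This combinatorial/residue analysis on a product of tori is the heart of the argument, and is carried out rigorously in \cite{mathpaper}.
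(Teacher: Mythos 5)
Your opening reduction is correct and matches the paper: by the Kempf--Ness correspondence, $\LME \ne \emptyset$ if and only if the ring of $G$-invariants on ${\cal H}$ is non-trivial. But from that point on there is a genuine gap. The entire content of the theorem is the equivalence between non-triviality of $\mathbb{C}[{\cal H}]^G$ and the sign of $R(\vec{d})$, and your proposal replaces a proof of this with an assertion that a Molien--Weyl residue computation over $K$ would ``produce exactly the $\gcd(d_{i_1}^2,\dots,d_{i_l}^2)$ terms.'' No mechanism is given for why the inclusion--exclusion count $N(d_1^2,\dots,d_n^2)$ of rationals in $(0,1]$ with denominator dividing some $d_i^2$ should emerge from the pole structure of that integral, and the paper itself observes (Section 4.5) that the equivalent character-theoretic computation via Schur--Weyl duality ``turns out to be computationally hard for all but the smallest dimensions,'' which is precisely why it abandons this route. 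Your appeal to \cite{mathpaper} as the place where this residue analysis ``is carried out rigorously'' is a misattribution: that paper does not perform any such computation. The ``only if'' direction has the same problem --- you invoke Hilbert--Mumford but give no argument for why $R<0$ forces every vector to be destabilized by some one-parameter subgroup.

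What the paper actually does is quite different and is worth internalizing. Theorem \ref{thm.States} gives a recursion: the terminal cases are the elementary ones ($d_n > d_1\cdots d_{n-1}$ gives emptiness by the Schmidt-rank bound; $d_n = d_1\cdots d_{n-1}$ gives a Bell state; $d_n \le \tfrac12 d_1\cdots d_{n-1}$ gives non-emptiness via Elashvili's results on generic stabilizers), and the recursive step is the castling transform $d_n \mapsto d_1\cdots d_{n-1} - d_n$, under which the invariant ring is shown to be isomorphic by contracting with $\epsilon$-tensors. The formula for $R$ then enters because $\prod_i d_i$ modulo the gcd corrections, and the gcd's of subsets of the $d_i$ themselves, are invariants of the castling step, and Proposition \ref{prop:Rcases} shows the sign of $R$ detects which terminal case the recursion reaches. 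None of these ingredients --- the castling isomorphism, the invariance of $R$ under it, or the trichotomy of terminal cases --- appears in your proposal, so as written the argument does not establish the theorem.
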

To state the result for the dimension of $\LME/K$ in the non-empty cases,  we define
\be
\label{gmd}
{\rm g}_{max}(\vec{d}) \equiv \max_{1 \le i < j \le n} \gcd(d_i,d_j)
\ee
and define the {\it expected (complex) dimension}
\begin{equation} \label{defDelta}
\Delta(\vec{d}) \equiv \left(\prod_i d_i - 1\right) - \sum_i (d_i^2 - 1) = \dim(\bbP({\cal H})) - \dim(G) \; .
\end{equation}
Then our result for the dimension of $\LME/K$ is given by:
\begin{theorem}\label{thm.dims}
Let ${\cal H}$ be a multipart Hilbert space with elementary subspace dimensions $\vec{d} = (d_1, \dots, d_n)$. Then:

If $\Delta(\vec{d}) > -2$, then $R > 0$ and $\dim(\LME/K) = \Delta(\vec{d}) \ge 0$.

If $\Delta(\vec{d}) = -2$, then $R > 0$ and $\dim(\LME/K)  = \max ({\rm g}_{max}(\vec{d}) -3,0)$.

If $\Delta(\vec{d}) < -2$, then $R \le 0$ and $\LME/K$ is a single point for $R=0$ and empty for $R<0$.
\end{theorem}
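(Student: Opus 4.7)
The plan is to combine the GIT description of $\LME/K$ established in Section 4 --- namely the identification $\LME/K \cong \bbP(\mathcal H) \GITquot G$ --- with (i) Theorem \ref{thm.R} for the non-emptiness criterion $R(\vec d) \ge 0$, and (ii) the companion stabilizer theorem (stated later in the paper and advertised in the abstract), which gives the dimension $s(\vec d)$ of the stabilizer in $G$ of a generic state. Whenever the semistable locus is non-empty, the standard reductive-GIT dimension formula reads
\[
\dim(\LME/K) \;=\; \dim \bbP(\mathcal H) - \dim G + s(\vec d) \;=\; \Delta(\vec d) + s(\vec d),
\]
so the whole theorem reduces to determining the sign of $R$ and the value of $s$ in each regime of $\Delta$.

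The first task is to relate $\Delta$ and $R$ by direct arithmetic on the inclusion--exclusion formula \eqref{defN}. A short calculation using $\gcd(d_i^2,d_j^2) \le d_i d_j$ gives the baseline inequality $R(\vec d) \ge \Delta(\vec d)$, with equality when the $d_i$ are pairwise coprime (every gcd in \eqref{defN} is then $1$ and the alternating sum telescopes). This immediately handles $\Delta \ge 0 \Rightarrow R > 0$. For the regime $\Delta < -2$, a sharper bookkeeping of which shared factors actually appear in \eqref{defN} should yield $R \le 0$, with equality forcing the semistable locus to consist of a single closed orbit (the single-point case). The delicate residual is the boundary $\Delta = -2$, where I would verify that every admissible dimension vector has at least one pair with $\gcd \ge 2$ (the quadratic-in-$d_n$ structure of the equation $\Delta = -2$ strongly suggests this), so that $R - \Delta$ acquires strictly positive slack and $R > 0$; the sporadic value $\Delta = -1$ would be ruled out by a small case check.

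Next I would feed in $s(\vec d)$. In the regime $\Delta > -2$, the companion stabilizer theorem asserts $s = 0$, hence $\dim(\LME/K) = \Delta$ directly. In the boundary regime $\Delta = -2$, the orbit-stabilizer estimate $\dim\mathrm{stab} \ge \dim G - \dim\bbP(\mathcal H) = 2$ already forces $s \ge 2$. The sharper claim is that a generic LME state in this regime carries a continuous stabilizer of dimension exactly $\max(g_{\max}(\vec d) - 1, 2)$. To establish this I would fix a pair $(i, j)$ realizing $g_{\max}$ and exhibit a generic LME state containing a maximally entangled $g_{\max} \times g_{\max}$ block on $\mathcal H_i \otimes \mathcal H_j$; this block is annihilated by the diagonal action of the $(g_{\max}-1)$-dimensional maximal torus of $\SL_{g_{\max}}$ on the matched factors, which together with the automatic rank-$2$ lower bound yields $\max(g_{\max}-1,2)$. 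Combined with $\Delta = -2$ this gives $\max(g_{\max}-3,0)$. The case $\Delta < -2$ is then immediate from the first task together with Theorem \ref{thm.R}.

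The main obstacle will be the stabilizer computation in the $\Delta = -2$ case: showing that the stabilizer is of \emph{exactly} the claimed dimension and not larger due to some accidental representation-theoretic coincidence for a special dimension vector. A semicontinuity argument on the irreducible semistable locus, plus an orbit-dimension bound on the closed subvariety where the stabilizer is enlarged, should suffice; but verifying that no $\Delta = -2$ system has extra symmetry beyond the $\gcd$-indexed diagonal one is where I expect to spend most of the effort, and it is the natural point at which the proof would have to appeal to the detailed structural results of the companion paper \cite{mathpaper}.
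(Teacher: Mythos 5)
Your outline diverges from the paper's actual proof in a way that leaves the central difficulty untouched. The paper does not attack the generic stabilizer or the arithmetic relation between $R$ and $\Delta$ head-on for an arbitrary dimension vector. Instead, everything rests on the recursive Theorem~\ref{thm.States}: when $\tfrac12 d_1\cdots d_{n-1}<d_n<d_1\cdots d_{n-1}$, the ring of $G$-invariants for $(d_1,\dots,d_n)$ is isomorphic to that for $(d_1,\dots,d_{n-1},\,d_1\cdots d_{n-1}-d_n)$ (contract the $d_n$ indices with $\epsilon$ and dualize with the $\epsilon$ tensor of $\SL(d_1\cdots d_{n-1})$), so the quotient dimensions agree; one then checks that $\Delta$, $R$, and all the $\gcd$'s of sub-tuples are invariant under this replacement, and reads off the answer at the three terminal cases (where Elashvili-type stabilizer results apply and where $R<0$, $R=0$, $R>0$ can be verified directly). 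Your proposal omits this reduction entirely, and that omission shows up as two genuine gaps. First, the implications $\Delta<-2\Rightarrow R\le 0$ (with $R=0$ giving a single point) and $\Delta=-2\Rightarrow R>0$ are not obtainable by ``sharper bookkeeping'' of the inclusion--exclusion formula \eqref{defN}: your baseline inequality $R\ge\Delta$ is correct (and equality under pairwise coprimality is right), but in the regime $\Delta\le -2$ the sign of $R$ is governed by the sporadic Fibonacci-like structure of Proposition~\ref{prop.Fib}, e.g.\ $(2,2,3)$ has $\Delta=-3$ yet $R=0$ while $(2,3,5)$ has $R=-6$; distinguishing these is exactly what the recursion invariance of $R$ accomplishes, and no direct estimate on shared factors is in sight.

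Second, your treatment of the stabilizer is either circular or incomplete. Citing ``the companion stabilizer theorem'' for $s(\vec d)$ assumes the conclusion: by the very formula $\dim(\LME/K)=\Delta(\vec d)+\dim S$ that you invoke, knowing $s(\vec d)$ for all $\vec d$ \emph{is} Theorem~\ref{thm.dims}. In the $\Delta=-2$ case your torus construction (a maximally entangled $g_{\max}\times g_{\max}$ block killed by the opposite diagonal torus) correctly identifies the stabilizer of the \emph{terminal} configuration $(1,\dots,1,2,d',d')$ with $d'=g_{\max}$, but it only produces a lower bound $s\ge g_{\max}-1$ for one specially constructed state, not an upper bound for a generic one, and it does not explain why $g_{\max}$ of the original vector survives to control the answer --- that again is the statement that $\max_{i<j}\gcd(d_i,d_j)$ is a recursion invariant and that the recursion lands precisely on $(1,\dots,1,2,g_{\max},g_{\max})$ when $\Delta=-2$ (Proposition~\ref{prop.terminal}). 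To repair the proof you should prove the invariant-ring isomorphism underlying case~4 of Theorem~\ref{thm.States}, verify the invariance of $\Delta$, $R$ and the $\gcd$'s under $d_n\mapsto d_1\cdots d_{n-1}-d_n$, and classify the terminal cases; the statement of Theorem~\ref{thm.dims} then follows by combining Propositions~\ref{prop:Rcases} and~\ref{prop.terminal}.
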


Both of these results follow from a recursive algorithm for the dimension of $\LME/K$ given as Theorem \ref{thm.States} below. Using this algorithm, we can also provide some more explicit results
for the dimension of $\LME/K$.

\begin{figure}
\centering
\includegraphics[width=0.7\textwidth]{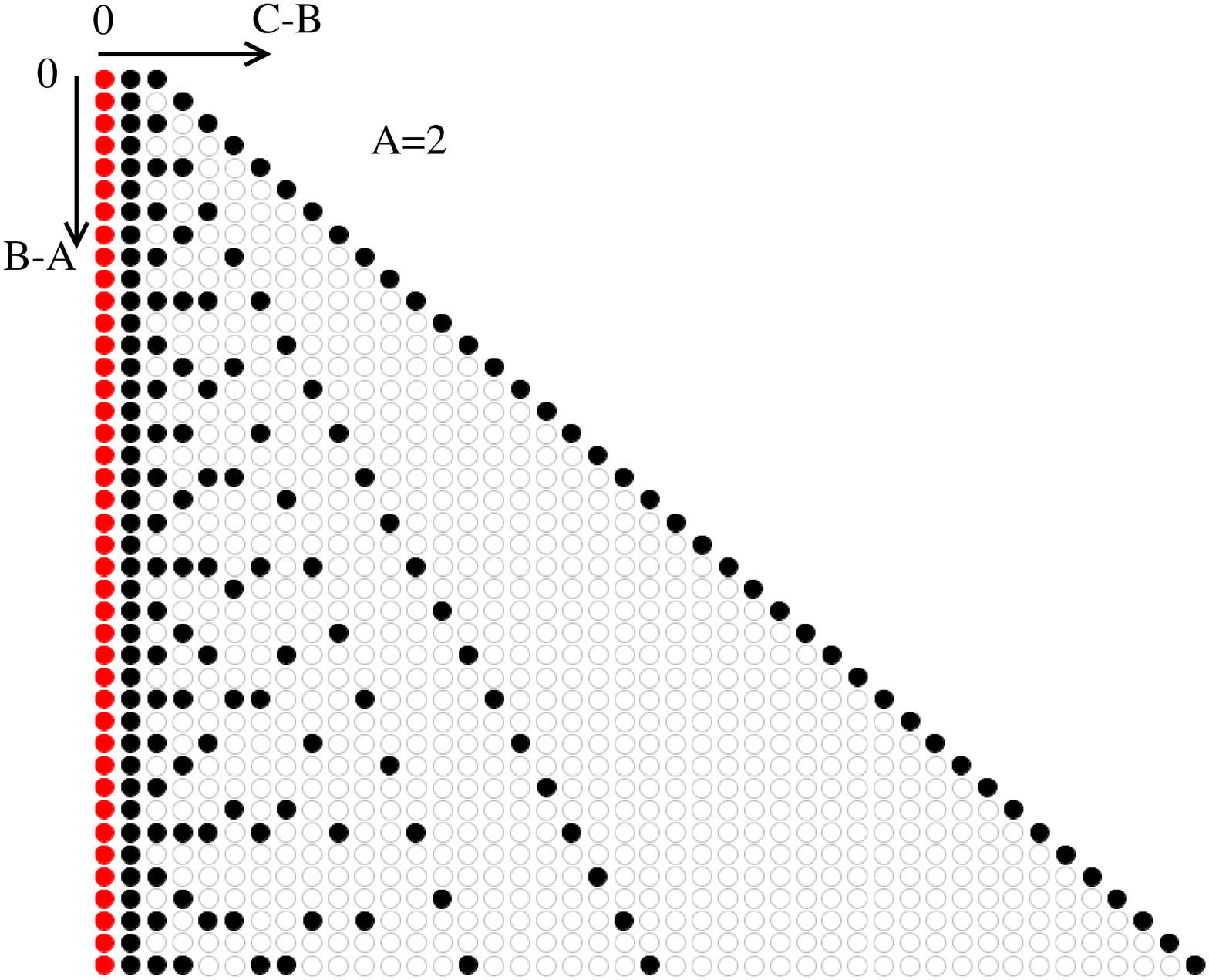}
\includegraphics[width=0.7\textwidth]{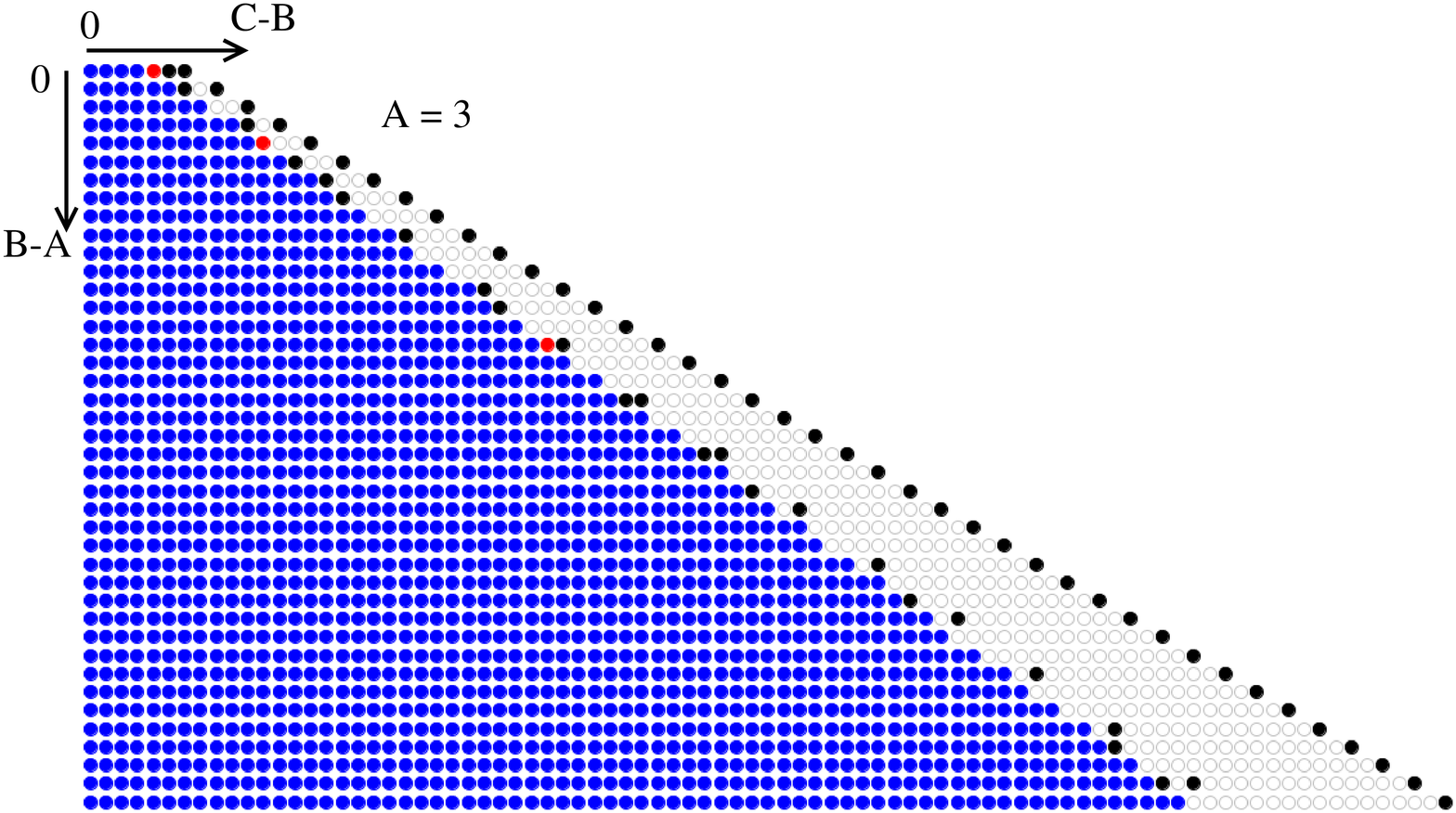}
\caption{Properties of the space $\LME/K$ of locally maximally entangled states up to local unitary transfomations for tripartite systems with dimensions $(A=2,B,C)$ and $(A=3,B,C)$. In each plot, circles correspond to values of $(A,B,C)$ (plotted with axes $(B-A)$ and $(C-B)$ for fixed $A$) for which the necessary conditions $B \le C \le AB$ for the existence of LME states are satisfied. Empty circles indicate cases where LME states do not exist. Blue filled circles (on the left in the $A=3$ case) indicate cases where the dimension of $\LME/K$ is equal to the expected dimension $ABC - A^2 - B^2 - C^2 + 2$ of the quotient $\bbP({\cal H})/G$. Black circles are non-empty cases with negative expected dimension but for which $\LME/K$ is a single point. Red circles are non-empty cases with expected dimension -2 for which $\LME/K$ is equivalent to a K\"ahler manifold describing unit vectors in $\mathbb{R}^3$ adding to zero with equivalence under $SO(3)$ rotations.}
\label{fig:23BC}
\end{figure}

Generally, for fixed
$(d_1,\dots, d_{n-1})$, there is a range of values $d_{n-1} \le d_n <
d_*$ for which the $\LME/K$ is not empty and its dimension $\Delta(\vec{d})$
is the naive dimension of the quotient $\bbP({\cal H})\GITquot G$. For $d_* \le d_n \le d_1 \cdots d_n$,
this naive dimension is negative,\footnote{Specifically, $d_*$ is
defined by setting $\Delta = -2$.} but there are sporadic values of
$d_n$ in this range for which the space $\LME/K$ is
nevertheless non-empty; it is given by a single point unless $\Delta =
-2$. For $n=3$, we are able to characterize the sporadic cases more explicitly. We find that they are all of the form $(d_1,d_2,d_3) = (A, f_i, f_{i+1})$ where $(f_i,f_{i+1})$ are successive terms in a Fibonnaci-like sequence defined by
\be
\label{Fib0}
f_{i+1} = A f_i - f_{i-1}
\ee
with $(f_0,f_1) = (k,kA)$ for some positive integer $k$ or $(f_0,f_1) \in S_A$, where $S_A$ is a finite set of pairs defined in Proposition 5.4. The results for dimensions $(2,B,C)$ and $(3,B,C)$ are displayed in figure \ref{fig:23BC}.

\subsubsection*{Relation to the quantum marginal problem and its solution}

Deciding whether LME states exist for given subspace dimensions is a particular case of the {\it quantum marginal problem}: given a collection of density operators associated to subsystems a of a multipart system, can these arise as the reduced density operators from a quantum state of the full system? This problem has received considerable attention in the past and has applications in quantum chemistry (see, for example \cite{TEVS, TVS}).

In the case where the subsystems do not overlap and the full system is assumed to be in a pure state, a general answer has been provided by Klyachko \cite{Kly04}.\footnote{See \cite{Walter} for a review.
For a discussion of the quantum marginal problem using a different approach from Klyachko's,
see~\cite{MT, VW} and references therein.} Klyachko's criterion is a set of inequalities on the spectra for the density operators, which may be expressed in the language of representation theory of the symmetric group (see section 4 for more details). These results provide an in-principle method to answer our question, but this becomes computationally intractable as the subsystem dimensions increase. For our specific instance of the quantum marginal problem, our existence results provide a much more direct and useful answer. It would be very interesting to understand whether similar methods could be used to address more general instances of the quantum marginal problem.

\subsubsection*{Multipart systems whose generic states have trivial stabilizer}

As we described above, the space of LME states up to local unitary transformations is isomorphic to the space of SLOCC orbit families if we exclude states which can be mapped arbitrarily close to the zero vector by SLOCC transformations $G$. In characterizing the dimension of this space of SLOCC families, an important step is to understand the stabilizer for a generic vector in the Hilbert space; that is, the subgroup of $G = \SL(d_1, \mathbb{C}) \times \cdots \times \SL(d_n, \mathbb{C})$ which leaves the vector invariant. Indeed, as we review in section 5, the dimension of the stabilizer subgroup $S$ for a generic vector gives the difference between the actual dimension of $\LME/K \sim (\bbP({\cal H}) \GITquot G)$ and the naive dimension $\Delta = {\rm dim}(\bbP({\cal H})) - {\rm dim}(G)$ of $\bbP({\cal H})/G$:
\be
{\rm dim}(S) = \Delta - {\rm dim}(\LME/K) \; ,
\ee
so finding the dimension of $\LME/K$ is equivalent to finding the dimension of the stabilizer of a generic vector.

There has been a significant amount of previous work on characterizing the stabilizer in the mathematics literature (see~\cite{elashvili, ampopov}), and we make use of a number of these results in \cite{mathpaper} in order to compute the dimension of the stabilizer of a generic state in all possible cases. As an aside, we point out here that this literature can also be used to answer another question of significant physical interest (see \cite{SWGK} for a recent discussion): for which multipart systems does the generic state have a trivial stabilizer (i.e. the vector is unfixed by any nontrivial SLOCC transformation)? As emphasized in \cite{GKW, SWGK}, this is interesting because it implies that almost all states in such systems are isolated in the sense that they cannot be related by deterministic LOCC transformations.

Using Theorem 2 of \cite{ampopov} and Corollary 1 to Lemma 2 of \cite{elashvili} together with results in our paper \cite{mathpaper}, we can show (see appendix C) that the stabilizer of a generic vector in Hilbert space will be trivial (i.e. the set of states with trivial stabilizer is open and dense in the full set of states) if and only if
\be
\Delta(\vec{d}) > 3
\ee
where $\Delta$ is the expected dimension $\Delta(\vec{d})$ defined in (\ref{defDelta}).\footnote{In these cases, the expected dimension is the actual dimension of $\LME/K \sim \bbP({\cal H})\GITquot G$, while in other cases the actual dimension is always less than or equal to the expected dimension, so we can also say that the stabilizer of a generic state is trivial if and only if the complex dimension of the space  $\LME/K \sim \bbP({\cal H})\GITquot G$ is larger than 3.}

For qubit systems, this holds whenever $n>4$, so we reproduce the results in \cite{GKW}. For systems of $n$ qutrits, this holds for $n > 3$. The condition also holds for any system of three or more subsystems of the same dimension $d > 3$. These cases reproduce and generalize the results in \cite{SWGK}. Our general condition covers all remaining cases, where the subsystem dimensions are not equal.

\section{Setup and simple cases}

We will consider pure states in a multipart Hilbert space
\[
{\cal H} = {\cal H}_{1}  \otimes \cdots  \otimes {\cal H}_{n} \;
\]
with subsystems ${\cal H}_i$ of dimension $d_i$. We can write a general state explicitly as
\be
|\Psi \rangle = \sum_i \psi_{i_1 \cdots i_n} |i_1 \rangle  \otimes \cdots  \otimes |i_n \rangle,
\ee
where states $|i_k \rangle$ with $1 \le i_k \le d_k$ form an orthonormal basis of ${\cal H}_k$. The density matrix for the $k$th subsystem is given by $\rho_k = \tr_{\bar{k}} | \Psi \rangle \langle \Psi |$ or explicitly as
\be
\rho_{j_k} {}^{l_k} = \sum_{i}  \psi_{i_1 \cdots j_k \cdots i_n}(\psi^*)^{i_1 \cdots l_k \cdots i_n}
\ee
A locally maximally entangled state $|\Psi \rangle$ is defined as a state such that for every $k$ we have
\be
\rho_{j_k} {}^{l_k} = \frac{1}{d_k} \delta_{j_k} {}^{l_k} \; .
\ee

\subsection{The Schmidt decomposition, bipartite systems, general necessary conditions}

For a bipartite system with ${\cal H} = {\cal H}_A  \otimes {\cal H}_{\bar{A}}$, we can write a general (pure) state using the Schmidt decomposition as
\be
|\Psi \rangle = \sum_i \sqrt{p_i} |\psi_i^A \rangle  \otimes |\psi_i^{\bar{A}} \rangle,
\ee
where $|\psi_i^A\rangle$ are orthonormal states in ${\cal H}_A$, $|\psi_i^{\bar{A}} \rangle$ are orthonormal states in ${\cal H}_{\bar{A}}$, and $p_i$ are positive real numbers with $\sum_i p_i = 1$. The density operator for the first subsystem is
\be
\rho_1 = \sum_i p_i |\psi_i^A \rangle \langle \psi_i^A| \; .
\ee
This is a multiple of the identity operator if and only if $\{|\psi_i^A \rangle\}$ form an orthonormal basis of ${\cal H}_A$ and $p_i = \dfrac{1}{ d_A}$ for all $i$. This is possible only if the dimension of ${\cal H}_A$ is less than or equal to the dimension of ${\cal H}_{\bar{A}}$. Otherwise, we can't have the required $d_A$ orthogonal states $\{|\psi_i^A \rangle\}$ since the number of these is equal to the number of orthonormal states $|\psi_i^{\bar{A}} \rangle$ which is limited by the dimension of ${\cal H}_{\bar{A}}$.

We can consider a general system with dimensions $(d_1,\dots, d_n)$ as a bipartite system with ${\cal H}_A = {\cal H}_k$ and ${\cal H}_{\bar{A}} =  \otimes_{i \ne k} {\cal H}_i$. Then a state $|\Psi \rangle \in {\cal H} =  \otimes {\cal H}_i$ can be LME only if $\rho_k \equiv \rho_A$ is proportional to the identity; we have just seen that this requires $d_A \le d_{\bar{A}}$, so we arrive at general necessary conditions
\be
\label{necessary}
d_k \le \prod_{i \ne k} d_i
\ee
for the existence of locally maximally mixed states. It has been suggested that these conditions are also sufficient, but we will see that this is not the case.

\subsection{Explicit construction: $n=2$}

For a two-part system, the necessary conditions (\ref{necessary}) give
immediately that $d_1 = d_2 \equiv d$ in order for there to exist LME
states. The discussion in the previous subsection also implies that
such a state can be written via the Schmidt decomposition as
\be
\label{LME2} |\Psi \rangle = \frac{1}{\sqrt{d}} \sum_i |\psi_i^1
\rangle  \otimes |\psi_i^2 \rangle \ee
where $|\psi_i^1 \rangle$ and
$|\psi_i^2 \rangle$ are orthonormal bases of ${\cal H}_1$ and ${\cal
H}_2$. Any state of this form is in $\LME$. The group of
local unitary transformations in $K = \SU(d) \times \SU(d)$ allow us to
independently rotate the bases for the two subsystems, so any LME
state (\ref{LME2}) can always be written as $U_1  \otimes U_2 |\Psi
\rangle_{Bell}$ where $|\Psi \rangle_{Bell}$ is the Bell state
(\ref{Bell}). We thus have the well-known result that for $n=2$, LME
states exist if and only if $d_1 = d_2$, and in this case, the Bell
state is the unique LME state up to local unitary transformations.

\subsection{Explicit construction: $n=3$, $2 = d_1 \le d_2 \le d_3$}\label{sec:Explicit23}

As a more interesting example, consider the case with $n=3$ where the
first subsystem is a qubit (i.e. $d_1 = 2$). We will assume without
loss of generality that $(d_1, d_2, d_3) = (2,B,C)$ with $2 \le B \le
C$. In this case, the necessary conditions (\ref{necessary}) require
that \be B \le C \le 2 B \ee for the existence of locally maximally
mixed states. This case is again simple enough that we can explicitly
solve the algebraic equations on the coefficients $\psi_{i_1 i_2 i_3}$
obtained by demanding that each subsystem is maximally mixed. Our
detailed analysis is presented in appendix $A$. The result of this
analysis is that LME states exist if and only if $C=B$ or $(B,C) =
(NK,(N+1)K)$. This also follows from the general result Theorem~\ref{thm.R}, as we explain in example 5.6 below.

For $B=C$, we can describe all possible LME states in terms of a set
$\{\vec{n}_i \}$ of $B$ unit vectors in $\mathbb{R}^3$ adding to zero. We
associate to $\vec{n}_i$ the qubit state $|\vec{n}_i \rangle$ for
which $S_{\vec{n}_i} |\vec{n}_i \rangle = (\hbar/2)|\vec{n}_i \rangle$, where $S_{\vec{n}_i} = \vec{n} \cdot \vec{\sigma}$ is the spin operator associated with direction $\vec{n}$. Then the state
\be \frac{1}{\sqrt{B}} \sum_{i=1}^B
|n_{i} \rangle  \otimes |i \rangle  \otimes |i \rangle \ee
is LME, and
all LME states for $(d_1,d_2,d_3) = (2,B,B)$ can be obtained from
states of this form by local unitary transformations. Transformations
which simultaneously rotate all the vectors in $\mathbb{R}^3$ also
correspond to local unitary transformations, so the space of LME
states is equivalent to the space of $B$ unit vectors in
$\mathbb{R}^3$ adding to zero with equivalence under $SO(3)$. This
gives a unique state up to local unitary transformations for the cases
$B=2$ and $B=3$, and a space of real dimension $2(B-3)$ for $B \ge 3$.

For $(B,C)=(NK,(N+1)K)$, the analysis in appendix $A$ shows that there is a unique LME state up to local unitary transformations. In the case $K=1$, we can write this state explicitly as
\be
|\Psi_{(2,N,N+1)} \rangle = \frac{1}{\sqrt{N+1}} \sum_{b=1}^N \left\{ \sqrt{\frac{N + 1 - b}{N}} |0 \rangle  \otimes  |b  \rangle  \otimes |b \rangle + \sqrt{\frac{b}{N}}  |1 \rangle  \otimes|b \rangle  \otimes |b+1 \rangle \right\} \; .
\ee
For general $K$, we can write the LME state as a tensor product
\be
|\Psi_{(2,N,(N+1))} \rangle  \otimes |\Psi_{(K,K)} \rangle
\ee
where $|\Psi_{(K,K)} \rangle$ is the Bell state
\be
|\Psi_{(K,K)} \rangle = \frac{1}{\sqrt{K}}  \sum_{i=1}^K  |i \rangle  \otimes  |i \rangle \; .
\ee
The tensor product state lives in a Hilbert space
\be
({\cal H}_2  \otimes {\cal H}_N  \otimes {\cal H}_{N+1})  \otimes ({\cal H}_K  \otimes {\cal H}_K) = {\cal H}_2  \otimes ({\cal H}_N  \otimes {\cal H}_K)  \otimes ({\cal H}_{N+1}  \otimes {\cal H}_K) = {\cal H}_2  \otimes {\cal H}_{NK}  \otimes {\cal H}_{(N+1)K}
\ee
as desired, where subscripts indicate dimensions.

\section{Locally maximally entangled states from representation theory}

In this section, we describe a special class of LME states that can be constructed using data coming from the representation theory of arbitrary finite and compact groups. We will see that most of the examples in the previous section can be understood in this way and that this method allows construction of explicit LME states in many other cases (or possibly even for all cases $(d_1, \dots, d_n)$ where LME states exist; see section 3.3).

We begin with
\begin{theorem}
Let $|\Psi \rangle$ be a pure state in a quantum system with Hilbert space
\[
{\cal H} = {\cal H}_A  \otimes {\cal H}_{\bar{A}} \, ,
\]
where $A$ and $\bar{A}$ label a subsystem and its complementary
subsystem. Let $H$ be a subgroup of $U(d_A) \times U(d_{\bar{A}})$
that acts irreducibly on ${\cal H}_A$ such that $|\Psi \rangle$ is
invariant under $H$ up to a phase (i.e. $h|\Psi \rangle = e^{i
\phi(h)}|\Psi \rangle$ for all $h \in H$). Then the reduced density
matrix $\rho_A$ is maximally mixed.
\end{theorem}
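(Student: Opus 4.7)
The plan is to apply Schur's lemma to the reduced density matrix $\rho_A$. Since $H$ acts irreducibly on ${\cal H}_A$ by hypothesis, it will suffice to prove that $\rho_A$ commutes with the image of $H$ in $U(d_A)$; the unit-trace normalization then forces the scalar to be $1/d_A$.

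First I would upgrade the phase-invariance of $|\Psi\rangle$ to an honest invariance of the rank-one projector $|\Psi\rangle\langle\Psi|$: because $h|\Psi\rangle = e^{i\phi(h)}|\Psi\rangle$, the conjugate phases cancel in $h\,|\Psi\rangle\langle\Psi|\,h^\dagger = |\Psi\rangle\langle\Psi|$. Since $H \subset U(d_A)\times U(d_{\bar A})$, any element factorizes as $h = u_A \otimes u_{\bar A}$, and I would then take the partial trace of both sides over ${\cal H}_{\bar A}$.

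The computational key is the standard identity
\[
\tr_{\bar A}\!\bigl[(I\otimes u_{\bar A})\,M\,(I\otimes u_{\bar A}^\dagger)\bigr] \;=\; \tr_{\bar A}(M),
\]
which follows from cyclicity of the ordinary trace on ${\cal H}_{\bar A}$. Applied to $M = |\Psi\rangle\langle\Psi|$ with the factorization $h = u_A \otimes u_{\bar A}$, the $u_{\bar A}$ piece drops out and one is left with
\[
u_A\,\rho_A\,u_A^\dagger \;=\; \rho_A
\]
for every $u_A$ arising as the $A$-component of some $h\in H$. In other words, $\rho_A$ is an intertwiner for the induced $H$-action on ${\cal H}_A$.

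Schur's lemma then gives $\rho_A = c\,\identity$ for some $c\in\mathbb{C}$, and $\tr(\rho_A)=1$ fixes $c=1/d_A$. There is no genuine obstacle in this argument; the only point that deserves verification is that the $u_{\bar A}$ factor really does drop out of the partial trace, which is the sole role played by the $\bar A$-side of the direct-product factorization. In particular, no irreducibility or other hypothesis on the $H$-action on ${\cal H}_{\bar A}$ is needed.
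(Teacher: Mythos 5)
Your proof is correct and follows essentially the same route as the paper: deduce $U_h\,\rho_A\,U_h^\dagger = \rho_A$ from the phase-invariance of $|\Psi\rangle$ and then invoke Schur's lemma for the irreducible action on ${\cal H}_A$. The only difference is that you spell out the partial-trace computation showing the $u_{\bar A}$ factor drops out, a step the paper leaves implicit.
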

\begin{proof}
If $|\Psi \rangle$ is invariant under $H$ up to a phase, then the density matrix $\rho_A$ is invariant under the action of $H$. Letting $U_h$ be the representative of $h \in H$ in $U(d_A)$, we then have that for all $h$, $U_h \rho_A U_h^\dagger = \rho_A$, or $[U_h, \rho_A] = 0$. Since $h \to U_h$ is an irreducible representation of $H$, Schur's Lemma implies that $\rho_A$ is a multiple of the identity operator.\footnote{More directly, if $\rho_A$ is not maximally mixed, then $\rho_A^{ad} = \rho_A - \dfrac{1}{d_A} \identity$ does not vanish, and transforms in the adjoint representation of $U(d_A)$. We have seen that $U_h$ commutes with $\rho_A$ and therefore with $\rho^{ad}_A$ for all $h$. Acting on $A$, $U_h$ therefore does not mix subspaces with different eigenvalues for $\rho^{ad}_A$. Since $\rho^{ad}_A$ is nonvanishing and traceless, it must have at least two different eigenvalues, so there are proper invariant subspaces for the action of $H$, in contradiction with the assumption that $H$ acts irreducibly on $A$.}
\end{proof}
The theorem has an immediate application to multipart systems:
\begin{corollary}
\label{cor.multipart}
Consider a multipart quantum system with Hilbert space
\[
{\cal H} = {\cal H}_1  \otimes \cdots  \otimes {\cal H}_n \;
\]
upon which $\tilde{K} = U(d_1) \times \cdots \times U(d_n)$ acts. If a state $|\Psi \rangle \in {\cal H}$ is invariant up to a phase under a subgroup $H \subset \tilde{K}$, then the density matrix for each subsystem  ${\cal H}_{\alpha} =  \otimes_{i \in \alpha} {\cal H}_i$, upon which $H$ acts irreducibly is maximally mixed.
\end{corollary}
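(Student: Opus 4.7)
The plan is to reduce the corollary directly to Theorem 3.1 by a bipartite regrouping. Fix a composite subsystem $\alpha$ on which $H$ acts irreducibly, and write ${\cal H} = {\cal H}_\alpha \otimes {\cal H}_{\bar{\alpha}}$ where ${\cal H}_\alpha = \bigotimes_{i\in\alpha}{\cal H}_i$ and ${\cal H}_{\bar{\alpha}} = \bigotimes_{i\notin\alpha}{\cal H}_i$. The first step is to check that $H$ lives inside $U(d_\alpha)\times U(d_{\bar{\alpha}})$ in the appropriate sense. Indeed, any element of $\tilde{K} = U(d_1)\times\cdots\times U(d_n)$ acts as a tensor product $U_1\otimes\cdots\otimes U_n$, which upon regrouping is precisely an element of $U(d_\alpha)\times U(d_{\bar{\alpha}})$ acting as $(\bigotimes_{i\in\alpha}U_i)\otimes(\bigotimes_{i\notin\alpha}U_i)$. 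So the inclusion $H \hookrightarrow \tilde{K}$ composed with this regrouping gives a homomorphism $H \to U(d_\alpha)\times U(d_{\bar{\alpha}})$.

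Next, by hypothesis this action of $H$ restricted to ${\cal H}_\alpha$ is irreducible, and by hypothesis $|\Psi\rangle$ is invariant under $H$ up to a phase. Theorem 3.1 (applied to the bipartition ${\cal H}_A = {\cal H}_\alpha$, ${\cal H}_{\bar{A}} = {\cal H}_{\bar{\alpha}}$) then immediately yields that $\rho_\alpha = \tr_{\bar{\alpha}}|\Psi\rangle\langle\Psi|$ is a multiple of the identity, i.e.\ maximally mixed. Applying this argument to every subsystem $\alpha$ on which $H$ is irreducible gives the corollary.

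The only non-trivial point to verify is the regrouping in the first step, namely that the projection of $H$ onto the $\alpha$-factor is well-defined as a single irreducible unitary representation on ${\cal H}_\alpha$; this is essentially by construction, since the hypothesis that $H$ acts irreducibly on ${\cal H}_\alpha$ presupposes such an action. Once that is observed, the corollary is a one-line consequence of Theorem 3.1, and there is no real obstacle.
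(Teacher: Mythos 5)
Your proposal is correct and is exactly the argument the paper intends: the corollary is stated as an ``immediate application'' of Theorem 3.1, obtained by regrouping the tensor factors into ${\cal H}_\alpha \otimes {\cal H}_{\bar{\alpha}}$ and observing that $H$ then sits inside $U(d_\alpha)\times U(d_{\bar{\alpha}})$. Nothing further is needed.
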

This gives a way to construct states of multipart systems whose elementary subsystems are all maximally mixed:
\begin{corollary} \label{cor.genstates}
Consider any group $H$ and any set of unitary irreducible representations $R_i$ of $H$ whose tensor product contains the trivial representation (i.e. an invariant vector).\footnote{The same result holds if the tensor product of $R_i$ contains any one-dimensional representation, since the associated one-dimensional invariant subspace gives a state $|\Psi \rangle$ invariant under $H$ up to a phase as required by Corollary \ref{cor.multipart}. However, this generalization does not produce any new examples: if $R$ is the one-dimensional representation in the tensor product and $R^*$ is the complex conjugate representation, then $(R_1 \otimes R^*,R_2, \dots R_n)$ will be a set of irreducible representations whose tensor product contains the trivial representation, and the corresponding invariant state will be the state $|\Psi \rangle$.} Given an explicit representation of such an invariant as
\be
\label{genstates}
|0 \rangle = \sum C_{a_1 \cdots a_n} |a_1 \rangle \otimes \cdots \otimes | a_n \rangle \; ,
\ee
where $C_{a_1 \cdots a_n}$ are group-theoretic coefficients describing how the trivial representation is embedded in the tensor product, the state $|0 \rangle$, considered as a quantum state in Hilbert space ${\cal H}_1 \otimes \cdots \otimes {\cal H}_n $ will have all elementary subsystems maximally mixed. Further, any composite subsystem ${\cal H}_{\alpha}$ for which the tensor product of representations $R_i$ with $i \in \alpha$ gives a single irreducible representation will also be maximally mixed.
\end{corollary}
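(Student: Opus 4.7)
The plan is to deduce Corollary \ref{cor.genstates} almost immediately from Corollary \ref{cor.multipart}, the only real content being to identify correctly the subgroup of $\tilde{K}$ that stabilizes the invariant vector $|0\rangle$. First I would define a homomorphism $\iota : H \to \tilde{K} = U(d_1) \times \cdots \times U(d_n)$ by the diagonal embedding $\iota(h) = (R_1(h), \ldots, R_n(h))$, where each $R_i$ is regarded as a unitary operator on ${\cal H}_i$. Let $H' = \iota(H) \subset \tilde{K}$; since each $R_i$ is a genuine unitary representation, $H'$ is a well-defined subgroup of $\tilde{K}$, and its action on ${\cal H} = {\cal H}_1 \otimes \cdots \otimes {\cal H}_n$ via the inclusion into $\tilde{K}$ coincides with the tensor product action $R_1 \otimes \cdots \otimes R_n$ of $H$.

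Next I would observe that the coefficients $C_{a_1 \cdots a_n}$ in (\ref{genstates}) were chosen precisely so that $|0\rangle$ spans an $H$-invariant line realizing the trivial representation inside $R_1 \otimes \cdots \otimes R_n$. Thus $\iota(h) |0\rangle = |0\rangle$ for all $h \in H$ (in fact invariant with phase $\phi \equiv 0$), so the hypothesis of Corollary \ref{cor.multipart} is satisfied with the subgroup $H' \subset \tilde{K}$.

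To conclude the first claim, I would apply Corollary \ref{cor.multipart} subsystem by subsystem: on the $i$-th tensor factor ${\cal H}_i$, the projection of $H'$ into $U(d_i)$ is just the image of $R_i$, which is irreducible by assumption. Hence $\rho_i$ is a scalar multiple of the identity, i.e.\ maximally mixed, for every $i$. For the second claim, consider a composite subsystem ${\cal H}_\alpha = \bigotimes_{i \in \alpha} {\cal H}_i$ for which $\bigotimes_{i \in \alpha} R_i$ is irreducible. The restriction of the $H'$-action to ${\cal H}_\alpha$ is precisely this tensor product representation, so $H'$ acts irreducibly on ${\cal H}_\alpha$ and Corollary \ref{cor.multipart} again gives $\rho_\alpha \propto \identity$.

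There is essentially no obstacle here beyond bookkeeping; the one point that warrants care is making sure the diagonal action of $H$ on ${\cal H}$ really arises from a subgroup of the product group $\tilde{K}$ rather than from some larger group acting only projectively, but this is automatic because each $R_i$ is a genuine (not merely projective) unitary representation. The rest is a direct invocation of Schur's lemma packaged inside Corollary \ref{cor.multipart}.
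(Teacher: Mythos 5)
Your proposal is correct and is essentially the argument the paper intends: Corollary \ref{cor.genstates} is stated as an immediate application of Corollary \ref{cor.multipart} to the subgroup $\iota(H) \subset \tilde{K}$ given by the diagonal embedding $h \mapsto (R_1(h),\dots,R_n(h))$, with irreducibility of each $R_i$ (and of the relevant tensor products $\bigotimes_{i\in\alpha} R_i$) supplying the hypothesis. Your extra remark distinguishing genuine from projective representations matches the paper's own Remark following the corollary and does not change the argument.
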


\begin{remark}
In some cases, it is useful to consider a variant of our construction based on a representation
\be
h \longmapsto u_1(h)  \otimes \cdots  \otimes u_n(h) \in U(d_1\cdots d_n)
\ee
of a group $H$ for which the individual maps $h \mapsto u_i(h)\in H(d_i)$ give only projective representations of $H$ (i.e. where $u_i(h_1) u_i(h_2) = c_i(h_1,h_2)u_i(h_1 h_2)$ for some scalar function $c(h_1,h_2)$). If these projective representations are irreducible and if a state $|\Psi\rangle$ is invariant under the action of $H$ defined in this way, then essentially the same arguments as above show that $|\Psi\rangle$ will be LME. Here, we have the possibility that the group $H$ is abelian, since we can have irreducible projective representations of abelian groups with dimension greater than 1.

It is possible to understand this construction as a case of the general construction in Corollary~\ref{cor.genstates}, by finding a group $\hat{H}$ that is a central extension of $H$ and a representation of $\hat{H}$ in $U(d_1) \times \cdots \times U(d_n)$ that descends to the representation described above under the maps $\hat{H} \to H$, $(u_1,\dots, u_n) \to u_1 \otimes \cdots \otimes u_n$. For example, we can take $\hat{H}$ to be the subgroup of $K = U(d_1) \times \cdots \times U(d_n)$ generated by the elements $(u_1(h),\dots, u_n(h))$ for $h \in H$ and the representation to be the defining one.

Example 8 below will be of this type.
\end{remark}

\subsection{Examples}

Let us now describe various examples of LME states constructed in this way.

\subsubsection*{Example 1:  Bell states from dual representations}

For $n=2$, we can consider any group $H$ and any irreducible representation $R$ acting on a vector space of dimension $d$ and basis $e_i$ as
\be
\label{rep}
e_i \to R_{ij}[h] e_j \; .
\ee
Take $R^*$ to be the dual representation, acting on the dual vector space as
\be
\label{dualrep}
e^*_i \to  e^*_j R_{ji}[h^{-1}]
\ee
where $e^*_i$ are a basis for the dual vector space such that $(e^*_i, e_j) = \delta_{ij}$. The tensor product of the original vector space $V$ with the dual vector space $V^*$ is equivalent to the vector space of linear maps from $V \to V$. The tensor product of $R$ and $R^*$ acts on this space, and always includes a copy of the trivial representation, namely the identity map
\be
\label{identity}
1 \equiv \sum_i e_i  \otimes e_i^* \; .
\ee
We can check that this is invariant under the combined transformations (\ref{rep}) and (\ref{dualrep}). In the language of quantum states, the expression (\ref{identity}) corresponds to a normalized state
\be
|0 \rangle = \frac{1}{\sqrt{d}} \sum_i |i \rangle  \otimes |i \rangle \; ,
\ee
which is exactly the Bell state (\ref{Bell}). By choosing $H = \SU(2)$, we have examples for every positive integer $d$. We have seen above that this is the unique LME state for $n=2$ up to local unitary transformations.

\subsubsection*{Example 2: the GHZ state from $S_3$}

Moving on to tripartite systems, we first show how the GHZ state
(\ref{GHZ}) can be obtained from the construction in Corollary~\ref{cor.genstates}. We
need to find a group $H$ with two-dimensional irreducible
representations such that the product of three of these contains the
trivial representation. The smallest such group is $S_3$, the
permutation group on three elements; this has a unique two-dimensional
irreducible representation, and the tensor product of three of these
contains the trivial representation, so this should allow us to
construct an LME state in a three-qubit Hilbert space. Though we have
already proven that the GHZ state is the unique such state up to local
unitary transformations (section \ref{sec:Explicit23}), it may be
useful to demonstrate explicitly that we indeed find the GHZ state. We
can write the two-dimensional representation of $S_3$ explicitly via

\begin{align*}
(1)  = \left(\begin{matrix} 1&0\\0&1 \end{matrix} \right)&, (123)=\left( \begin{matrix} \omega^{-1}&0\\0&\omega  \end{matrix} \right), (132)=\left( \begin{matrix} \omega&0\\0&\omega^{-1}  \end{matrix} \right),\\
 (12)=\left( \begin{matrix} 0&1\\1&0 \end{matrix} \right)&,\, \, \,  (23)=\left( \begin{matrix} 0&\omega \\\omega^{-1}&0 \end{matrix} \right),\, \, \,  (13)=\left( \begin{matrix} 0&\omega^{-1} \\\omega&0 \end{matrix} \right)
\end{align*}

where $\omega = e^{\frac{2 \pi i}{3}}$. It is
then straightforward to check that the GHZ state (\ref{GHZ}) is
invariant under $S_3$ acting via the tensor product of three copies of
this representation:

\begin{align*}
(1)  \otimes (1)  \otimes (1)&,\quad  (123)  \otimes (123)  \otimes
(123),\quad  (132)   \otimes (132)
  \otimes (132) ,\quad \\
 (12)  \otimes
(12)  \otimes (12)&,\quad \quad  (13)  \otimes (13)  \otimes
(13),\quad (23)  \otimes (23)  \otimes (23).
\end{align*}

More generally, the GHZ state is invariant under the subgroup of $U(2)
\times U(2) \times U(2)$ consisting of elements of the form \be \left(
\ba{cc} e^{i \alpha} & 0 \cr 0 & e^{i \beta} \ea \right)  \otimes
\left( \ba{cc} e^{i \theta} & 0 \cr 0 & e^{i \chi} \ea \right)  \otimes
\left( \ba{cc} e^{-i \alpha - i \theta} & 0 \cr 0 & e^{-i \beta - i
\chi} \ea \right) \ee and \be \left( \ba{cc} 0 & e^{i \alpha} \cr e^{i
\beta} & 0 \ea \right)  \otimes \left( \ba{cc} 0 & e^{i \theta} \cr
e^{i \chi} & 0 \ea \right)  \otimes \left( \ba{cc} 0 & e^{-i \alpha - i
\theta} \cr e^{-i \beta - i \chi} & 0 \ea \right) \ee Thus, instead of
$S_3$, we could have started with this group or any subgroup of this
group acting irreducibly on all the factors.

\subsubsection*{Example 3: LME states from $\SU(2)$ 3j-symbols.}

As a more general example, when the dimensions satisfy triangle inequalities and sum to an odd number, the tensor product of $\SU(2)$ representations with dimensions $A$, $B$, and $C$ contains the trivial representation. The explicit representation of this trivial state is precisely described by the $3j$-symbols (closely related to Clebsch-Gordon coefficients):
\be
|0 \rangle =  \sum_{m_A,m_B,m_C} \left( \ba{ccc} \frac{A - 1}{2} & \frac{B - 1}{2} & \frac{C - 1}{2} \cr
m_A & m_B & m_C \ea \right) |m_A \rangle  \otimes |m_B \rangle  \otimes |m_C \rangle \; .
\ee
It is easy to verify that $\rho_A$, $\rho_B$, and $\rho_C$ are all proportional to the identity matrix using standard orthogonality relations for $3j$ symbols. For $A=2$, this gives examples with $(d_1,d_2,d_3) = (2,N,N+1)$. As we have seen above, for each $N$, there is a unique LME state with these dimensions up to local unitary transformations, so our representation theory construction gives the only example.

\subsubsection*{Example 4: states from product groups}

We can obtain more examples by considering product groups. For example, if a group $H$ has irreducible representations $\{R_1,R_2,R_3\}$ whose tensor product contains the identity and group $H'$ has irreducible representations $\{R_1',R_2',R_3'\}$ whose tensor product contains the identity, then group $H \times H'$ has representations $\{(R_1,R_1'),(R_2,R_2'),(R_3,R_3')\}$ whose tensor product contains the identity. This allows us to build examples with dimensions $(d_1 d_1',d_2 d_2', d_3 d_3')$. The states constructed in this way are simply tensor products
\be
|\Psi \rangle = |\psi \rangle  \otimes |\psi' \rangle
\ee
As an example, we can consider the group $\SU(2) \times \SU(2)$, and representations $(2,1)$, $(N,K)$, and $(N+1,K)$ (labeled by dimension) to construct examples with dimensions $(2,NK,(N+1)K)$. Again, we have seen above that there is a unique LME state up to local unitary transformations with these dimensions, so our construction provides the only example.

\subsubsection*{Example 5: states for $(d_1,d_2,d_3) = (2,N,N)$}

For dimensions $(2,d_2,d_3)$ we have seen how to construct all possible LME states using representation theory except for the cases $(2,N,N)$ with $N > 2$. We can construct examples with these dimensions so long as we can find a group $H_p$ and representations $\{R_2, R_p,\hat{R}_p\}$ giving a construction for the case $(2,p,p)$ with $p$ odd. In this case, we can use the group $H_p \times \SU(2)$ to construct examples with dimensions $(2,N,N)$ with $p$ a prime factor of $N$ by choosing representations $\{(R_2,1), (R_p,N/p),(\hat{R}_p,N/p)\}$ (we can use $H_2 = S_3$ as above). In appendix $B$ we show that a group with the desired properties is $H_p = UT(3,p) \rtimes \mathbb{Z}_2$ a certain semidirect product of $UT(3,p)$ (the group of upper triangular matrices with elements in $\mathbb{Z}_p$), and $\mathbb{Z}_2$. This is a finite group with $2p^3$ elements.

Thus, for every case with dimensions $(2,d_2,d_3)$ for which LME states exist, there is a construction based on representation theory that provides examples. In most cases, the LME state is unique up to local unitary transformations. For the cases $(2,N,N)$ with $N \ge 4$ we found a $2(N-3)$ real dimensional space of such states; in this case, we expect that the representation theory construction gives special states in this space which are invariant under a larger subgroup of $\SU(2) \times \SU(N) \times \SU(N)$.

\subsubsection*{Example 6: states from representation theory tensor networks}

For general $n$, we will now argue that $LME$ states obtained using our representation theory construction can always be represented by tensor networks with trivalent vertices, where the edges at a vertex are labelled by representations of our group whose tensor product contains the identity and tensors at each vertex correspond to the Clebsch-Gordon type coefficients describing how the trivial representation is constructed from the tensor product.

To see this, we note that for a given set of representations $R_1, \cdots, R_n$ associated to the elementary subsystems, we can determine the tensor product of representations recursively by first decomposing the tensor product of any pair of representations into a sum of irreducible representations, and then repeating this procedure (now with $n-1$ total representations) for each element in the sum. For each possible representation in the full tensor product, we can associate a tree graph with trivalent vertices, with the graph structure showing our choice for how to pair up representation and the edges labeled by the representations obtained in the intermediate steps. For our application, the final representation should be the trivial representation. To construct the corresponding LME state explicitly, we can interpret the graph as a tensor network, with vertices corresponding to Clebsch-Gordon coefficients that tell us how the representation for each outgoing edge is obtained from the tensor product of the two representations associated with incoming edges at that vertex. This is shown in figure \ref{fig:tensorBoth}a.

We can represent the state using an equivalent tensor network where all the trivalent vertices have incoming legs (i.e. a PEPS network) making use of a simple representation theory fact: for any representations $R_1$ and $R_2$ whose tensor product contains $R_3$, the tensor product of $R_1$, $R_2$, and the dual representation $R_3^*$ contains the trivial representation. Thus, we can replace the Clebsch-Gordon coefficient for a vertex with two incoming and one outgoing leg with the related tensor constructing the trivial representation from the $R_1$,$R_2$,$R_3^*$ and a tensor constructing the trivial representation from $R_3$ and $R_3^*$. For the case of $\SU(2)$, this means that we are using $3j$-symbols rather than Clebsch-Gordon coefficients. This is indicated in figure \ref{fig:tensorBoth}b.

\begin{figure}
\centering
\includegraphics[width=0.9\textwidth]{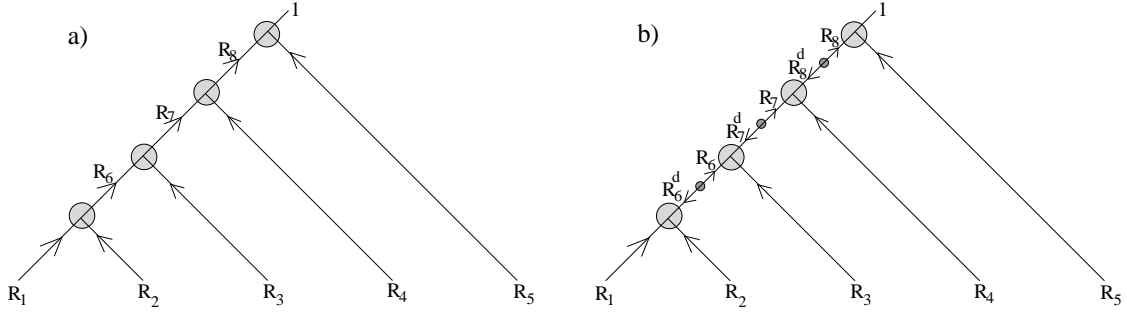}
\caption{Tensor networks describing LME states constructed from irreducible representations of a group. On the left, trivalent vertices correspond to Clebsch-Gordon type tensors describing how the representation corresponding to the outgoing leg arises from the tensor product of representations corresponding to the incoming legs. On the right, the trivalent vertices correspond to tensors constructing the trivial representation from the product of the three incoming representations. Small bivalent vertices represent tensors giving the trivial representation from the product of a representation and its dual.}
\label{fig:tensorBoth}
\end{figure}

For a given set of representations $(R_1, \cdots, R_n)$, we will often have a number of copies of the trivial representation in the tensor product. These will correspond to different $LME$ states associated with graphs whose internal edges are labeled by different representations.

We need only consider one possible graph structure to represent any invariant state, since the graph structure corresponds to our choice for which order to combine representations. An LME state constructed using some other graph structure will be a linear combination of states constructed using the original graph structure.

As an example, consider the case of four qubits, and choose $H = \SU(2)$. The product of four spin half representations contains two copies of the trivial representation, so we get two different states, corresponding to an intermediate representation of spin 0 or spin 1. We can represent these by the tensor networks in figure \ref{qubit4simple}, or explicitly as\footnote{Equivalently, we could have used $3j$-symbols.}
\bea
|\Psi_0 \rangle &=& \epsilon_{\alpha \beta} \epsilon_{\sigma \rho} | \alpha \rangle  \otimes |\beta \rangle  \otimes |\sigma \rangle  \otimes |\rho \rangle \cr
|\Psi_1 \rangle &=& \sigma^i_{\alpha \beta} \sigma^i_{\sigma \rho} | \alpha \rangle  \otimes |\beta \rangle  \otimes |\sigma \rangle  \otimes |\rho \rangle
\eea
where $\epsilon_{\alpha \beta}$ is the antisymmetric tensor and $\sigma^i_{\alpha \beta}$ are Pauli matrices. The similar states defined using different graph structures are linear combinations of these.

\begin{figure}
\centering
\includegraphics[width=0.3\textwidth]{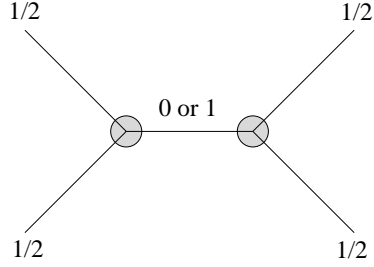}
\caption{LME states of four qubits corresponding to the two ways of combining four spin half representations of $\SU(2)$ to obtain a singlet.}
\label{qubit4simple}
\end{figure}

\subsubsection*{Example 7: LME states with maximally mixed composite subsystems}

In some cases, we may wish to construct states for which some composite subsystems are also maximally mixed. An example are $k$-uniform states of $N$ qudits, where all subsystems of less than or equal to $k$ qudits are maximally mixed (see, for example \cite{Scott, GZ}). According to Corollary~\ref{cor.genstates}, we can achieve this by ensuring that the product of representations corresponding to the elementary subsystems in the composite subsystem gives a single irreducible representation.

As a simple example, suppose we wish to construct states of a cyclic spin chain of size $kL$ for which subsystems of less than or equal to $L$ neighboring spins are all maximally entangled. To do this, we can choose $H = \SU(2)^L$, and assign representations $(\dfrac{1}{2}, 0,\dots)$, $(0, \dfrac{1}{2}, 0,\dots)$, and so forth around the chain. Then the tensor product of representations corresponding to groups of $L$ or fewer neighboring spins will give a single irreducible representation (e.g. $(\dfrac{1}{2},\dfrac{1}{2}, 0, \dots, 0)$ for the first pair of spins). For size $2L$, the tensor product of all the representations contains a single copy of the trivial representation, so this construction gives a single LME state. It turns out that this is simply the state obtained by making Bell pairs out of opposite spins in the ring. For larger systems, we have more possible states, including states with no Bell pairs. The states for the case $L=2$ and $k=4$ for which our construction gives four possible states is shown in figure \ref{group2}.

\begin{figure}
\centering
\includegraphics[width=0.5\textwidth]{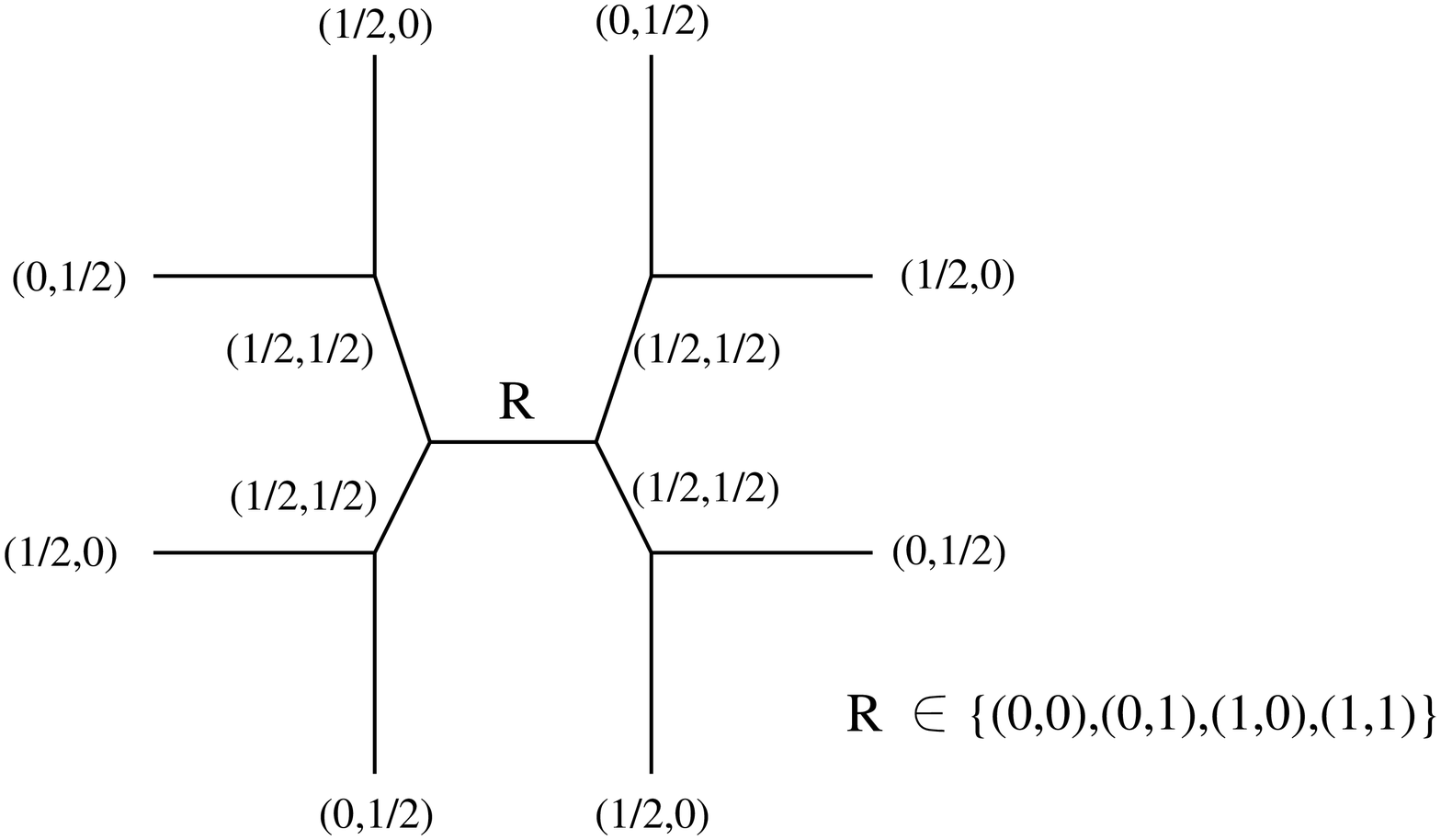}
\caption{Tensor network depiction of an eight qubit state with each spin and each nearest neighbor pair of spins in a maximally mixed state. }
\label{group2}
\end{figure}

\subsubsection*{Example 8: absolutely maximally entangled states and perfect tensors}

We can use this general approach to construct {\it absolutely
maximally entangled} (AME) states (also referred to as {\it maximal multipartite entangled states} (MMES) \cite{FFPP}), for which all subsystems with
dimension less than or equal to the dimension of the complementary
subsystem (call these ``small'' subsets) are maximally
mixed. Some previous discussions of these states, including complete existence criteria for qubit systems may be found in \cite{GB, HS, Scott, Helwig:2012nha, FFPP, HGS}.

According to Corollary~\ref{cor.genstates}, we can obtain a maximally entangled state if we find a
group $H$ with representations $R_i$ whose tensor product contains the
identity representation and for which the tensor product of any small
subset of representations is irreducible.

As an example, consider the six-qubit state described in Appendix A.1 of \cite{Pastawski:2015qua}. Here, we have a representation of $(\mathbb{Z}_2)^6$ to $U(d_1 \cdots d_n)$ of the form described in Remark 3.4, where the six generators are represented as
\bea
S_1 &=& X  \otimes Z  \otimes Z \otimes X \otimes I \otimes I \cr
S_2 &=& I \otimes X \otimes Z \otimes Z \otimes X \otimes I \cr
S_3 &=& X \otimes I \otimes X \otimes Z \otimes Z \otimes I, \cr
S_4 &=& Z \otimes X \otimes I \otimes X \otimes Z \otimes I \cr
S_5 &=& X \otimes X \otimes X \otimes X \otimes X \otimes X \cr
S_6 &=& Z \otimes Z \otimes Z \otimes Z \otimes Z \otimes Z
\eea
with
\be
I= \left(\ba{cc}1&0 \cr 0&1\ea\right) \qquad X = \left(\ba{cc}0&1 \cr 1&0\ea\right) \qquad Z = \left(\ba{cc}1&0 \cr 0&-1\ea\right) \; .
\ee
In this case, the corresponding representations of $(\mathbb{Z}_2)^6$ on the individual subsystems are projective. Their tensor product includes the trivial representation: it is not hard to check directly that there is a state with  $ |\Psi \rangle = |\Psi \rangle$. Also, it is not hard to check that the individual representations are irreducible and the product of any two or three of these is also irreducible.\footnote{For example, in each case, we can verify that any matrix commuting with the representatives of each generator of $(\mathbb{Z}_2)^6$ is a multiple of the identity matrix.} Thus, all subsystems of one, two, or three qubits are maximally entangled. To understand this example in the language of Corollary \ref{cor.genstates}, we can choose a cover group $\hat{H}$ to be the subgroup of $K$ generated by elements $S_i'$ with $S_1' = (X,Z,Z,X,I,I)$ and so forth. This is a central extension of $H$ of order $64 \times 32 = 2048$ with 32 central elements of the form $(\pm I, \pm I, \pm I, \pm I, \pm I, \pm I)$ with an even number of minus signs.

\subsection{Quantum systems with only LME states}

An interesting point (suggested by Eliot Hijano) is that we can describe quantum mechanical systems for which all physical states are LME. Given any multipartite system with a Hamiltonian ${\mathcal H}$ invariant under the global symmetry group $H$ acting irreducibly in each elementary subsystem (and possibly in some composite subsystems), we simply gauge the group $H$. This restricts us to physical states which are invariant under $H$, so by our Corollary~\ref{cor.genstates}, any subsystem upon which $H$ acts irreducibly will be maximally mixed. In this case, local observables cannot be used to distinguish the possible physical states.

For these theories, the dimension of the Hilbert space of physical states is exactly the number of trivial representations appearing in the tensor product of representations $R_i$ associated to the elementary subsystems. Given a particular graph structure for the associated tensor network (as in figure~\ref{group2}), we can choose basis elements corresponding to the different ways of labeling the edges with representations such that the tensor product of three representations at each vertex includes the identity. In some cases, when the tensor product of two representations at a vertex contains multiple copies of the dual of the third representation, there will be multiple independent ways to couple the three representations to a trivial representation. In this case, we need to add an additional discrete label on the vertex to indicate which structure we are choosing.

Alternatively, we can work in the ungauged theory and add a term to the Hamiltonian which associates a large energy to states which are not in the trivial representation of $H$. For example, with $H = \SU(2)$, we can add $E_0 J^2$ for large $E_0$. If there are no other terms in the Hamiltonian, the model will have a ground state degeneracy equal to the multiplicity of the trivial representation in the tensor product of representations $R_i$ associated with the elementary subsystems. Since these systems  have an energy gap to other states for which the local subsystems are not maximally mixed, quantum systems of this type (if they can be realized) may have applications for quantum error correction or the experimental realization of robust qubits/qudits.

\subsection{Implications for representation theory}

The construction of Corollary \ref{cor.genstates} combined with our Theorem \ref{thm.R} on the existence of LME states leads to an interesting implication for the representation theory of finite and/or compact groups:
\begin{proposition}
For dimensions $(d_1, \dots, d_n)$, there can exist a (finite or compact) group $G$ with unitary irreducible representations $R_1$, $R_2$, \dots $R_n$  of dimensions $(d_1, \dots, d_n)$ whose tensor product contains the trivial representation only if the conditions in Theorem \ref{thm.R} for the existence of LME states are met.
\end{proposition}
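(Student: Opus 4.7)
The plan is to derive this proposition as a direct consequence of two results already established in the paper: the representation-theoretic construction of LME states (Corollary \ref{cor.genstates}) and the existence criterion for LME states (Theorem \ref{thm.R}). The logical shape is a simple contrapositive: if the LME existence condition $R(\vec d)\ge 0$ fails, then no such tuple of irreducible representations can exist, because such a tuple would immediately produce an LME state of the forbidden dimensions.

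Concretely, I would proceed as follows. Suppose, for contradiction, that a finite or compact group $G$ admits unitary irreducible representations $R_1,\dots,R_n$ of dimensions $d_1,\dots,d_n$ with $R_1\otimes\cdots\otimes R_n$ containing the trivial representation, yet $R(\vec d)<0$. Since the trivial representation appears, there is a nonzero $G$-invariant vector in the tensor product; after normalizing, write it as
\[
|0\rangle = \sum_{a_1,\dots,a_n} C_{a_1\cdots a_n}\,|a_1\rangle \otimes \cdots \otimes |a_n\rangle,
\]
viewed as a state in ${\cal H}_{d_1}\otimes\cdots\otimes{\cal H}_{d_n}$. Corollary \ref{cor.genstates} then applies with $H=G$ acting irreducibly on each ${\cal H}_i$ via $R_i$, and produces the conclusion that every elementary reduced density matrix $\rho_i$ is a multiple of the identity. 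Thus $|0\rangle\in \LME$, so $\LME$ is nonempty for the dimension vector $\vec d$. But Theorem \ref{thm.R} says this forces $R(\vec d)\ge 0$, contradicting our assumption.

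There is essentially no obstacle here beyond checking that Corollary \ref{cor.genstates} applies verbatim: the representation $R_i$ is by hypothesis unitary and irreducible of dimension $d_i$, so it embeds $G$ into $U(d_i)$ irreducibly, and the diagonal image of $G$ in $U(d_1)\times\cdots\times U(d_n)$ furnishes precisely the subgroup $H$ required by Corollary \ref{cor.multipart}. The state $|0\rangle$ is invariant (not merely invariant up to a phase), so the hypothesis on $|\Psi\rangle$ in Theorem 3.1 is satisfied for each bipartition that isolates a single factor, and maximal mixing on every ${\cal H}_i$ follows. The only step that might feel worth a sentence of care is the projective/central-extension variant mentioned in the remark following Corollary \ref{cor.genstates}: since the statement of the proposition restricts to genuine (not projective) representations, this subtlety does not arise, and the argument is therefore immediate.
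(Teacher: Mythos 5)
Your argument is correct and is exactly the paper's intended proof: the proposition is stated there as an immediate consequence of combining Corollary \ref{cor.genstates} (which turns such a tuple of irreducible representations into an LME state) with Theorem \ref{thm.R} (which then forces $R(\vec d)\ge 0$). Your contrapositive phrasing and the check that the diagonal image of $G$ satisfies the hypotheses of Corollary \ref{cor.multipart} match what the paper leaves implicit.
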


In the special case $n=3$, the tensor product of three irreducible representations $R_1$,$R_2$, and $R_3$ contains the trivial representation if and only if the tensor product of any two of the representations contains the dual of the third. Thus, we have
\begin{corollary}
For dimensions $(d_1, d_2, d_3)$, there can exist a (finite or compact) group $G$ with unitary irreducible representations of dimension $d_1$ and $d_2$ whose tensor product contains an irreducible representation of dimension $d_3$ only if
\be
d_1 d_2 d_3 > N(d_1^2, d_2^2, d_3^2)
\ee
where $N$ is defined in (\ref{defN}).
\end{corollary}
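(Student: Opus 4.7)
The plan is to reduce Corollary 3.9 directly to Proposition 3.8 by a standard dualization trick. Concretely, suppose $G$ is a finite or compact group with unitary irreducible representations $R_1$ and $R_2$ of dimensions $d_1$ and $d_2$ whose tensor product contains an irreducible representation $R_3$ of dimension $d_3$. The $G$-equivariant adjointness
\be
\mathrm{Hom}_G(R_3,\, R_1 \otimes R_2) \;\cong\; \mathrm{Hom}_G(\mathbf{1},\, R_1 \otimes R_2 \otimes R_3^*)
\ee
(which amounts to the fact that $R_3 \otimes R_3^*$ contains the trivial representation) shows that nontriviality of the left-hand side forces $R_1 \otimes R_2 \otimes R_3^*$ to contain the trivial representation. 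Because $R_3^*$ is itself an irreducible unitary representation of $G$ of dimension $d_3$, we have produced three unitary irreducible representations of dimensions $(d_1, d_2, d_3)$ whose tensor product contains the trivial representation of $G$.

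With that triple in hand, Proposition 3.8 applies directly with $n=3$: the dimension vector $(d_1, d_2, d_3)$ must satisfy the hypothesis of Theorem \ref{thm.R}, i.e.\ $R(d_1, d_2, d_3) = d_1 d_2 d_3 - N(d_1^2, d_2^2, d_3^2) \geq 0$. So the core argument is essentially a single-line reduction: dualize one of the three representations to convert the decomposition hypothesis $R_1 \otimes R_2 \supset R_3$ into the invariance hypothesis needed to feed the already-established Proposition 3.8, and read off the inequality from Theorem \ref{thm.R}.

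The main point where I expect an obstacle is in justifying the strict inequality $>$ that appears in the displayed bound rather than the non-strict $\geq$ that my argument naturally produces. A natural attempt would be to argue that when $R(d_1, d_2, d_3) = 0$, Theorem \ref{thm.dims} forces $\LME/K$ to be a single isolated point in the GIT quotient, and to try to show that this rigidity is incompatible with an irreducible triple of unitary representations cutting out the corresponding invariant vector. However, the $\SU(2)$ decomposition $\tfrac{1}{2} \otimes \tfrac{1}{2} = 0 \oplus 1$ exhibits unitary irreducibles of dimensions $(2,2,3)$, and a direct computation gives $N(4,4,9) = 4+4+9-4-1-1+1 = 12 = 2\cdot 2 \cdot 3$, so equality is achieved. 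I would therefore read the displayed inequality as $\geq$ rather than $>$, with the non-strict version being precisely what the reduction above establishes.
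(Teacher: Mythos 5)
Your reduction is exactly the paper's: the corollary is deduced from Proposition 3.5 by noting that $R_1 \otimes R_2$ contains $R_3$ if and only if $R_1 \otimes R_2 \otimes R_3^*$ contains the trivial representation, with $R_3^*$ irreducible unitary of the same dimension. You are also right about the inequality: Proposition 3.5 together with Theorem 1.1 only yields $d_1 d_2 d_3 \ge N(d_1^2,d_2^2,d_3^2)$, and your $\SU(2)$ example with dimensions $(2,2,3)$ --- where $2\cdot 2\cdot 3 = 12 = N(4,4,9)$, consistent with $R=0$ and the existence of the $(2,2,3)$ LME state from the $3j$-symbol construction --- shows the strict inequality as printed is false, so the ``$>$'' in the statement should indeed be ``$\ge$''.
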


It is interesting to ask whether the necessary conditions in Proposition 3.5 and Corollary 3.6 might also be sufficient.

For $n=3$, the first five examples from section 3.1 show that sufficiency holds for $n=2$ and for $n=3$ with $d_1 = 2$. Using the GAP database of finite groups, it is possible to search for groups with representations of particular dimensions and print the character tables to check whether the tensor product of two representation contains a third. Using this approach, we have also checked that for each case of the form $(d_1,d_2,d_3) = (3,3,d_3)$ admitting LME states (i.e. $3 \le d_3 \le 9$), we can find a group and irreducible representations of dimensions $(3,3,d_3)$ whose tensor product contains the trivial representation. However, this approach runs out of steam quickly due to the finite size of the database.

For $n \ge 4$ it is straightforward to see that sufficiency can't hold. To see this, note that if there exist representations with dimensions $\{d_1,d_2,d_3,d_4\}$ whose tensor product contains the identity, then the tensor product of any two of the representations must contain a representation whose dimension is the same as a representation in the tensor product of the other two representations. But this cannot work for dimensions $(2,2,2,7)$ (for which LME states exist), since $2 \times 2$ can contain no representation of dimension larger than 4, while from Corollary 3.6, we find that $2 \times 7$ can contain no representation of dimension smaller than 6.

\section{Background: The geometry of LME states}

In this review section, we describe how the set of LME states has
two natural geometrical formulations that turn out to be equivalent
to each other. The first is related to symplectic geometry and
the other is related to algebraic geometry and geometric invariant
theory. Physically, these two perspectives relate to two seemingly
different classification problems for quantum states. This material has
been discussed previously in the literature; see for example~\cite{brylinsky1},
\cite{brylinsky2}, \cite{Kly02}, \cite[\S3]{Kly07}, \cite{gw10}, \cite{SOK12}, \cite[\S4]{wallach},
\cite{Walter} and \cite{SOK14}.
Additional reviews on the classification of multipart entanglement and the geometry
of multipart Hilbert spaces can be found in \cite{ES} and \cite{BZ}, respectively.
Readers already familiar with these geometrical
preliminaries can skip directly to section 5.

\subsection{Geometry of the space of states}

We begin with the full space of (unnormalized) states. For dimensions $(d_1,\dots, d_n)$, the Hilbert space ${\cal H} = {\cal H}_1  \otimes \cdots  \otimes {\cal H}_n$ is the complex vector space  $\mathbb{C}^{D}$ with $D = d_1 d_2 \cdots d_n$ whose coordinates can be taken as the coefficients $\psi_{i_1 \dots i_n}$ defining the state. The inequivalent physical states can be represented as equivalence classes of vectors with unit norm with two states identified if they are multiplicatively related by a phase. Equivalently, we can work with unnormalized states, omitting the zero vector and identifying states related via multiplication by a complex scalar. This defines the complex projective space $\bbP({\cal H}) = \mathbb{C P}^{D-1}$.

\subsubsection{Entanglement structure and the action of $K = \SU(d_1) \times \cdots \times \SU(d_n)$}

It is an interesting question to characterize the possible
entanglement structures that such states can have. By ``entanglement
structure'' we mean properties of a state that are unaffected by local
unitary transformations; that is, unitary transformations that act
independently on each tensor factor of the Hilbert space. These
transformations correspond to changes of basis for the individual
subsystems. Mathematically, the change of basis operations correspond
to the group $\tilde{K} = U(d_1) \times \cdots \times U(d_n)$ acting
on ${\cal H}$. Without loss of generality, we may consider the smaller
group $K = \SU(d_{1})\times \dotsb \times \SU(d_{n})$ since the action
of $\tilde{K}$ and $K$ on the projective space $\bbP({\cal H}) = \mathbb{CP}^{D-1}$ have the same orbits.

Each state in $\bbP({\cal H})$ will be on some orbit of $K$. The space of these orbits then represents the space of possible entanglement structures. To parameterize the space of these orbits, we can define a set of coordinates which are polynomials in $\psi$ and $\psi^\dagger$ invariant under the action of $K$ (for example, the traces of powers of reduced density matrices for various subsystems). The LME states that are the focus of this paper correspond to specific orbits of $K$; one way to characterize these in terms of $K$-invariants is to say that $\tr(\rho_i^2) = 1/d_i$ for each $i$.

\subsubsection{SLOCC orbits and the action of $G = \SL(d_1,\mathbb{C}) \times \cdots \times \SL(d_n,\mathbb{C})$}

Sometimes, we may be interested in a coarser classification of entanglement structure. Two states are said to be equivalent under the set of ``stochastic local operations and classical communication'' (SLOCC equivalent) if we can move from one to the other by performing reversible quantum operations on the individual subsystems.

Mathematically, the set of allowed SLOCC operations corresponds to the group $G = \SL(d_1, \mathbb{C}) \times \cdots \times \SL(d_n, \mathbb{C})$ of arbitrary invertible local transformations acting on $\bbP({\cal H})$ \cite{DVC}. Alternatively, we can consider $\tilde{G} =  \GL(d_1, \mathbb{C}) \times \cdots \times \GL(d_n, \mathbb{C})$ acting on ${\cal H}$.

As in the above discussion, each state will be on some orbit of $G$. Classifying states up to $SLOCC$ equivalence means understanding how $\bbP({\cal H})$ decomposes into orbits of $G$. As we discuss further below, we can choose a set of $G$-invariant polynomials (some subset of the entanglement invariants discussed above) as coordinates on the space of these orbits.

We will see below that the classification of SLOCC equivalence classes is intimately related to the classification of equivalence classes of LME states up to local unitaries.

\subsection{$\LME/K$ as a symplectic manifold}

The space of quantum states (either ${\cal H}$ or $\bbP({\cal H})$) has a natural symplectic structure (i.e. the structure of a phase space in Hamiltonian classical mechanics), associated with the symplectic form
\be
\omega = i h_{ij} dz^{i} \wedge  d\bar{z}^j =  \frac{i}{2} \frac{\partial}{\partial z^i} \frac{\partial}{\partial \bar{z}^j} \log (|z|^2) dz^{i} \wedge  d\bar{z}^j
\ee
on ${\cal H} = \mathbb{C}^D$ or the naturally associated pullback to $\bbP({\cal H})$. The latter is known as the Fubini-Study form. The associated metric
\be
ds^2 = h_{ij} dz^i d \bar{z}^j
\ee
is (up to an overall normalization) the same as the Bures metric that gives a natural measure of distance between quantum states.

The symplectic form is invariant under the action of $K = \SU(d_1) \times \dots \times \SU(d_n)$ on the phase space. Each infinitesimal transformation  in this group, corresponding to some element $k \in Lie(K)$ of the associated Lie algebra, may be associated with a vector field $X_k$ on phase space indicating the infinitesimal form of the transformation. Via the symplectic form, such a vector field can be associated with a real Hamiltonian function $H_k$ on $\bbP({\cal H})$ as
\be
d H_k = {X_k} \cdot \omega \; .
\ee
The map from the symmetry generator $k$ to the function $H_k$ on phase space is precisely the usual map between symmetries and conserved quantities in Hamiltonian mechanics. Mathematically, this is referred to as a {\it comoment map}.

The same relation can also be expressed as a {\it moment map},
\be
\mu : \bbP({\cal H}) \to Lie(K)^*
\ee
which associates to every state $x$ in $\bbP({\cal H})$ the function ($k \to H_k(x)$) from $Lie(K)$ to real numbers (i.e. a vector in the dual space $Lie(K)^*$).

We now show that the LME states are exactly the subset $\mu^{-1}(0) \subset \bbP({\cal H})$. To see this, we note that $\mu(x) = 0$ if and only if $H_k(x)$ vanishes for some basis of elements $k \in Lie(K)$. For our group $K$, these basis elements can be chosen to take the form
\be
k = \identity  \otimes \cdots \otimes k_i \otimes \cdots  \otimes \identity
\ee
where $k_i$ is a generator of $\SL(d_i,\mathbb{C})$ (i.e. a traceless $d_i \times d_i$ matrix). For the point $x \in \bbP({\cal H})$ corresponding to a state $|\Psi \rangle$, explicit calculation shows that
\be
H_k(x) = \langle \Psi| \identity  \otimes \cdots \otimes k_i \otimes \cdots  \otimes \identity | \Psi \rangle = \tr (\rho_i k_i) \; .
\ee
Thus, we have $H_k(x)$ vanishing for all $k$ (so that $\mu(x)=0$) if and only if the trace of each reduced density matrix $\rho_i$ multiplied by any traceless matrix equals zero. This will be true if and only if each reduced density matrix is a multiple of the identity matrix i.e. the subsystem is maximally mixed.

In summary, the space of locally maximally entangled states is the same as the inverse image of $0$ under the moment map associated with $K = \SU(d_1) \times \cdots \times \SU(d_n)$. In the classical mechanics language, it is the set of points on phase space where all conserved quantities associated with infinitesimal local unitary transformations vanish. The space of equivalence classes of these states up to local unitary transformations is then the quotient $\mu^{-1}(0)/K$. A general result in symplectic geometry is that a quotient space defined in this way (known as the {\it symplectic quotient}) is also a (possibly singular) symplectic manifold.

\subsection{$\LME/K$ as a complex manifold}

By a remarkable duality known as the Kempf-Ness theorem, the
symplectic quotient $\mu^{-1}(0)/K$ that we have identified with
$\LME/K$ is equivalent to another quotient $\bbP({\cal H})\GITquot G$,
known as the {\it geometric invariant theory} (GIT)
quotient of the full space of states $\bbP({\cal H})$ by the larger group
$G = \SL(d_1, \mathbb{C}) \times \cdots \times \SL(d_n, \mathbb{C})$
that is the complexification of $K$.\footnote{For a more complete discussion of geometric
invariant theory, symplectic geometry, and the Kempf-Ness theorem, see
\cite{mfk}, or see \cite{Hoskins} for a pedagogical introduction.} This group is the set of local
invertible transformations (with unit determinant), known in quantum
information literature as the group of transformations under SLOCC
(stochastic local operations and classical communication).

To explain the equivalence, consider the action of $G$ on the full vector space ${\cal H} = \mathbb{C}^{d_1 \cdots d_n}$ before normalization. The space ${\cal H}$ decomposes into orbits of $G$, each of which contains vectors of different norm (see figure \ref{fig:cone}).

It is useful to distinguish between {\it unstable} orbits for which the infimum of the norm is zero (i.e. the orbit contains states of arbitrarily small norm), and the remaining {\it semistable} orbits for which the infimum of the norm is positive. A subset of semistable orbits known as the {\it polystable} orbits contain states of minimum norm on which the infimum is achieved; the polystable orbits are those that are topologically closed.

It turns out that these states of minimum norm on polystable $G$ orbits are precisely the LME states.\footnote{Here, we are including states with arbitrary normalization.}

To see this, we will first show that a state $|\Psi \rangle \in {\cal H}$ is LME if and only if the function
\be
\label{KNfunction}
g \in G \to |g |\Psi \rangle|^2
\ee
has an extremum at $g=1$. The vanishing of the first order variation of the norm-squared function at $g=1$ is equivalent to
\be
\langle \Psi| k |\Psi \rangle = 0
\ee
where $k$ is an element of the Lie algebra of $K = \SU(d_1,\mathbb{C}) \times \cdots \times \SU(d_n, \mathbb{C})$, the maximal compact subgroup of $G$. This is precisely the same condition we obtained from the vanishing of the moment map, so by the arguments above, the vectors satisfying this condition are exactly the maximally mixed ones.

By looking at the second derivative of (\ref{KNfunction}), it can be shown that any extremal vector, sometimes referred to as a {\it critical state}, must be a local minimum.  Further, this minimum is unique on the $G$-orbit up to the action of $K$ (which leaves the norm invariant).

Thus, a state is LME if and only if it is of minimum norm on some orbit, and a single $G$-orbit contains at most one $K$-orbit of LME states. It follows that the space $\LME/K$ of LME states up to local unitary transformations is equivalent to the space of polystable $G$-orbits.

A further equivalence comes by noting that each semistable orbit contains a unique polystable orbit in its closure. Thus, it is natural to define an equivalence relation by which two semistable orbits are in the same equivalence class if they have the same polystable orbit in their closure. In this way, the space of polystable orbits is equivalent to the space of equivalence classes of semistable orbits.

This space of equivalence classes of semistable orbits defines the {\it geometric invariant theory (GIT) quotient} of ${\cal H}$ by $G$, denoted ${\cal H}\GITquot G$; to summarize, starting from the full space ${\cal H}$ we define ${\cal H}\GITquot G$ by throwing out the unstable points, taking the quotient by $G$ and identifying orbits via the equivalence relation. This space has nicer geometrical properties than the naive topological quotient ${\cal H}/G$. The direct quotient ${\cal H}/G$ is typically not even Hausdorff,\footnote{Recall that a Hausdorff space is one where any two distinct points $x$ and $y$ are contained in some disjoint neighborhoods $U_x$ and $U_y$. When this fails, we can have unusual features such as convergent sequences that do not have a unique limit.} while the GIT quotient ${\cal H}\GITquot G$ is a (possibly singular) complex manifold with a nice algebraic characterization that we describe below.

Starting from ${\cal H}\GITquot G$, we can identify orbits related by complex multiplication to define the quotient $\bbP({\cal H})\GITquot G$.\footnote{Here, it is important that scalar multiplication by complex numbers commutes with the action of $G$, so orbits related by scalar multiplication have the same stability properties.} From the discussion above, it follows that inequivalent LME states are in one-to-one correspondence with points in $\bbP({\cal H})\GITquot G$. We therefore have two different geometrical characterizations of the set $\LME$ of locally maximally entangled states up to local unitary equivalence.
\be
\LME/K \cong \mu^{-1}(0)/K \cong \bbP({\cal H})\GITquot G \; .
\ee
The latter equivalence here is guaranteed by a result known as the Kempf-Ness theorem.

A consequence of the equivalence between $\bbP({\cal H})\GITquot G$ and
$\mu^{-1}(0)/K$ is that this space has both complex and symplectic
structure. These structures are compatible, so $\LME/K$ is a
K\"ahler manifold. It can be described explicitly in terms of the
complex coordinates $\psi_{i_1 \dots i_n}$ defining the state by
giving a finite set of holomorphic polynomials $P_{\alpha }$ in the
variables $\psi_{i_{1}\dotsb i_{n}}$, which are invariant under $G$,
together with all polynomial relations $R_n(P_\alpha ) = 0$ satisfied
by the $P_{\alpha }$. We describe this construction in more detail below.

\subsection{Gradient flow to LME states}

We have seen that each orbit of $G$ either contains the zero vector in its closure or contains a unique $K$-orbit of LME states in its closure. In this subsection, we recall that there is a natural vector field on ${\cal H}$ for which the associated flow takes us from any point $\psi$ in ${\cal H}$ along a path through the orbit $G \psi$ to either the origin (in the unstable case) or to an LME state (in the semistable case). Via this flow, each semistable point in ${\cal H}$ may then be associated with a specific LME state.

First, recall that the comoment map takes a point in the Lie algebra $Lie(K)$ for $K = \SU(d_1) \times \cdots \times \SU(d_n)$ to a Hamiltonian function $H_k = \psi^\dagger k \psi$ on ${\cal H}$. Using the natural inner product on the Lie algebra, we can define an orthonormal basis
\be
k_{a,i} = \identity  \otimes \cdots  \otimes T_{d_i}^a  \otimes \cdots  \otimes \identity \; ,
\ee
of Lie algebra elements, where $T^a_{d_i}$ are generators of $\SU(d_i)$ normalized so that
\be
\tr(T^a_{d_i} T^b_{d_i}) = \frac{1}{2} \delta^{a b} \; .
\ee
We can then define the single function $M : {\cal H} \to \mathbb{R}$ as
\be
\label{defM}
M = \sum_{a,i} H^2_{k_{a,i}} = \sum_{a,i} (\psi^\dagger k_{a,i} \psi)^2 = \sum_{a,i} \tr^2(\rho_i T^a_{d_i})   \; ,
\ee
known as the ``square of the moment map''. We have seen that for nonzero $\psi$, $H_k(\psi)$ vanishes for all $k$ if and only if $\psi \in \LME$. Thus $M$ is minimized on the subset $\LME \in {\cal H}$. If we now define $\vec{v} = -\nabla M$, we have a vector field which points in the direction of steepest descent for the function $M$. We will see that the flow defined by this vector field takes us from any semistable state $\psi$ to an LME state in the $G$-orbit of $\psi$, and from any unstable state to the zero vector.

To proceed, let us derive a more explicit form for $M$ and for the associated gradient field. Using the fact that generators of $\SU(d_i)$ form a basis of all traceless Hermitian matrices, we can derive the completeness relation
\be
\sum_a (T^a_{d_i})_{jk} (T^a_{d_i})_{lm} = \frac{1}{2} (\delta_{jm} \delta_{kl} - \frac{1}{d_i} \delta_{jk} \delta_{lm}) \; .
\ee
This can be used to simplify (\ref{defM}) as
\be
M = \frac{1}{2} \sum_{i} \left(\tr(\rho_i^2) - \frac{1}{d_i} \tr^2(\rho_i)\right) \; .
\ee
Here, we are working with unnormalized states, so $\tr(\rho_i)$ can take any positive value. We see that the function $M$ is independent of our choice of basis. Varying $M$ with respect to $\psi^\dagger$ to determine the gradient, we find that the associated flow is
\be
\label{gradflow}
\frac{d}{d\lambda} \psi_{j_1 \cdots j_n}  = - \sum_i  (\hat{\rho}_i)_{j_i} {}^{k} \psi_{j_1 \cdots k \cdots j_n} \qquad \qquad \hat{\rho}_i \equiv \rho_i - \frac{1}{d_i} \tr(\rho_i) \identity \; .
\ee
The gradient vector on the right side vanishes if and only if $\rho_i = \dfrac{1}{d_i} \tr(\rho_i) \identity$, which is possible if and only if $\psi \in \LME$ or $\psi = 0$, so these are the only allowed endpoints for the flow.

It is not hard to show that for a point $\psi \in {\cal H}$, infinitesimal flow along the gradient direction coincides with the action of an infinitesimal element in $G$ defined by maximizing the rate of decrease of the function (\ref{KNfunction}) over all Lie algebra elements in $Lie(G)$ of some fixed norm.\footnote{To see this, we minimize the $\lambda$ derivative of $|(1 + \lambda k) \cdot \psi|^2$ at $\lambda = 0$ over the possible $k \in Lie(G)$ subject to the constraints that $\tr(k k^\dagger) = 1$ and $\tr(k) = 0$. The flow associated with the resulting generator gives precisely the result (\ref{gradflow}).}  Thus, the gradient flow remains within an orbit of $G$, and acts to decrease the norm of the state. For semistable $\psi$, the function (\ref{KNfunction}) is bounded below by a positive value on the orbit $G \cdot \psi$, so the state reached in the limit along the flow from $\psi$ has positive norm and must be a LME state. For unstable $\psi$, there are no LME states in the closure of $G \cdot \psi$ so the flow from $\psi$ must approach the zero vector in the limit.

These ideas have been utilized recently in the quantum information literature on multipart entanglement
\cite{SOK14, WDGC, MS} in order to find and characterize important states that are critical points for certain measures of entanglement. These include LME states but also certain special states on the unstable orbits. A discrete version of the gradient flow with similar properties was described in \cite{VDD} as an algorithm to associate an LME state as a ``normal form'' for a general state in the Hilbert space. See also \cite{BGOWW} for a recent discussion.

\subsection{Algebraic characterization of $\bbP({\cal H})\GITquot G$}

Let us now discuss the algebraic characterization of the GIT quotient $\bbP({\cal H})\GITquot G$. The points in ${\cal H} = \mathbb{C}^{d_1 d_2 \cdots d_n}$ are labeled by complex numbers $\psi_{i_{1} \cdots i_n}$ such that the associated (unnormalized) quantum state is
\be
|\Psi \rangle = \sum_{\vec{i}} \psi_{i_1 \cdots i_n} |i_1 \rangle  \otimes \cdots  \otimes |i_n \rangle
\ee
Under $G = \SL(d_1,\mathbb{C}) \times \cdots \times \SL(d_n,\mathbb{C})$, $\psi_{i_{1} \cdots i_n}$ transforms as
\be
\psi_{i_1 \cdots i_n} \to M^1_{i_1} {}^{j_1} \cdots M^n_{i_n} {}^{j_n} \psi_{j_1 \cdots j_n}
\ee
where $M^i$ is a $d_i \times d_i$ matrix with unit determinant. Certain polynomials in these coordinates are invariant under the action of $G$. For example, in the $d_1 = d_2 = 2$ case, we have
\be
\epsilon^{ij} \epsilon^{kl} \psi_{ij} \psi_{kl} = \psi_{11} \psi_{22} - \psi_{12} \psi_{21} \; .
\ee
which is invariant since it is the determinant of
\be
Z = \left( \ba{cc} \psi_{11} & \psi_{12} \cr \psi_{21} & \psi_{22} \ea \right)
\ee
and we have
\be
\det(Z) \to \det(M^1 Z (M^2)^T) = \det(M^1) \det(Z) \det(M^2) = \det(Z) \; .
\ee
More generally, we can build invariants by taking some number  of copies of $\psi_{i_1 \cdots i_n}$ (this must be a multiple of ${\rm lcm}(d_1, \dots, d_n)$) and contracting the set of $k$th indices on all the copies in some way using invariant tensors $\epsilon^{i_1 \cdots i_{d_k}}$. An important result of Hilbert (see \cite{mfk,dc}) is that the ring of all such invariants is finitely generated over $\mathbb C$. That is, there is a finite number of polynomial invariants $f_1, \dots, f_N$, which can be taken to be homogeneous of positive degree, such that that every other invariant can be expressed as a polynomial in $f_1, \dots, f_N$ with complex coefficients. Let us denote the degree of $f_i$ by $k_i$. In general, we can have relations among the generators $f_1, \dots, f_N$.

For a point $x$ in the projective space $\bbP({\cal H})$, different representatives in ${\cal H}$ will have different values for the invariant polynomials. But for $\psi \to \lambda \psi$, our basis of polynomials transform as
\be
(f_1,\dots,f_N) \to  (\lambda^{k_1} f_1,\dots, \lambda^{k_N} f_N) \; .
\ee
Thus, to any orbit of $G$ in $\bbP({\cal H})$, we can associate an $n$-tuple of complex numbers defined up to the equivalence relation
\be
(f_1,\dots,f_N) \sim  (\lambda^{k_1} f_1,\dots, \lambda^{k_N} f_N) \; .
\ee
This defines the {\it  weighted projective space} $\mathbb{C P}(k_1,\dots,k_N)$. We will denote the equivalence classes by $(f_1:\dots:f_N)$. Taking into account the algebraic relations between the generators, the space of values for these invariant polynomials will be an algebraic variety in the weighted projective space (i.e. a subspace defined by some polynomial equations).

Basic results in geometric invariant theory tell us that the geometry of the quotient $\bbP({\cal H})\GITquot G$ is precisely the geometry described by this algebraic variety. We can motivate this by the following observations:

The invariant polynomials take the same values for any two points in ${\cal H}$ on the same $G$-orbit, so we have a map between orbits and $n$-tuples $(f_1,\dots,f_N)$. Thus, the invariant polynomials give a map from a $G$-orbit in ${\cal H}$ to $\mathbb{C}^N$, or from $G$-orbits in $\bbP({\cal H})$ to $\mathbb{C P}(k_1,\dots,k_N)$.

If a point $x_c \in {\cal H}$ is in the closure of the $G$-orbit of another point $x$, the invariant polynomials must agree for $x$ and $x_c$ since they are continuous functions on ${\cal H}$. Thus, we have in particular that
\begin{itemize}
\item
For any unstable point $x$ in ${\cal H}$, all invariant polynomials (generated by $f_i$ of positive degree) must vanish, since they vanish for $0$, which is in the closure of $G \cdot x$.
\item
For any two points on equivalent semistable orbits, all invariant polynomials will agree, since they must take the same values as for points on the common polystable orbit in their closures.
\end{itemize}
It turns out that the converse of the last statement is also true: if all the invariant polynomials agree, the two points lie in the same equivalence class. Thus, the map from $\bbP({\cal H})\GITquot G$ to the algebraic space defined by the invariant polynomials and their relations is an isomorphism, so the quotient has the structure of a (closed) algebraic
subvariety of the weighted projective space. Geometries defined in this way are guaranteed to be K\"ahler, i.e. the symplectic structure defined by viewing them as a symplectic quotient is compatible with the complex structure.

As a special case, we note that the set of LME states will be empty if and only if there are no $G$-invariant polynomials:
\begin{itemize}
\item
For dimensions $(d_1, \dots, d_n)$, there exist locally maximally entangled states if and only if there exist invariant polynomials of $G = \SL(d_1, \mathbb{C}) \times \cdots \times \SL(d_n, \mathbb{C})$
\end{itemize}

\subsubsection*{Representation theory criterion for the existence of locally maximally entangled states}

In terms of representation theory, the existence of an invariant polynomial is equivalent to the condition that the symmetrized $k$-th tensor power of the $(d_1,d_2,\cdots, d_n)$ representation of $\SL(d_1, \mathbb{C}) \times \cdots \times \SL(d_n, \mathbb{C})$ contains the trivial representation for some $k$.

This representation theory question is equivalent, via Schur-Weyl duality, to a question about representation theory of the symmetric group (see, for example, \cite{Walter}). We recall that representations of the symmetric group $S_n$ may be labelled by Young diagrams with $n$ boxes. For our specific case, the representations we are interested in are representations $R(A,B)$ corresponding to rectangular Young diagrams with $A$ rows and $B$ columns. The condition is that for some $k$, the tensor product
\be
\label{reps}
R(d_1, k d_2 \cdots d_n) \times R(d_2, k d_1 d_3 \cdots d_n) \times \cdots \times R(d_n, k d_1 \cdots d_{n-1})
\ee
contains the trivial representation.

These representation theory criteria also follow from general results about the compatibility of spectra of density matrices (known as the {\it quantum marginal problem}) \cite{Kly04}. The existence of locally maximally entangled states is equivalent to asking whether it is possible to find a density matrix $\rho_{A_1 \cdots A_n}$ with spectrum $(1,0,0,\dots,0)$ for which $\rho_{A_k}$ has spectrum $(1/d_k,1/d_k, \dots, 1/d_k)$.

Using the representation theory of the symmetric group, it is possible to come up with an explicit calculational check for when the product of representations (\ref{reps}) contains the trivial representation; see Exercise 4.51 and Theorem 4.10 in Fulton and Harris~\cite{fh}. For example, the number of trivial representations in the tensor product for the case $n=3$ is given by
\be
\label{check}
N = \sum_p \frac{1}{z(p)} \chi_{(d_1, k d_2 d_3)}(C_p) \chi_{(d_2, k d_1 d_3)}(C_p) \chi_{(d_3, k d_1 d_2)}(C_p)
\ee
where the sum runs over partitions of $D = k d_1 d_2 d_3$, labeled by Young diagrams with $i_k$ columns of length $k$  such that $\sum_k k i_k = D$,
\be
z(p) = i_1!1^{i_1} i_2! 2^{i_2} \cdots i_D! D^{i_D} \; ,
\ee
and $\chi_R(C_p)$ is the character associated with representation $R$, evaluated for the conjugacy class $C_p$ associated with $p$. Frobenius gave an explicit formula for this (see~\cite[4.10]{fh}), so in principle, to decide if there is a locally maximally entangled state in the tensor product of Hilbert spaces with dimension $(d_1,d_2,d_3)$ we need only evaluate the expression \eqref{check} as a function of $k$ and check whether it is ever non-zero. Unfortunately, this turns out to be computationally hard for all but the smallest dimensions. Thus, we will instead make use of algebraic methods, to which which we turn in the next section.

\section{Understanding $\LME$ using geometric invariant theory}

In the previous section, we have reviewed how the space $\LME/K$ of LME states up to local unitary transformations is equivalent to the geometric invariant theory (GIT) quotient $\bbP({\cal H})\GITquot G$ of the full space of states by the group of local determinant-one invertible transformations $G = \SL(d_1,\mathbb{C}) \times \cdots \times \SL(d_n,\mathbb{C})$.

Starting from the latter description, it is possible to use the machinery of geometric invariant theory to characterize the space, providing explicit results that tell us for which $(d_1, d_2, \dots, d_n)$ LME states exist, and give the dimension of the space in all nonempty cases.

Our rigorous results characterizing the GIT quotient $\LME/K$ appear in a companion mathematics paper \cite{mathpaper}. In this section, our goal is to present an overview of the results and their derivation.

\subsubsection*{Dimensionality of $\bbP({\cal H})\GITquot G$}

To understand the dimension of the quotient $\bbP({\cal H})\GITquot G$, we note that the original space $\bbP({\cal H}) = \mathbb{CP}^{d_1 \cdots d_n -1}$ has complex dimension $d_1 d_2 \cdots d_n - 1$ while the group $G = \SL(d_1, \mathbb{C}) \times \cdots \times \SL(d_n, \mathbb{C})$ has complex dimension $\sum_i (d_i^2 -1)$. The latter is naively the dimension of a typical $G$-orbit, so the dimension of $\bbP({\cal H})\GITquot G$, the space of these orbits, is naively the difference
\be
\label{defDel1}
{\rm dim}(\bbP({\cal H})\GITquot G)_{naive} = {\rm dim}(\bbP({\cal H})) - {\rm dim}(G) = \left(\prod_{i=1}^n d_i - 1 \right) - \sum_{i=1}^n( d_i^2 -1) \equiv \Delta(d_1, \dots d_n)\; .
\ee
This naive dimension can fail to be correct, however, if a typical point in ${\cal H}$ is invariant under some subgroup of $G$ with positive dimension.

For any point $\psi \in {\cal H}$, we define $S_\psi \subset G$ to be the subgroup of $G$ that leaves $\psi$ invariant. This is called the {\it stabilizer} at position $\psi$. The dimension of $S_\psi$ is equal to the dimension of the subspace of infinitesimal transformations in the Lie algebra of $G$ that leaves $\psi$ invariant. As we review in \cite[\S3]{mathpaper},
there always exists some subgroup $S \subset G$ and a dense open subset
of $U \subset {\cal H}$ such that the stabilizer at any point in $U$
is conjugate to $S$. Moreover,~
\be
\label{realdim}
{\rm dim}(\bbP({\cal H})\GITquot G) = {\rm dim}(\bbP({\cal H})) - {\rm dim}(G) + {\rm dim}(S)    \; .
\ee
A key point here is that the group $G = \SL(d_1, \mathbb{C}) \times \cdots \times \SL(d_n, \mathbb{C})$
is semisimple.~\footnote{For a discussion of the relationship
between the dimension of the GIT quotient and the dimension of
the stabilizer for a generic state in the quantum information
literature, see, e.g.,~\cite{MOA}.}
If this dimension is $-1$, this means that a neighborhood of a generic point in the space ${\cal H}$ is contained within the $G$-orbit of this point. In this case, it follows that the quotient is empty, since a point $\lambda x$ for $\lambda < 1$ will then be in the $G$-orbit of $x$, and by iterating the group action that gives this point, we will end up arbitrarily close to 0. Thus, the orbit of a generic point is unstable.

For specific examples in which the dimensions are not too large, it is straightforward to directly calculate ${\rm dim}(S)$ and thus the dimension of the quotient.  For some fixed point $\psi$, an element
\be
g = g_1 \otimes \identity \otimes \cdots \otimes \identity + \dots + \identity \otimes \cdots \otimes \identity \otimes g_n
\ee
of the Lie algebra of $G$ (where $g_i$ is a $d_i \times d_i$ traceless matrix) will be a generator of the stabilizer group if and only if $g \cdot \psi = 0$. This gives ${\rm dim}{\cal H}$ linear equations for the ${\rm dim}G$ variables (the elements of $g_i$). We define ${\cal R}$ to be the number of these equations that are independent (i.e. the rank of the rectangular matrix that defines the linear equations) for a generic point $\psi$ - this gives the dimension of the generic orbit in ${\cal H}$. This may be computed by evaluating the rank in the case where the coefficients $\psi$ are left as undetermined variables. Then
\be
{\rm dim} S = {\rm dim}G - {\cal R}
\ee
and
\be
{\rm dim}(\bbP({\cal H})\GITquot G) = {\rm dim}(\bbP({\cal H})) - {\cal R} \; .
\ee
Implementing this method using Maple, we have calculated the dimension in many specific examples and found that the results agree in all cases with the predictions of Theorems \ref{thm.R} and \ref{thm.dims}.

\subsection{Proof of theorems governing existence and dimension of $\LME$}

We are now ready to outline a proof of the key results Theorems \ref{thm.R} and \ref{thm.dims} for the existence and dimension of $\LME/K$. We refer to the companion paper \cite{mathpaper} for the complete proof. Here, the aim is to prove as much as possible with minimal mathematical background, and give an overview of the remaining steps.

Both theorems follow by analyzing the following recursive result for the dimension of $\LME/K$:

\begin{theorem}\label{thm.States}
Consider a multipart quantum system described by Hilbert space ${\cal H} = {\cal H}_1  \otimes \cdots  \otimes {\cal H}_n$ with subsystems ${\cal H}_i$ of dimension $d_i$, with  $d_1 \leqslant d_2 \leqslant \cdots \leqslant d_n$. Let $\LME/K$ be the space of locally maximally entangled states up to identification by local unitary transformations $K = \SU(d_1) \times \dots \times \SU(d_n)$. Then
\begin{enumerate}

\item
If $d_n > d_1 \cdots d_{n-1}$, then $\LME/K$ is empty.

\item
If $d_n = d_1 \cdots d_{n-1}$, then $\LME/K$ is a single point.

\item
If $d_n \le \dfrac{1}{2} d_1 \cdots d_{n-1}$,
then $\LME/K$ is non-empty.
It is a single point in the case where $n = 3$ and $d_1 = d_2 = d_3 = 2$,
has dimension $d - 3$ in the case where $n = 3$ and $d_1 = 2$, $d_2 = d_3 = d$
and has dimension $d_1 \cdots d_n -
d_1^2 - \dots - d_n^2 + n - 1$ in all other cases.

\item
If $\dfrac{1}{2} d_1 \cdots d_{n-1} < d_n < d_1 \cdots d_{n-1}$,
then $\LME/K$ has the same dimension as the space $\LME/K$ for dimensions $\{d_1,\dots,d_{n-1}, d_1 \cdots d_{n-1} - d_n\}$.
\end{enumerate}
\end{theorem}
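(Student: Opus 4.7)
My approach is to treat the four cases separately. Throughout, I view a normalized state $|\Psi\rangle\in\mathcal{H}$ as a linear map $M:\mathcal{H}_n^{\ast}\to\mathcal{H}_{\bar n}$, where $\mathcal{H}_{\bar n}:=\mathcal{H}_1\otimes\cdots\otimes\mathcal{H}_{n-1}$ has dimension $d_{\bar n}:=d_1\cdots d_{n-1}$, so that $\rho_n=M^{\dagger}M$ and $\rho_{\bar n}=MM^{\dagger}$. The LME condition on subsystem $n$ is precisely that $M$ is a scaled isometry, $M^{\dagger}M=\identity/d_n$, in which case $P:=d_n\,MM^{\dagger}$ is the orthogonal projector onto a $d_n$-dimensional subspace $V:=\mathrm{im}(M)\subset\mathcal{H}_{\bar n}$.

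Cases 1 and 2 are then immediate from the bipartite Schmidt analysis of Section 2.1. If $d_n>d_{\bar n}$, then $M$ cannot have rank $d_n$, so $\rho_n$ cannot be a scalar multiple of the identity and $\LME$ is empty. If $d_n=d_{\bar n}$, then a scaled isometry is actually a scaled unitary, forcing $V=\mathcal{H}_{\bar n}$; the partial-trace LME conditions on the first $n-1$ factors then hold automatically because $\mathrm{tr}_{\bar i\cap\bar n}\identity=(d_{\bar n}/d_i)\,\identity_{d_i}$. Any two such maps are related by post-composition with a unitary on $\mathcal{H}_n$, so $\LME/K$ is a single point.

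The heart of the argument is Case 4. For $d_{\bar n}/2<d_n<d_{\bar n}$, the LME condition on subsystem $i<n$ reads $\mathrm{tr}_{\bar i\cap\bar n}P=(d_n/d_i)\,\identity_{d_i}$; subtracting this from $\mathrm{tr}_{\bar i\cap\bar n}\identity=(d_{\bar n}/d_i)\,\identity_{d_i}$ gives the equivalent condition
\[
\mathrm{tr}_{\bar i\cap\bar n}P^{\perp}=\frac{d_{\bar n}-d_n}{d_i}\,\identity_{d_i},
\]
where $P^{\perp}:=\identity-P$ is the projector onto $V^{\perp}$. Hence, picking any isometry $M^{\perp}:\mathbb{C}^{d_{\bar n}-d_n}\to V^{\perp}$ produces an LME state $|\Psi^{\perp}\rangle$ for the dimension vector $(d_1,\ldots,d_{n-1},d_{\bar n}-d_n)$. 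Since the $U(d_n)$-orbit of $|\Psi\rangle$ is determined by $V$ (any two isometries from $\mathbb{C}^{d_n}$ to $V$ differ by a unitary), the assignment $V\leftrightarrow V^{\perp}$ is $U(d_1)\times\cdots\times U(d_{n-1})$-equivariant and descends to an isomorphism of the two $\LME/K$ spaces, giving the equality of dimensions asserted.

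Case 3 is the most involved and I expect the main obstacle to lie here. Using the formula $\dim(\LME/K)=\Delta(\vec d)+\dim S$ from Section 5, where $S\subset G$ is the generic stabilizer in $\mathcal{H}$, the content reduces to two tasks: (i) proving non-emptiness of $\LME$ in the regime $d_n\le d_{\bar n}/2$, and (ii) computing $\dim S$. For (i) I would construct explicit $G$-invariant polynomials (equivalently, LME representatives) via the representation-theoretic method of Section 3 (Bell, Clebsch-Gordan, and product-group states), using the Case 4 bijection to reduce subcases with $d_n$ close to $d_{\bar n}$; non-emptiness in the large regime $\Delta\ge 0$ can also be established via a standard Kempf-Ness semistability argument using the gradient flow of Section 4.4. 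For (ii), the triviality of $S$ in all but the two exceptional families is the real heavy lifting; it rests on the Elashvili and Popov results cited in the introduction, together with the systematic case analysis of \cite{mathpaper}. The exceptional families $(2,2,2)$ and $(2,d,d)$ with $d\ge 3$ are then handled directly: the GHZ state exhausts $(2,2,2)$, while for $(2,d,d)$ the explicit parametrization of Section \ref{sec:Explicit23} identifies $\LME/K$ with the moduli space of $d$ unit vectors in $\mathbb{R}^3$ summing to zero modulo $SO(3)$, of real dimension $2(d-3)$ and hence complex dimension $d-3$.
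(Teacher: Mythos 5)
Your proposal is correct, and for Cases 1--3 it follows essentially the paper's own route: the bipartite Schmidt argument for Cases 1 and 2, and, for Case 3, the reduction via $\dim(\LME/K)=\Delta(\vec d)+\dim S$ to the generic-stabilizer computation, with the triviality of $S$ outside the exceptional families deferred to Elashvili--Popov and \cite{mathpaper} at exactly the point where the paper defers it, and the $(2,d,d)$ family handled by the explicit construction of Section~\ref{sec:Explicit23}. Case 4 is where you genuinely diverge, and your argument is sound. The paper proves the recursion algebraically: every $\SL(d_n,\mathbb{C})$-invariant polynomial is built from the contracted tensor $P_{I_1\cdots I_{d_n}}=\epsilon^{i_1\cdots i_{d_n}}\psi_{I_1 i_1}\cdots\psi_{I_{d_n}i_{d_n}}$, and the $\SL(D_0,\mathbb{C})$-invariant $\epsilon$ tensor identifies the invariant ring for $(d_1,\dots,d_{n-1},d_n)$ with that for $(d_1,\dots,d_{n-1},D_0-d_n)$, hence the GIT quotients have equal dimension. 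You instead observe that an LME state is a scaled isometry onto a $d_n$-dimensional subspace $V\subset{\cal H}_{\bar n}$ whose projector $P$ satisfies $\tr_{\bar i\cap\bar n}P=(d_n/d_i)\identity$, a condition manifestly equivalent to the analogous condition on $P^{\perp}$, so that $V\mapsto V^{\perp}$ gives a $\prod_{i<n}U(d_i)$-equivariant bijection of admissible subspaces and hence a homeomorphism of the two quotients. Your version is more elementary (no invariant theory) and proves slightly more than the asserted equality of dimensions, since it identifies the quotients outright; the paper's version fits the algebraic framework used throughout Section 5 and identifies the quotients as varieties. The only loose ends in your write-up are cosmetic: $M^{\perp}$ should carry the normalization $1/\sqrt{d_{\bar n}-d_n}$ so that $\rho_n^{\perp}=\identity/(d_{\bar n}-d_n)$, and the non-emptiness and stabilizer claims in Case 3 remain, as you acknowledge, outsourced to the companion paper rather than proved.
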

In the last case, the sum of dimensions after the transformation is strictly less than the original set of dimensions, so the recursion terminates after a finite number of steps.

\subsubsection*{Outline of proof of theorem \ref{thm.States}}

\begin{itemize}
\item
{\bf Case 1: $d_n > d_1 \cdots d_{n-1}$}

In this case, we violate the necessary conditions (\ref{necessary}), so there cannot be any LME states, and the GIT quotient must be empty.
\item
{\bf Case 2: $d_n = d_1 \cdots d_{n-1}$}

In this case, the Hilbert space is a tensor product of two subsystems of dimension $d_n = d_1 \cdots d_{n-1}$. Since the density matrices for these subsystems have the same spectrum, having the elementary $d_n$-dimensional subsystem maximally mixed implies that the composite subsystem composed of the first $n-1$ elementary subsystems is also maximally mixed. This implies that each density matrix for the first $n-1$ subsystems is maximally mixed, so our state is LME if and only if the two complementary $d_n$ dimensional subsystems are maximally mixed. Describing the state as a $d_n \times d_n$ matrix $\psi_{i I}$, the condition that both subsystems are maximally mixed is that $\psi^\dagger \psi = \psi \psi^\dagger = \dfrac{1}{ d_n} \identity$; this holds if and only if $\psi_{i I} = \dfrac{1}{\sqrt{d_n}} U_{i I}$ where $U$ is a unitary matrix. By an $\SU(d_n)$ transformation on the $d_n$-dimensional elementary subsystem and an overall phase rotation, we can bring the state to the form $\psi_{i I} = \dfrac{1}{\sqrt{d_n}} \delta_{i I}$. This is just the Bell state (\ref{Bell}), so for this case, we have a unique LME state up to local unitary transformations, as claimed.

\item

{\bf Case 3: $d_n \le \dfrac{1}{2} d_1 \cdots d_{n-1}$}

In this case, it is not hard to show that the naive dimension $\Delta(d_1, d_2, \dots d_n)$ is positive except in the special case where $n=3$ and $(d_1, d_2, d_3) = (2,d,d)$. In this latter case, we have already shown by our explicit construction in section 2 that the space of LME states up to local unitaries is a single point for $d=2$ and has dimension $d-3$ for $d > 3$. In the remaining cases, we show in \cite{mathpaper} making use of the work of
Elashvili \cite{elashvili}  that the stabilizer for a generic point has dimension 0. Thus, by the result (\ref{realdim}), the dimension is the naive dimension $\Delta(d_1, d_2, \dots d_n) > 0$ and the quotient is non-empty.

\item
{\bf Case 4: $\dfrac{1}{2} d_1 \cdots d_{n-1} < d_n <  d_1 \cdots d_{n-1}$}

For this case, we claim that the dimension of the quotient for $(d_1, \dots , d_n)$ is the same as the dimension of the quotient for $(d_1, \dots ,d_{n-1}, d_1 \cdots d_{n-1} - d_n)$. To show this, we recall that the geometry of the quotient may be specified by describing a set of generators for the $G$-invariant polynomials in the coordinates $\psi$ describing the state, together with their relations. Letting $D_0 = d_1 \cdots d_{n-1}$ and $G_0 = \SL(d_1, \mathbb{C}) \times \cdots \times \SL(d_n, \mathbb{C})$, we will now argue that the ring of polynomials on $\mathbb{C}^{D_0}  \otimes \mathbb{C}^{d_n}$ invariant under $G_0 \times \SL(d_n, \mathbb{C})$ is isomorphic to the ring of polynomials on $\mathbb{C}^{D_0}  \otimes \mathbb{C}^{{D_0} - d_n}$ invariant under $G_0 \times \SL({D_0} - d_n, \mathbb{C})$.

To see this, let $\psi_{I i_n}$ represent coordinates on $\mathbb{C}^{D_0}  \otimes \mathbb{C}^{d_n}$ where $1 \le i_n \le d_n$ and $I$ represents the $(n-1)$-tuple $(i_1, \dots, i_{n-1})$. Any holomorphic polynomial in $\psi_{I i_n}$ invariant under $\SL(d_n, \mathbb{C})$ must have all $i_n$ indices contracted with the $\SL(d_n, \mathbb{C})$ invariant antisymmetric tensor $\epsilon^{i_1 \cdots i_{d_n}}$. Thus, the polynomials invariant under $G_0 \times \SL(d_n, \mathbb{C})$ are $G_0$-invariant polynomials built from the $\SL(d_n, \mathbb{C})$ invariant coordinate
\be
P_{I_1 \cdots I_{d_n}} \equiv \epsilon^{i_1 \cdots i_{d_n}} \psi_{I_1 i_1} \cdots \psi_{I_{d_n} i_{d_n}} \; .
\ee
Similarly, if $\psi_{I i_n}$ give coordinates on $\mathbb{C}^{D_0}  \otimes \mathbb{C}^{{D_0} - d_n}$, the polynomials invariant under $G_0 \times \SL({D_0} - d_n, \mathbb{C})$ are $G_0$-invariant polynomials built from the $\SL({D_0} - d_n, \mathbb{C})$ invariant coordinate
\be
Q_{J_1 \cdots J_{{D_0} - d_n}} \equiv \epsilon^{i_1 \cdots i_{{D_0} - d_n}} \psi_{J_1 i_1} \cdots \psi_{J_{{D_0} - d_n}  i_{D_0 - d_n}} \; .
\ee
But using the $\SL({D_0},\mathbb{C})$-invariant $\epsilon$ tensor, we can alternatively define coordinates
\be
\hat{P}^{I_1 \cdots I_{d_n}} \equiv \epsilon^{I_1 \cdots I_{d_n} J_1 \cdots J_{{D_0}-d_n}} Q_{J_1 \cdots J_{{D_0} - d_n}} \; .
\ee
These define a space isomorphic to the space defined by $P_{I_1 \cdots I_{d_n}}$, and the $G_0$ action on the two spaces is equivalent, so the space defined by the $G_0$-invariant polynomials in $P$ and their relations has the same dimension as the space of $G_0$-invariant polynomials in $\hat{P}$. Thus, the dimension of the quotient $\bbP({\cal H})\GITquot G$ is the same for dimensions $(d_1, \dots , d_n)$ and $(d_1, \dots , d_1 \cdots d_{n-1} - d_n)$, as claimed.
\end{itemize}

\subsubsection*{Outline of proofs for Theorems \ref{thm.R} and \ref{thm.dims}}

The recursive algorithm of Theorem~\ref{thm.States} can be used to tell us which $(d_1, \dots, d_n)$ admit LME states and to
compute the dimension of the space $\LME/K$.

In this section, we explain how this leads to the more direct results in Theorems \ref{thm.R} and \ref{thm.dims}.

Since the recursive step in Theorem~\ref{thm.States} (case 4) always decreases the sum of dimensions, it will always terminate on one of the first three cases after a finite number of steps.

We can predict which will be the terminal case by making use of various quantities that are invariant under the transformation
\be
\label{recursion}
\{d_1,\dots,d_{n-1}, d_n\} \rightarrow \{d_1,\dots,d_{n-1}, d_1 \cdots d_{n-1} - d_n\}
\ee
in case 4 of Theorem~\ref{thm.States}.

Direct calculation shows that the naive dimension  $\Delta(d_1, \dots, d_n)$ defined in (\ref{defDel1}) is invariant under (\ref{recursion}). Additionally, the set of greatest common divisors of $k$-tuples of the dimensions is invariant for any $2 \le k \le n$.

A particularly useful combination of these invariants is the quantity
\be
R(\vec{d}) = \prod_i d_i + \sum_{k=1}^n (-1)^k \sum_{1 \le i_1 < \cdots < i_k \le n} (\gcd(d_{i_1},\dots,d_{i_n}))^2 \; .
\ee
We can show (see Proposition 5.3 in \cite{mathpaper} for a detailed proof) that $R$ is always negative, zero, or positive for dimensions satisfying the conditions of case (1), (2), or (3) of Theorem~\ref{thm.States} respectively. Since $R$ is invariant under the recursion step of case (4), it follows that
\begin{proposition}\label{prop:Rcases}
The algorithm in Theorem 1.2 terminates on case (1) if and only if $R<0$, on case (2) if and only if $R=0$ and on case (3) if and only if $R>0$.
\end{proposition}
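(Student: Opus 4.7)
The plan is to combine invariance of $R$ under the recursion of case~(4) of Theorem~\ref{thm.States} with a direct sign computation in the three terminal cases. For the invariance, I would rewrite
\begin{equation*}
R(\vec d) = \Bigl(\prod_{i=1}^n d_i - \sum_{i=1}^n d_i^2\Bigr) + \sum_{l=2}^{n} (-1)^l \!\!\sum_{1 \le i_1 < \cdots < i_l \le n}\!\! \gcd(d_{i_1}, \ldots, d_{i_l})^2 .
\end{equation*}
The bracket equals $\Delta(\vec d) - (n-1)$, invariant under $d_n \mapsto D_0 - d_n$ (with $D_0 = d_1 \cdots d_{n-1}$) by the invariance of $\Delta$ already noted in the text. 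Every gcd in the sum involves at least two indices, and any such gcd over a subset $I$ containing $n$ obeys $\gcd(d_I, D_0 - d_n) = \gcd(d_I, d_n)$ because $d_I \mid D_0$. Hence $R$ is preserved. Since the recursion strictly decreases $\sum_i d_i$, it terminates after finitely many steps on one of cases~(1)--(3), and the sign of $R$ is the same as at the terminal configuration.

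It then suffices to determine the sign of $R$ on each terminal case. Here I would use the interpretation $N(d_1^2, \ldots, d_n^2) = |S_1 \cup \cdots \cup S_n|$ with $S_i = \{j/d_i^2 : 1 \le j \le d_i^2\} \subset (0,1]$. Case~(1) is immediate: $N \ge |S_n| = d_n^2 > d_n D_0 = \prod_i d_i$, so $R < 0$. Case~(2) is equally short: $d_n = D_0$ makes every $d_i$ divide $d_n$, so $S_i \subseteq S_n$ for all $i$ and $N = d_n^2 = d_n D_0 = \prod_i d_i$, giving $R = 0$.

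The main obstacle is case~(3), where one must show $N < \prod_i d_i$ under the hypothesis $d_n \le D_0/2$. The naive bound $N \le \sum_i d_i^2$ fails already at $(2,2,2)$, so the inclusion--exclusion cancellations among the $S_i$ have to be exploited. I expect to proceed by induction on $n$, using the recurrence
\begin{equation*}
N(d_1^2, \ldots, d_n^2) = d_n^2 + N(d_1^2, \ldots, d_{n-1}^2) - N\bigl(\gcd(d_1, d_n)^2, \ldots, \gcd(d_{n-1}, d_n)^2\bigr),
\end{equation*}
which comes from the partition $\bigcup_i S_i = S_n \sqcup \bigl((\bigcup_{i<n} S_i) \setminus S_n\bigr)$ together with the identification $S_n \cap S_i = \{j/\gcd(d_n,d_i)^2 : 1 \le j \le \gcd(d_n,d_i)^2\}$. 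The case~(3) hypothesis keeps $d_n^2 \le \tfrac12 d_n D_0$, while the inductive hypothesis bounds the other two $N$-values. The careful bookkeeping of the base cases and of the secondary gcd-term is the technical heart of the argument, and is carried out in full in Proposition 5.3 of \cite{mathpaper}.
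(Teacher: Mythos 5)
Your proposal follows the same route as the paper: $R$ is invariant under the case-(4) recursion and has a definite sign on each terminal configuration, with the one genuinely hard inequality (case (3): $N < \prod_i d_i$ when $d_n \le \tfrac{1}{2} d_1 \cdots d_{n-1}$) deferred to Proposition 5.3 of the companion paper exactly as the text does. Your explicit verification of the invariance (via $\gcd(d_I, D_0 - d_n) = \gcd(d_I, d_n)$ for $d_I \mid D_0$, plus the invariance of $\Delta$) and of the signs in cases (1) and (2) is correct and in fact supplies more detail than the paper itself.
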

Combining this result with Theorem~\ref{thm.States} immediately gives Theorem \ref{thm.R}.

To demonstrate Theorem \ref{thm.dims}, we consider also the behavior of the invariants $\Delta(d_1, \dots d_n)$ and
\be
\label{gmd2}
{\rm g}_{max}(\vec{d}) \equiv \max_{1 \le i < j \le n} \gcd(d_i,d_j)
\ee
for the terminal cases. We find (see Proposition 6.1 in \cite{mathpaper} for a detailed proof)
\begin{proposition} \label{prop.terminal}
For $d_1 \le \cdots \le d_n$, define $\Delta(d_1, \dots, d_n)$ as in (\ref{defDelta}) and let $(d'_1, \dots, d'_n)$ be the dimensions reached for the terminal case.
\begin{enumerate}
\item
If $\Delta(d_1, \dots d_n) < -2$, the recursion in Theorem \ref{thm.States} terminates on cases (1) or (2). Then $\LME/K$ is empty or a single point.
\item
If $\Delta(d_1, \dots d_n) = -2$, the recursion in Theorem \ref{thm.States} terminates on case (3) with $(d'_1, \dots d'_n) = (1,\dots,1,2,d',d')$ where $d' = g_{max}(\vec{d})$ as defined in (\ref{gmd}). Then $\LME/K$ is a point for $d'=2$ and has dimension $d'-3$ otherwise.
\item
If $\Delta(d_1, \dots d_n) > -2$, the recursion in Theorem \ref{thm.States} terminates on case (3) with $(d'_1, \dots d'_n) \ne (1,\dots,1,2,d',d')$. Then $\LME/K$ has dimension $\Delta(d_1, \dots d_n) > 0$.
\end{enumerate}
\end{proposition}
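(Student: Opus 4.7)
The proof rests on two facts already established: (i) both $\Delta(\vec{d})$ and the collection of subset-gcds of $\vec{d}$ (hence ${\rm g}_{max}(\vec{d})$) are preserved by the case-(4) recursion of Theorem~\ref{thm.States}, and (ii) Proposition~\ref{prop:Rcases} determines which terminal case is reached via the sign of $R(\vec{d})$. Consequently, the proposition reduces to computing the possible values of $\Delta$ on each of the three terminal configurations and then reading off the three parts by elimination. I work under the standing assumption that the original dimensions satisfy $d_i \ge 2$, i.e.\ all subsystems are nontrivial.

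\textit{Case-(1) terminals.} Setting $m := d_1 \cdots d_{n-1}$ and $d_n = m+t$ with $t \ge 1$, direct expansion of $\Delta$ yields
\[
\Delta = n - 1 - t\, d_n - \sum_{i<n} d_i^2.
\]
Since $t \ge 1$, $d_n \ge 2$, and $d_i \ge 1$, one obtains $\Delta \le -2$, with equality iff $t=1$, $d_n=2$, and $d_i=1$ for all $i<n$ --- that is, the terminal multiset is $(1,\ldots,1,2)$. A short reverse-step check then shows that $(1,\ldots,1,2)$ cannot arise as the output of any case-(4) step: the multiset $\{d_1,\ldots,d_{n-1},d_1\cdots d_{n-1}-d_n\}$ being $\{1,\ldots,1,2\}$ forces either $d_n=-1$ (if all $d_i=1$ for $i<n$ and the new entry is $2$) or $d_n=1<2\le d_i$ for some $i<n$ (contradicting the sorted convention). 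So under $d_i \ge 2$ initially, case-(1) terminals satisfy $\Delta \le -3$.

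\textit{Case-(2) terminals.} Here $d_n = d_1 \cdots d_{n-1}$, so $\prod d_i = d_n^2$ and
\[
\Delta = n - 1 - \sum_{i<n} d_i^2.
\]
Writing $\sum_{i<n} d_i^2 = (n-1) + \sum_{i<n}(d_i^2 - 1)$ with each summand in $\{0,3,8,15,\ldots\}$, a brief check shows the attainable totals omit the integers $1$ and $2$. Hence $\Delta \in \{0,-3,-6,-8,-9,\ldots\}$, with $\Delta=0$ only in the trivial all-ones case (excluded by $d_i \ge 2$). \textit{Case-(3) terminals} are handled directly by Theorem~\ref{thm.States}: the generic subcase gives $\dim(\LME/K) = \Delta$ and hence $\Delta \ge 0$, while the special terminal form $(1,\ldots,1,2,d',d')$ gives $\dim(\LME/K) = \max(d'-3, 0)$ and, by direct substitution, $\Delta = -2$.

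\textit{Assembly.} Part~1: $\Delta < -2$ rules out every case-(3) terminal (which has $\Delta \ge -2$), so the terminal lies in case (1) or (2), and Theorem~\ref{thm.States} then gives $\LME/K$ empty or a single point. Part~2: $\Delta = -2$ rules out case (1) (the only case-(1) possibility at $\Delta = -2$ is the unreachable $(1,\ldots,1,2)$), case (2) (which skips $\Delta = -2$), and case-(3) generic ($\Delta \ge 0$); the only remaining possibility is the case-(3) special terminal, and the invariance of pairwise gcds under the recursion identifies $d' = {\rm g}_{max}(\vec{d})$. Part~3: $\Delta > -2$ rules out case (1), case (2), and case-(3) special (all with $\Delta \le -2$), leaving case-(3) generic with $\dim(\LME/K) = \Delta$. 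The main obstacle is the reverse-step claim in the case-(1) analysis: verifying that $(1,\ldots,1,2)$ cannot arise as the output of any case-(4) step. This is what upgrades the bound from $\Delta \le -2$ to the strict $\Delta \le -3$ needed to separate case-(1) terminals (where $\LME/K$ is empty) from case-(3) special terminals at the same value $\Delta = -2$; everything else is elementary counting and substitution.
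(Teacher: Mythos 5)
Your overall route --- exploit the invariance of $\Delta$ and of the subset gcd's under the case-(4) step, compute the possible values of $\Delta$ on each type of terminal configuration, and identify the terminal case by elimination --- is precisely the strategy the surrounding text sets up; the paper itself does not prove Proposition~\ref{prop.terminal} in situ but defers to Proposition~6.1 of \cite{mathpaper}, so there is no in-paper argument to compare line by line. Your terminal computations are correct: $\Delta = n-1-t\,d_n-\sum_{i<n}d_i^2\le -2$ on case-(1) terminals with equality only for $(1,\dots,1,2)$; $\Delta=n-1-\sum_{i<n}d_i^2\notin\{-1,-2\}$ on case-(2) terminals; and $\Delta=-2$ on the special case-(3) form. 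The reverse-step check that $(1,\dots,1,2)$ cannot be the output of a case-(4) step is right and is indeed the crux that separates case-(1) terminals from the special case-(3) terminals at $\Delta=-2$ (equivalently: a case-(4) step requires $\tfrac12 d_1\cdots d_{n-1}<d_n<d_1\cdots d_{n-1}$, hence $d_1\cdots d_{n-1}\ge 3$, which fails for any preimage of $(1,\dots,1,2)$). One small loose end of the same flavor: you dismiss the all-ones case-(2) terminal, with $\Delta=0$, ``by $d_i\ge 2$,'' but that only excludes it as initial data; you should also note it cannot be the output of a case-(4) step (that would force the replaced entry to be $0$), since otherwise it would contaminate the elimination in part~3.

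The one genuine gap is the strict inequality in part~3. You establish that a generic case-(3) terminal has $\dim(\LME/K)=\Delta$, hence $\Delta\ge 0$, but the proposition asserts $\Delta>0$, and nonnegativity of a dimension cannot rule out $\Delta=0$. What is needed is a lower bound on $\Delta$ for case-(3) configurations not of the special form, and this requires an extra estimate rather than a formal argument. After dropping unit entries, any case-(3) terminal has at least three entries $\ge 2$ (two entries $a\le b$ with $b\le\tfrac12 a$ is impossible), and then the computation already used in the Lemma of appendix~C applies: since $\Delta$ is increasing in $d_n$ on $[d_{n-1},\tfrac12 d_1\cdots d_{n-1}]$ one gets $\Delta\ge \Delta(d_1,\dots,d_{n-1},d_{n-1})\ge (d_1\cdots d_{n-2}-n)\,d_{n-1}^2+n-1$, which yields $\Delta\ge 3$ for $n\ge 4$ and $\Delta\ge 2$ for $n=3$ with $d_1\ge 3$, while $n=3$, $d_1=2$ forces the special form $(2,d,d)$. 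Inserting this estimate closes the gap; the rest of your write-up is sound.
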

Combining this proposition with Proposition \ref{prop:Rcases} and Theorem \ref{thm.R} implies Theorem \ref{thm.dims}.

\subsection{Explicit results}

In this section, we analyze Theorems \ref{thm.R} and \ref{thm.dims} to provide more explicit results in some cases.

\subsubsection*{Behavior of $\LME$ as a function of $d_n$ for fixed $(d_1, \dots, d_{n-1})$}

It is straightforward to characterize the behavior of $\LME/K$ as we vary $d_n$ for some fixed $(d_1,d_2,\dots,d_{n-1})$. We assume that $n > 3$ or $n=3$ and $d_1 > 2$, since we understood the cases $n=2$ and $n=3, d_1=2$ in detail in section 2.

For these remaining cases, we note that as a function of $d_n$ for fixed $(d_1,d_2,\dots,d_{n-1})$,  $\Delta(d_1,\dots,d_n)$ is a downwards parabola that has a positive value for $d_n = d_{n-1}$, increases to a maximum at $d_n = d_1 \cdots d_{n-1}/2$ and then decreases. Let $d_n = d_*$ be the value at which $\Delta(d_n)$ reaches -2. Explicitly,
\[
d_* = \frac{P}{2} + \frac{1}{2}\sqrt{P^2 - 4 Q + 8} \qquad  P = \prod_{i=1}^{n-1} d_i,\;  Q = \sum_{i=1}^{n-1} (d_i^2 - 1) \; .
\]
Then according to Proposition~\ref{prop.terminal}, for $d_{n-1} \le d_n < d_*$ the space $\LME/K$ is non-empty and has dimension $\Delta(d_1, \dots d_n)$.  If $d_*$ is an integer, $\LME/K$ is nonempty for $d_n= d_*$; in this case, the recursion terminates at $(1,\dots,1,2,d',d')$ and $\LME/K$ has dimension $d'-3$ for $d' \ge 3$ and $0$ for $d' = 2$. Finally, for $d_* \le d_n \le d_1 \cdots d_{n-1}$, there is a set of sporadic cases for which $\LME/K$ is a single point.

For $n=3$ (including the $n_1=2$ case) we can characterize these sporadic cases (including $d_n = d_*$ cases) explicitly.

\begin{proposition}\label{prop.Fib}
Let $n=3$ and suppose the quotient is nonempty for $(d_1,d_2,d_3)$ with $\Delta(d_1,d_2,d_3) < 0$ or equivalently $d_3 \ge d_*(d_1,d_2)$. Then we have $(d_1,d_2,d_3) = (A, f_i, f_{i+1})$ where $(f_i,f_{i+1})$ are successive terms in a sequence defined by
\be
\label{Fib}
f_{i+1} = A f_i - f_{i-1}
\ee
with $(f_0,f_1) = (k,kA)$ for positive integer $k$ or $(f_0,f_1) \in S_A$. Here $S_A$ is a set of pairs $(b,c)$ defined by the requirement that
\be
\label{Sregion}
 b \le  \frac{A}{2} c \qquad  c \le \frac{A}{2} b  \qquad bc \ge A \qquad Abc - A^2 - b^2 - c^2 + 4 \le 0  \; .
\ee
and that the quotient corresponding to $(b,c,A)$ is nonempty.
\end{proposition}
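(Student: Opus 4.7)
The plan is to iterate the recursive algorithm of Theorem~\ref{thm.States} on a sporadic configuration $(d_1, d_2, d_3) = (A, b, c)$ with $A \le b \le c$, $\Delta(A,b,c) < 0$, and $\LME/K$ nonempty, interpreting each case 4 step as moving backward by one index in a Fibonacci-like sequence with parameter $A$. The case 4 recursion sends $(A, b, c) \to (A, b, Ab - c)$ followed by reordering. Writing $(f_i, f_{i+1}) = (b, c)$ in the sequence determined by $f_{j+1} = A f_j - f_{j-1}$, the replacement $c \mapsto Ab - c$ produces $f_{i-1} = A f_i - f_{i+1}$, exactly one backward step. When the reordering preserves the smallest dimension $d_1 = A$, the iterated recursion walks $(d_2, d_3)$ backward through consecutive pairs of this Fibonacci-like sequence.

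By Proposition~\ref{prop.terminal}, iteration must terminate on case 2 or on case 3 with $\Delta = -2$. A case 2 terminal $(A, k, Ak)$ gives the seed $(f_0, f_1) = (k, Ak)$ since $f_{-1} = Ak - Ak = 0$. A case 3 terminal $(A, f_0, f_1)$ (with $f_1 \le A f_0/2$) gives a seed in $S_A$. The four defining conditions of $S_A$ correspond precisely to case 3 terminal seeds: the symmetric inequalities $b \le Ac/2$ and $c \le Ab/2$ express the case 3 terminal condition $\max(b,c) \le A\min(b,c)/2$ for either ordering of the pair; the bound $bc \ge A$ excludes degenerate pairs where $A$ could be reordered out of the $d_1$ position; the inequality $Abc - A^2 - b^2 - c^2 + 4 \le 0$ is $\Delta(A,b,c) \le -2$, necessary by Proposition~\ref{prop.terminal} for a case 3 terminal to be nonempty with $\Delta < 0$; and the explicit nonemptiness clause removes pairs that satisfy the numerical conditions but yield empty $\LME/K$. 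Running the Fibonacci-like recursion forward from either seed then reconstructs all sporadic configurations $(A, f_i, f_{i+1})$ with the given parameter $A$.

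The main obstacle is that the Theorem~\ref{thm.States} recursion does not always descend cleanly through a single fixed-$A$ Fibonacci-like sequence. When the case 4 step produces $Ab - c < A$ (equivalently $c > A(b-1)$), the full reordering swaps $Ab - c$ into the $d_1$ slot, changing the effective parameter. When $Ab - c > b$ (equivalently $c < (A-1)b$), the new pair is already sorted with the produced value larger, which in Fibonacci terms amounts to re-anchoring the sequence by reversing the direction of indexing. One must verify that, in both cases, the original sporadic configuration still lies on a Fibonacci-like sequence with parameter equal to the original $A$, and that the backward walk along this sequence terminates at a seed matching either the $(k, Ak)$ or the $S_A$ criterion. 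This requires careful case analysis comparing the sizes of $A$, $b$, $c$, and $Ab - c$, together with bookkeeping of how the Theorem~\ref{thm.States} reorderings correspond to re-anchorings of the Fibonacci-like sequence.
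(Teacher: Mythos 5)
Your proposal has the right skeleton --- iterate the dimension-reducing recursion, read each step as one backward step of the sequence $f_{i+1}=Af_i-f_{i-1}$, and classify the terminal configurations --- but it stops short of a proof at exactly the point where the real work lies. You correctly identify the obstacle yourself in your final paragraph: when $Ab-c<A$ the sorted triple no longer has $A$ in the first slot, and when $Ab-c>b$ the sorted pair reverses, so the walk prescribed by Theorem~\ref{thm.States} (which always replaces the \emph{largest} entry after re-sorting) does not visibly stay on a single fixed-$A$ sequence. You then defer this to ``careful case analysis comparing the sizes of $A$, $b$, $c$, and $Ab-c$,'' which is precisely the missing content: without it, nothing shows that the original triple lies on a parameter-$A$ sequence seeded at $(k,Ak)$ or at a pair in $S_A$, nor does your reading explain why $S_A$ is cut out by the \emph{symmetric} pair of inequalities $b\le\tfrac{A}{2}c$ and $c\le\tfrac{A}{2}b$ rather than by a single sorted terminal condition.

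The paper's proof dissolves this obstacle rather than analyzing it. Since permuting the factors does not change the quotient, one is free to apply the case-4 replacement in a fixed slot: define $\gamma:(A,B,C)\to(A,AB-C,B)$ and iterate it \emph{without reordering}, so $A$ never leaves the first position and the effective parameter never changes. The monovariant is then not ``which case of Theorem~\ref{thm.States} we land in after sorting'' but the sum $A+B_i+C_i$, which is non-increasing as long as $C_i\ge AB_i/2$ (guaranteed initially by $\Delta<0$). The walk stops either when $C_i=AB_i$ (yielding the seed $(k,Ak)$) or at a two-sided local minimum of the sum; the condition ``$\gamma$ does not decrease the sum going forward'' gives $c\le\tfrac{A}{2}b$, and ``the sum did not decrease arriving here'' gives $b\le\tfrac{A}{2}c$, which is exactly where the symmetric pair in \eqref{Sregion} comes from. (The inequality $bc\ge A$ is simply the necessary condition \eqref{necessary} for nonemptiness of the triple $(A,b,c)$, not a device to keep $A$ in the first slot, and $Abc-A^2-b^2-c^2+4\le0$ is $\Delta\le-2$, forced by part (3) of Proposition~\ref{prop.terminal} once $\Delta<0$.) If you want to salvage your version, you would need to either carry out the deferred case analysis in full or adopt the paper's reformulation; as written, the argument is an outline with its central step unproved.
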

\begin{proof}
By part (3) of Proposition~\ref{prop.terminal}, $\Delta < 0$ implies $\Delta \le -2$; this is also implied by $d_3 \ge d_*(d_1,d_2)$ from the definition of $d_*$. Thus, suppose that the quotient is non-empty for dimensions $(A,d_2,d_3)$ with $\Delta(A,d_2,d_3) \le -2$. Consider applying the operation
\be
\gamma: (A,B,C) \to (A, A B - C, B)
\ee
repeatedly {\it without reordering} to define a sequence $((A,B_0,C_0) \equiv (A,d_2,d_3),(A,B_1,C_1), \dots)$. Since we must have $C_0 \ge A_0 B_0/2$ for $\Delta < 0$, the initial step does not increase the sum of the elements. Further, all elements remain positive unless we end up on a triple for which $C_i = A B_i$. Thus, repeating the operation $\gamma$ must either bring us to a triple $(A,b,Ab)$ or to a triple $(A,b,c)$ for which the operation $\gamma$ does not decrease the sum of the elements.

In the latter case, we can show that $b$ and $c$ satisfy
\be
 b \le  \frac{A}{2} c \qquad  c \le \frac{A}{2} b \qquad bc \ge A \qquad Abc - A^2 - b^2 - c^2 + 4 \le 0  \; .
\ee
The first pair of inequalities for $c$ come from demanding that $\gamma$ applied to $(A,b,c)$ does not decrease the sum and that $(A,b,c)$ came by acting with $\gamma$ on a triple with a sum that was not smaller. The third inequality follows since we are assuming the quotient is not empty. The fourth inequality is the statement that $\Delta \le -2$.

If $(A,B,C)$ descends via $\gamma$ to $(A,b,c)$, the quotient corresponding to this triple must be non-empty, so the pair $(b,c)$ is in the set $S_A$ defined by the proposition. Thus, any triple $(A,d_2,d_3)$ with $\Delta \le -2$ descends via the recursion step to $(A,k,Ak)$ or $(A,b,c)$ with $(b,c) \in S_A$. To determine the form of such triples $(A,d_2,d_3)$ explicitly, note that the inverse of the operation $\gamma$ is the operation
\be
B_{i+1} = C_i \qquad C_{i+1} = A C_i - B_i
\ee
Combining these we find that $B_i$ and $C_i$ must be successive terms in the sequence
\be
f_{i+1} = A f_i - f_{i-1} \; ,
\ee
where the allowed starting values are $(f_0,f_1) = (k,Ak)$ or $(f_0, f_1) \in S_A$.
\end{proof}
\begin{remark}
The general terms in the sequence defined by (\ref{Fib}) can be given explicitly via a generating function as
\be
f_i = \frac{f_0 + (f_1 - A f_0) x}{1 - Ax + x^2}|_{x^n} \; .
\ee
For $A \ge 3$, the region defined by~\eqref{Sregion} covers a narrow band of the plane near the curve $bc=A$, symmetric about the line $b=c$ and contained in the region $b < A$, $c < A$. Thus, while the definition of $S_A$ still involves the recursion from Theorem~\ref{thm.States}, we need only check a finite number of points (on the order of $A$) in the region \eqref{Sregion} to determine the set, after which we can use Proposition~\ref{prop.Fib} to give an explicit formula for all $\Delta < 0$ triples $(A,d_2,d_2)$ with a nonempty quotient. As examples, we have
\bea
S_3 &=& {(3,2),(2,2),(2,3)} \cr
S_4 &=& {(4,2),(3,2),(2,3),(2,4)} \cr
S_5 &=& {(5,2),(4,2),(2,4),(2,5)}
\eea
We consider the special case $A=2$ presently.
\end{remark}
\begin{example}
For dimensions $(2,d_2,d_3)$ the quotient $\mathbb \bbP({\cal H}) \GITquot  G$ is non-empty if and only if $(d_1,d_2,d_3) = (2,b,b)$ for $b \ge 2$ or $(2,kb,(k+1)b)$ for positive integers $k,b$ with $kb > 1$. The quotient is a single point except for $(d_1,d_2,d_3) = (2,b,b)$ with $b>3$, in which case it has dimension $b-3$.
\end{example}
\begin{proof}
For this case, the range $d_{n-1} \le d_n < d_*$ is empty, so the only non-empty quotients are those covered by Proposition~\ref{prop.Fib}. For $A=2$, the definition (\ref{Fib}) of the sequence in Proposition~\ref{prop.Fib} may be written as $f_{i+1} - f_{i} = f_i - f_{i-1}$ so the sequence is arithmetic. The conditions (\ref{Sregion}) imply $b=c$, so we find $S_2 = \{(b,b)|b \ge 2\}$. The proposition then gives that the non-empty triples are $(2,f_i,f_{i+1})$ with $(f_0,f_1) = (b,2b)$ or $(f_0,f_1) = (b,b)$. Since the sequence is arithmetic, we have explicitly that $(2,b,b)$ with $b \ge 2$ and $(2,kb,(k+1)b)$ with $bk > 2$. Proposition~\ref{prop.terminal}, implies that the quotient is a point except in the case $(2,b,b)$, where it has dimension $b-3$.
\end{proof}

This reproduces the results that we obtained by our explicit construction in section 2.

\section*{Acknowledgements}

We would like to thank Jason Bell, Patrick Hayden, Alex May, Robert Raussendorf, David Stephen, and Michael Walter for helpful comments and discussions. This work is supported in part by the Natural Sciences and Engineering Research Council of Canada and by the Simons Foundation.

\appendix

\section{Explicit construction of all LME states for $(d_1,d_2,d_3) = (2,B,C)$}

In this appendix, we provide details of the explicit construction of all LME states with $(d_1, d_2, d_3) = (2,B,C)$ with $2 \le B \le C$. In this case, the necessary conditions (\ref{necessary}) require that
\be
B \le C \le 2 B \; .
\ee

Using the Schmidt decomposition, and performing a $U(2)$ rotation on the first factor, any state $|\Psi \rangle$ with our desired properties can be written as
\be
|\Psi \rangle = \frac{1}{\sqrt{2}} | 1 \rangle  \otimes \psi^1_{bc} |b\rangle  \otimes |c \rangle + \frac{1}{\sqrt{2}} | 2 \rangle  \otimes \psi^2_{bc} |b\rangle  \otimes |c \rangle
\ee
where $\psi^1_{bc} |b\rangle  \otimes |c \rangle$ and $\psi^2_{bc} |b\rangle  \otimes |c \rangle$ define orthonormal states of ${\cal H}_B  \otimes {\cal H}_C$ and summation over $b$ and $c$ is implied.

Making use of the Schmidt decomposition on $\psi^1_{bc} |b\rangle  \otimes |c \rangle$ and performing $U(B)$ and $U(C)$ rotations on the second and third factors, we can write
\be
\psi^1_{bc} = {\large \left[D_{\{\sqrt{p_i}\}} \, \, \,  0_{B \times (C-B)} \right] \qquad
\psi^2_{bc} = \left[{\cal I}_{B \times B} \, \, \, {\cal J}_{B \times (C-B)} \right]}
\ee
where $D_{\{\sqrt{p_i}\}}$ is the diagonal matrix with elements $\sqrt{p_i}$. Orthogonality of $\psi^1$ and $\psi^2$ requires that
\be
\label{cond1}
\tr({\cal I}D_{\{ \sqrt{p_i} \}}) = 0 \; .
\ee
The condition that $\rho_B$ is maximally mixed gives
\be
\label{cond2}
\frac{1}{2} D_{\{p_i\}} + \frac{1}{2} {\cal I} {\cal I}^\dagger + \frac{1}{2} {\cal J} {\cal J}^\dagger = \frac{1}{B} \identity_{B \times B}
\ee
while the condition that $\rho_C$ is maximally mixed gives
\bea
\label{cond3}
\frac{1}{2} D_{\{p_i\}} + \frac{1}{2} {\cal I}^\dagger {\cal I}  &=& \frac{1}{C} \identity_{B \times B} \\
\label{cond4}
{\cal I}^\dagger {\cal J} &=& 0 \\
\frac{1}{2} {\cal J}^\dagger {\cal J} &=& \frac{1}{C} \identity_{(C-B) \times (C-B)} \; .
\label{cond5}
\eea

\subsubsection*{Case $C=B$}

We begin with the special case $C=B$. Here, the conditions collapse to
\be
\label{cond6}
{\cal I} {\cal I}^\dagger = {\cal I}^\dagger {\cal I} = D_{\{\frac{2}{B} - p_i\}}
\ee
together with the normalization condition (\ref{cond1}). Defining Hermitian matrices
\be
H_+ = \frac{1}{2} ({\cal I} + {\cal I}^\dagger) \qquad \qquad H_- = - \frac{i}{2} ({\cal I} - {\cal I}^\dagger)
\ee
the first equality in (\ref{cond6}) gives $[H_+,H_-] = 0$, so the matrices are simultaneously diagonalizable. We can write
\be
H_1 = U D_1 U^\dagger \qquad H_2 = U D_2 U^\dagger
\ee
so that
\be
{\cal I} = U D_{\{z_i\}} U^\dagger
\ee
where $D_{\{z_i\}}$ is some general complex diagonal matrix. The latter equality in (\ref{cond6}) gives that
\be
U D_{\{|z_i|^2\}} U^\dagger = D_{\{\frac{2}{B} - p_i\}} \; .
\ee
Without loss of generality, we can assume that the eigenvalues $p_i$ are ordered from largest to smallest and the eigenvalues in $D_{\{|z_i|^2\}}$ are ordered from smallest to largest. Then we must have
\be
z_i = e^{i \phi_i} \sqrt{\frac{2}{B} - p_i}
\ee
and $U$ must commute with $D_{\{p_i\}}$. It is therefore block-diagonal, with blocks corresponding to blocks of equal eigenvalues in $D_{\{p_i\}}$. However, recalling that the local unitary transformations on the two subsystems of size $B$ act as
\be
D_{\{p_i\}} \to W^\dagger D_{\{p_i\}} V \qquad \qquad {\cal I} \to W^\dagger {\cal I} V
\ee
we see that whenever such blocks exist, we can take local unitary transformations with $W = V = U$, to eliminate $U$, leaving
\be
{\cal I} = D_{\{e^{i \phi_i} \sqrt{\frac{2}{B} - p_i}  \}}
\ee
Whenever $p_i=0$, we have residual local unitary transformations that can be used to set $\phi_i=0$ also. We can also set $\phi_i = 0$ when $p_i = \dfrac{2}{B}$.

Finally, the condition (\ref{cond1}) gives that
\be
\label{polygon}
\sum_i e^{i \phi_i} \sqrt{p_i(\frac{2}{B} - p_i)}  = 0
\ee
so we must choose the phases so that the complex numbers in the sum add to zero. This implies that the quantities $\sqrt{p_i(\frac{2}{B} - p_i)}$ must satisfy polygon inequalities.

With these constraints, we can write the most general locally maximally entangled state up to local unitary transformations as
\be
\label{2BBstate}
|\Psi \rangle = \frac{1}{\sqrt{B}} \cos(\theta_i /2) |1 \rangle  \otimes |i \rangle  \otimes |i \rangle + \frac{1}{\sqrt{B}} e^{i \phi_i} \sin(\theta_i /2) |2 \rangle  \otimes |i \rangle  \otimes |i \rangle  \; .
\ee
where we have made the change of variables
\be
p_i = \frac{2}{B} \cos^2(\theta_i/2)
\ee
with $0 \le \theta_i \le \pi$.
With this parametrization, the constraint $\sum p_i = 1$ together with the constraint (\ref{polygon}) give that
\bea
\sum_i \cos(\theta_i) &=& 0 \cr
\sum_i \sin(\theta_i) \cos(\phi_i) &=& 0 \cr
\sum_i \sin(\theta_i) \sin(\phi_i) &=& 0
\eea
Thus it is natural to think of $(\theta_i, \phi_i)$ as spherical coordinates defining $B$ unit vectors $\vec{x}_i = (\cos(\theta_i),\sin(\theta_i) \cos(\phi_i),\sin(\theta_i) \sin(\phi_i))$ in  $\mathbb{R}^3$ which must add to zero. It is not hard to check (e.g. by studying the action of the algebra) that the any two sets of such vectors related by a rotation in $SO(3)$ define equivalent states; they are related by performing an $\SU(2)$ rotation on the first factor followed by a transformations in the diagonal subgroup of $\SU(B) \times \SU(B)$ that put the state back in the form (\ref{2BBstate}). In summary, for the case $(d_1,d_2,d_3) = (2,B,B)$ the space of locally maximally entangled states up to local unitary transformations is equivalent to the space of sets of $B$ unit vectors in $\mathbb{R}^3$ adding to zero modulo simultaneous $SO(3)$ rotations of the vectors. This space has complex dimension $B-3$ for $B \ge 3$ and is a point (the orbit of the GHZ state (\ref{GHZ})) for $B=2$.

\subsubsection*{Case $C>B$}

Next, we consider the remaining case $C > B$. Starting from condition (\ref{cond5}), we must have that ${\cal J}$ takes the form
\be
\sqrt{\frac{2}{C}} \left( v_1 \cdots v_{C-B} \right)
\ee
where $v_i$ are a set of orthonormal vectors in ${\cal H}_B$. Choosing $2B-C$ vectors $\hat{v}_i$ to complete the basis of ${\cal H}_B$, the condition (\ref{cond4}) implies that the columns of ${\cal I}$ are each some linear combinations of the vectors $\hat{v}_i$. Defining the unitary matrix
\be
U = \left( v_1 \cdots v_{C-B} \;  \hat{v}_1 \cdots \hat{v}_{2B - C} \right)
\ee
and defining
\be
R_+ = \left( \ba{c} \identity_{(C-B) \times (C - B)} \cr 0_{(2B - C) \times (C - B)} \ea \right) \qquad R_- = \left( \ba{c} 0_{(C-B) \times (2B - C)} \cr \identity_{(2B - C) \times (2B - C)} \ea \right)
\ee
we can write
\be
{\cal J} = \frac{\sqrt{2}{C}} U R_+ \qquad \qquad {\cal I} = U R_- M
\ee
where $M$ is some $(2B - C) \times B$ matrix. Now, condition (\ref{cond3}) gives that
\be
\label{MM}
M^\dagger M = D_{\{\frac{2}{C}-p_i\}}.
\ee
Since the rank on the left side can be at most $2B -C$, we must have at least $C -B$ of the $p_i$s equal to $2/C$. We assume that these are the first $C-B$ and denote the remaining ones by $p_i'$. These must all be less than or equal to $2/C$ since $M^\dagger M$ cannot have a negative eigenvalue. Since only the lower right $(2B -C) \times (2B - C)$ block of $D$ is nonvanishing, the constraint (\ref{MM}) is solved by
\be
M = \left( 0_{(2B-C) \times (C-B)} \qquad u D_{\sqrt{2/C-p'_i}} \right)
\ee
where $u$ is a $(2B-C) \times (2B - C)$ unitary matrix.

Now, from condition (\ref{cond2}), we get that
\be
\label{Ueq}
U \left[ \ba{cccc} \frac{2}{C} & & & \cr & \ddots & & \cr & & \frac{2}{C} & \cr & & & u D_{\{2/C - p'_i\}} u^\dagger \ea \right] U^\dagger = D_{\{2/B - p_i\}}
\ee
where the upper left block of the matrix comes from the terms involving ${\cal J}$ and the lower right block comes from the terms involving ${\cal I}$. Now, we can assume that the $p_i$ are ordered largest to smallest, with the first $C-B$ equal to $2/C$ and the rest labeled as $p'_1,p'_2,\dots$ as we argued above. The eigenvalues of the matrix on the right, ordered smallest to largest are
\be
\left( \frac{2}{ B} -\frac{2 }{ C}, \dots, \frac{2 }{ B} - \frac{2 }{ C}, \frac{2 }{ B} - p'_1, \dots, \frac{2 }{ B} - p'_{2B-C} \right)
\ee
The eigenvalues of the matrix on the left, ordered from smallest to largest, are
\be
\left( \frac{2 }{ C} - p'_1, \dots , \frac{2 }{ C} - p'_{2B-C}, \frac{2 }{ C}, \dots, \frac{2 }{ C} \right)
\ee
These two ordered sets must be equal, so we can solve for all the eigenvalues in terms of $B$ and $C$ by setting the first $2B-C$ entries equal in the two sets. However, the last $C-B$ entries must also agree, and this gives a constraint: we must have that $C - B$ divides $B$. In this case, we can write $B = N K$ and $C = (N+1) K$, and the eigenvalues are
\be
\frac{2 }{ (N+1)K} \left(1, \dots, 1, \frac{N-1 }{ N}, \dots, \frac{N-1 }{ N},\dots, \frac{1 }{ N} \dots \frac{1}{ N}\right)
\ee
with each eigenvalue appearing $C-B = K$ times. From (\ref{Ueq}), we then require that
\be
\label{Ueq2}
U  = \left[ \ba{cccc} U_1 & & & \cr & U_2 & & \cr & & \ddots & \cr & & & U_N \ea \right] \left[\ba{c} R_-^\dagger \cr  R_+^\dagger \ea \right]\left[ \ba{cc} \identity & 0  \cr 0 & u^\dagger \ea \right]
\ee
Here, the rightmost matrix removes the $u$ in (\ref{Ueq}), the second matrix is a permutation that places the eigenvalues in the same order as on the right hand side, and the leftmost matrix is a block diagonal matrix of $K \times K$ unitaries $U_i$ that leave the matrix invariant. Combining everything, we find the general solution for the case $(B,C) = (NK,(N+1)K)$ is
\be
\left[{\cal I} \, \, \, {\cal J} \right] = \sqrt{\frac{2}{(N+1) K}} \left[ \ba{cccc} U_1 & & & \cr & U_2 & & \cr & & \ddots & \cr & & & U_N \ea \right] \left[ {\large 0}_{NK \times K} \ba{cccc}  \sqrt{\frac{1}{N}} \identity_K & & & \cr & \sqrt{\frac{2}{N}} \identity_K & & \cr & & \ddots & \cr & & & \identity_K \ea \right]
\ee
However, we can set all the $U_i$ to $\identity$ by residual $U(B) \times U(C)$ transformations $V \times W$ where
\bea
V &=& \left[ \ba{cccc} W_1 & & & \cr & W_2 & & \cr & & \ddots & \cr & & & W_n \ea \right] \cr
W &=& \left[ \ba{cccc} W_1^* & & & \cr & W_2^* & & \cr & & \ddots & \cr & & & W_{n+1}^* \ea \right] \cr
W_1 &=& \identity \qquad W_{i > 1} = U_1 U_2 \cdots U_i
\eea
Thus, for $(B,C) = (nK,(n+1)K)$, we have a unique locally maximally entangled state up to local unitary transformations, given by
\be
|\Psi_{(2,NK,(N+1)K)} \rangle = \frac{1}{\sqrt{(N+1)K}} \sum_{i=1}^K \sum_{b=1}^N \left\{ \sqrt{N + 1 - \frac{b}{N}} |0 \rangle  \otimes  |b \; i \rangle  \otimes |b \; i \rangle + \sqrt{\frac{b}{N}}  |1 \rangle  \otimes|b \; i \rangle  \otimes |b+1 \; i \rangle \right\}
\ee
This is simply a tensor product
\be
|\Psi_{(2,N,(N+1))} \rangle  \otimes |\Psi_{(K,K)} \rangle
\ee
where $|\Psi_{(K,K)} \rangle$ is the Bell state
\be
|\Psi_{(K,K)} \rangle = \frac{1}{\sqrt{K}}  \sum_{i=1}^K  |i \rangle  \otimes  |i \rangle \; .
\ee

\subsection{Sudoku States for $n=3$}

As an aside, we provide an additional explicit construction for LME states with $n=3$, related to the puzzle game Sudoku. Consider any $B \times C$ grid in which we place the numbers $1, \dots, A$, each appearing $k$ times and each appearing at most once in each row and at most once in each column, and such that each row and column are occupied the same number of times ($kA/B$ and $kA/C$ times respectively). Solutions to standard Sudoku puzzles give an example for $A=B=C=k=9$. We can express this grid of numbers as a matrix
\be
M = 1 \cdot M_1 + 2 \cdot M_2 + \dots  + A \cdot M_A
\ee
where each $M_i$ is a $B \times C$ matrix with each element $0$ or $1$. Then, constructing a quantum state
\be
|\Psi \rangle = \sum_{i=1}^A \frac{1}{\sqrt{kA}} (M_i)_{bc} |i \rangle  \otimes |b \rangle  \otimes |c \rangle \; ,
\ee
we can see that it will be locally maximally entangled.

\section{Representation theory construction of LME states for $(d_1,d_2,d_3) = (2,p,p)$ with prime $p$}

In this appendix, we demonstrate that for a tripartite quantum system with subsystems of dimensions $(d_1,d_2,d_3) = (2,p,p)$ for prime $p$, we can construct LME states using the representation theory construction of section 3. We use a finite group $H$ that is a particular semidirect product of $\mathbb{Z}_2$ with $UT(3,p)$, a finite group of order $p^3$ that can be represented by matrices
\begin{equation}
\label{UT}
\begin{pmatrix}
1 & a & b \\
0 & 1 & c \\
0 & 0 & 1
\end{pmatrix}
,~a,b,c \in \mathbb{Z}_p \; .
\end{equation}

\subsection{Irreducible representations of $UT(3,p)$}

The group $UT(3,p)$ has $(p-1)$ irreducible representations of dimension $p$ and $p^2$ irreducible representations of dimension 1, that we now describe.

\subsubsection*{One dimensional irreducible representations of $UT(3,p)$}
The $p^2$ one-dimensional irreducible representations of $UT(3,p)$ are given by
\be
\label{RUT}
\begin{pmatrix}
1 & a & b \\
0 & 1 & c \\
0 & 0 & 1
\end{pmatrix}
\to e^{\frac{2 \pi i}{p} (x a + y c)} \qquad \qquad x,y \in \mathbb{Z}_p
\ee
Here, we have the trivial representation for $x=y=0$ and $p^2-1$ nontrivial one-dimensional irreducible representations.

\subsubsection*{$p$ dimensional irreducible representations of $UT(3,p)$}

The $p-1$ $p$-dimensional irreducible representations can be labeled by a parameter $y \in \big(\mathbb{Z}_p\big)^{\times}$. We can describe these explicitly by
\begin{equation}
\begin{gathered}
R_y: UT(3,p) \rightarrow \GL(p, \mathbb{C}),\\
\begin{pmatrix}
1 & a & b \\
0 & 1 & c \\
0 & 0 & 1
\end{pmatrix}
\mapsto M \qquad ~M_{jk} = \delta_{j \; [k - a]} e^{\frac{2 \pi i}{p} y (c(k-a)+b)}
\end{gathered}
\end{equation}
where the matrix index in square brackets is interpreted modulo $p$.

\subsection{The group $H = UT(3,p) \rtimes_{\phi}~\mathbb{Z}_2$}

The group used in our construction is a semidirect product $UT(3,p) \rtimes_{\phi}~\mathbb{Z}_2$. To define this, let $\{1,s\}$ be the elements of $\mathbb{Z}_2$ with $s^2 = 1$, and label the elements of $UT(3,p) \rtimes_{\phi}~\mathbb{Z}_2$ by $(h,t)$ where $h \in UT(3,p)$ and $t \in \mathbb{Z}_2$. To specify the multiplication rule for $H$, we define
the group homomorphism $\phi: \mathbb{Z}_2 \rightarrow Aut(UT(3,p))$ given by $\phi(1) = \identity,~\phi(s) = \phi_{s}$ with
\begin{equation}
\phi_{s} :
\begin{pmatrix}
1 & a & b \\
0 & 1 & c \\
0 & 0 & 1
\end{pmatrix}
\rightarrow
\begin{pmatrix}
1 & -a & b \\
0 & 1 & -c \\
0 & 0 & 1
\end{pmatrix}.
\end{equation}
It is straightforward to verify that $\phi_s$ is indeed an automorphism of $UT(3,p)$ that preserves elements in the center of $UT(3,p)$. In terms of $\phi$, the group multiplication rule for $H$ is then specified as
\begin{equation}
(h,~t) \cdot (h',~t') = (h \phi_t (h'),~tt').
\end{equation}

\subsubsection*{Conjugacy class structure for $UT(3,p) \rtimes \mathbb{Z}_2$}

The group $UT(3,p) \rtimes \mathbb{Z}_2$ has some conjugacy classes `inherited' from the smaller group, $UT(3,p)$ and then $p$ large conjugacy classes outside of this normal subgroup. All together there are $\frac{p^2-1}{2}$ classes of size $2p$ and $p$ classes of size $1$ making up the $UT(3,p)$ subgroup, and $p$ classes of size $p^2$ making up its complement.

\vspace{2mm}
In order to understand conjugation, we note that for $g = (h,t) \in UT(3,p) \rtimes \mathbb{Z}_2$, we have that
\begin{equation}
g^{-1}  = (\phi_t (h^{-1}),~t) \; ,
\end{equation}
using the fact that $\phi_t$ is an involution.

The general conjugation formula of an element $g = (h,~t) \in H$ by the element $g' = (h',~t') \in H$ is then
\begin{equation}
\begin{gathered}
(h',~t')(h,~t)(h',~t')^{-1}  = (h'\phi_{t'}(h)\phi_t(h'^{-1}),~t).
\end{gathered}
\label{eq:generalconj}
\end{equation}
For $g$ in the $UT(3,p)$ subgroup, this gives
\begin{equation}
g'gg'^{-1} = (h'\phi_{t'}(h)h'^{-1},~e).
\end{equation}
Thus, it is clear that conjugacy classes of elements in the {\it normal subgroup} $UT(3,p)$ are almost unchanged with respect to those of $UT(3,p)$ as an independent group, except that now elements become conjugate to their images under the automorphism $\phi_s$. From the structure of $\phi_s$ it is clear that only central elements ($a=c=0$) are fixed under this automorphism. We thus find that the elements of the subgroup $UT(3,p)$ split into $\frac{p^2-1}{2}$ conjugacy classes (labelled by $(a,~c) \neq (0,~0) \in (\mathbb{Z}_p \times \mathbb{Z}_p)/\{\pm 1\}$) of size $2p$ comprising the sets:
\begin{equation}
\bigg\{ \bigg(
\begin{pmatrix}
1 & \pm a & b \\
0 & 1 & \pm c \\
0 & 0 & 1
\end{pmatrix},~e \bigg)~|~b \in \mathbb{Z}_p \bigg\},
\end{equation}
and $p$ classes (labelled by $b \in \mathbb{Z}_p$) of size one comprising the sets:
\begin{equation}
\bigg\{ \bigg(
\begin{pmatrix}
1 & 0 & b \\
0 & 1 & 0 \\
0 & 0 & 1
\end{pmatrix},~e \bigg) \bigg\}.
\end{equation}
In the case that the element which we are conjugating lies in the complement of the $UT(3,p)$ subgroup, equation (\ref{eq:generalconj}) tells us that we remain in this set. Taking $h \in UT(3,p)$ of the form (\ref{UT}) we obtain the following result for the $UT(3,p)$ piece:
\begin{equation}
\begin{gathered}
h'\phi_{t'}(h)\phi_s(h'^{-1}) =
\begin{pmatrix}
1 & x & y \\
0 & 1 & z \\
0 & 0 & 1
\end{pmatrix}
\begin{pmatrix}
1 & \pm a & b \\
0 & 1 & \pm c \\
0 & 0 & 1
\end{pmatrix}
\begin{pmatrix}
1 & x & -y + xz \\
0 & 1 & z \\
0 & 0 & 1
\end{pmatrix} \\
=
\begin{pmatrix}
1 & \pm a + 2x & b + 2xz \pm az \pm cx \\
0 & 1 & \pm c + 2z \\
0 & 0 & 1
\end{pmatrix}
\end{gathered},
\label{eq:matrixconj}
\end{equation}
with the upper signs in the case $t' = e$ and the lower signs if $t' = s$.

From this, we can see that by choosing $x$ and $z$ appropriately, we may conjugate to an element in the centre of $UT(3,p)$. Moreover, elements in $H$ of the form $(h_c,~s)$ with $h_c \in Z(UT(3,p))$ are not conjugate to one another (as we see by setting $a=c=0$ and noticing that the corresponding element in equation \ref{eq:matrixconj} is central in $UT(3,p)$ if and only if $x=z=0$) and thus may be used to label the $p$ conjugacy classes of size $p^2$ that make up the complement to $UT(3,p)$ in $H$. We can write these explicitly as
\begin{equation}
\bigg\{ \bigg(
\begin{pmatrix}
1 & 2x & b + 2xz \\
0 & 1 & 2z \\
0 & 0 & 1
\end{pmatrix},~s \bigg)~|~(x,~z) \in \mathbb{Z}_p \times \mathbb{Z}_p \bigg\}.
\end{equation}

\subsubsection*{Representatives of conjugacy classes from group generators}

Letting $a_{ij}$ denote the element of $H$ given by $(h_{ij},~e)$, with $h_{ij}$ being the 3 by 3 unitriangular matrix with all zeroes on the off-diagonals except for a $1$ in the $i^{th}$ row and $j^{th}$ column, and $a_{\rtimes}$ denote $(\identity,~s) \in H$, we obtain more compact notation for the group generators. Namely, they are $a_{\rtimes},~a_{12},~a_{23},~a_{13}$. Then:
\begin{itemize}
\item
 $a_{12}^a a_{23}^c$ is a representative in the size $2p$ conjugacy class (in $UT(3,p)$) labelled by $(a,~c) \neq (0,~0) \in (\mathbb{Z}_p \times \mathbb{Z}_p)/\{\pm 1\}$.
\item
$a_{13}^b$ gives the size $1$ conjugacy class (in $UT(3,p)$) labelled by $b \in \mathbb{Z}_p$
representative .
\item
$a_{13}^ba_{\rtimes}$ is a representative in the size $p^2$ conjugacy class (in the complement of $UT(3,p)$ in $H$) labelled by $b \in \mathbb{Z}_p$.
\end{itemize}

\subsection{Irreducible representations of $UT(3,p) \rtimes \mathbb{Z}_2$}

Let us now describe the irreducible representations of $UT(3,p) \rtimes \mathbb{Z}_2$.

\subsubsection*{One-dimensional irreducible representations of $UT(3,p) \rtimes \mathbb{Z}_2$}

The two one-dimensional irreps come from taking $R((\identity,~s)) = \pm 1$, with all generators of $UT(3,p)$ mapping to $1$. In the trivial representation, the generator of $\mathbb{Z}_2$ maps to $1$ and in the non-trivial 1D irrep it maps to $-1$.

\subsubsection*{Two-dimensional irreducible representations of $UT(3,p) \rtimes \mathbb{Z}_2$}

The $\frac{p^2 - 1}{2}$ two-dimensional irreps come from taking a 1D irrep $R_{UT}$ of $UT(3,p)$ (of which there are $p^2 - 1$ non-trivial ones) and extending it to $H$ by defining:
\begin{equation}
R((\identity,s)) =
\begin{pmatrix}
0 & 1 \\
1 & 0
\end{pmatrix},~R((h,1)) =
\begin{pmatrix}
R_{UT} (h) & 0 \\
0 & R^{-1}_{UT} (h))
\end{pmatrix},\\
\end{equation}
Using (\ref{RUT}), we can write explicitly that
\begin{equation}
R \bigg( \bigg(
\begin{pmatrix}
1 & a & b \\
0 & 1 & c \\
0 & 0 & 1
\end{pmatrix},~e \bigg) \bigg) =
\begin{pmatrix}
e^{\frac{2 \pi i}{p} (x a + y c)} & 0 \\
0 & e^{-\frac{2 \pi i}{p} (x a + y c)}
\end{pmatrix}
\end{equation}
We denote these two-dimensional irreps by $2^{x,y}$, where $(x,y) \ne (0,0)$ and $2^{x,y} \equiv 2^{-x,-y}$.

\subsubsection*{Characters of two-dimensional irreducible representations}

For the size $2p$ class labelled by $(a,~c)$ the character for the representation $2^{x,y}$ is $\chi_{a,c} = e^{\frac{2 \pi i}{p} (x a + y c)} + e^{-\frac{2 \pi i}{p} (x a + y c)}$.

For the size $1$ classes labelled by $b$ the character is $\chi_b = 2$, as all these elements map to identity.

For the size $p^2$ classes labelled by $\tilde{b}$ the character is $\chi_{\tilde{b}} = 0$, since the final step in reaching this conjugacy class is multiplying a diagonal matrix by $\sigma_x$, which yields a traceless (purely off-diagonal) matrix.

\subsubsection*{$p$-dimensional irreducible representations of $UT(3,p) \rtimes \mathbb{Z}_2$}

The $2(p-1)$ $p$-dimensional irreps of $H$ come from taking a p-dimensional irrep of $UT(3,p)$ ($p-1$ of them), $R_{UT}$, and using this for the generators of $UT(3,p)$ in $H$. For the element $(\identity,~s) \in H$, we have two choices, given by
\begin{equation}
(R((\identity,s)))_{mn} = \pm \delta_{n[p-m]},
\end{equation}
where $m,~n \in \{0,1,...,p-1 \}$ label the rows and columns, and we interpret these labels modulo $p$.

Recall that $R_{UT}$ is labelled by some $y \in Z_p^{\times}$ and gives matrices with elements:
\begin{equation}
\delta_{j \; [k - a]} e^{\frac{2 \pi i}{p} y (c(k-a)+b)}.
\end{equation}

We label these irreps as $p^{\pm,y}$ with y the parameter above and $\pm$ the sign in front of the $(\identity,~s)$ matrix.

\subsubsection*{Characters of $p$-dimensional irreducible representations}

For the size $2p$ class labelled by $(a,~c)$ the character is $\chi_{a,c} = 0$, since if $h_{12} = a \neq 0$ then the matrix is purely off-diagonal. If $h_{12} = a = 0$, then the trace of the representing matrix is proportional to the sum of all $p$ of the $p^{th}$ roots of unity, which is zero.

For the size $1$ classes labelled by $b$ the character is $\chi_b = pe^{\frac{2 \pi i}{p} b y}$, since the representing matrix is simply $e^{\frac{2 \pi i}{p} b y}$ times the $p$ by $p$ identity matrix.

For the size $p^2$ classes labelled by $\tilde{b}$ the character is $\chi_{\tilde{b}} = \pm e^{\frac{2 \pi i}{p} \tilde{b} y}$, since the final step in reaching this conjugacy class leaves only the leftmost column and top row element on the diagonal (with an overall sign given by the sign of the matrix representing the generator of $\mathbb{Z}_2$), which is $\big(e^{\frac{2 \pi i}{p} y}\big)^{\tilde{b}} = e^{\frac{2 \pi i}{p} \tilde{b} y}$.

\subsubsection*{Tensor Product of 2-dimensional and p-dimensional irreducible representations of $UT(3,p) \rtimes \mathbb{Z}_2$}

Multiplying the characters of $2^{x,y}$ and $p^{\pm,y'}$, we find that the result only has nonzero characters (equal to $2pe^{\frac{2 \pi i}{p} b y'}$) on the center of the group $H$. Looking at the characters of the $p$-dimensional irreps, one immediately sees that this can be understood as the sum of the characters of two $p$-dimensional representations with the opposite choice of sign for the matrix representing the generator of $\mathbb{Z}_2$. Hence we obtain:
\begin{equation}
2^{x,y}  \otimes p^{\pm,y'} = p^{+,y'} \oplus p^{-,y'}.
\label{singletensorpdt}
\end{equation}
It follows immediately that the tensor product of representations $2^{x,y}$, $p^{\pm,y'}$, and the dual of $p^{+,y'}$ or $p^{-,y'}$ will contain the trivial representation, so there exists a representation theory construction of LME states for the case $(d_1,d_2,d_3) = (2,p,p)$ as we wished to show.

\section{Multipart systems for which the generic state has trivial stabilizer}

In this section, we would like to understand for which multipart systems a generic state has trivial stabilizer.

First, we note that according to Theorem 2 of \cite{ampopov}, in the case where $d_n \le \frac{1}{2} d_1 \cdots d_{n-1}$ the stabilizer will be trivial unless $(d_1, \dots, d_n)$ is $(2,d,d)$, $(2,2,2,2)$ or $(3,3,3)$. These cases have $\Delta = -2$, $\Delta = 3$, and $\Delta = 2$ respectively.

Suppose first that for dimensions $(d_1,d_2, \dots, d_n)$, we have $\Delta > 3$. Then according to Proposition 5.3, the recursion in Theorem 5.1 will terminate on case (3) and we have $d_n' \le \dfrac{1}{2} d_1' \cdots d_{n-1}'$. Since $\Delta > 3$, we can't be in any of the special cases above, so the stabilizer for the terminal dimensions $(d_1',d_2', \dots, d_n')$ is trivial. According to Corollary 1 to Lemma 2 of \cite{elashvili}, the property that the stabilizer of a generic point is trivial is preserved under the recursion step of Theorem 5.1. Thus, we conclude that for $\Delta > 3$, the stabilizer of a generic point is trivial.

Conversely, suppose that the stabilizer of a generic point is trivial for some dimensions $(d_1,d_2, \dots, d_n)$. By (62), this requires that the actual dimension of the quotient $\bbP({\cal H})\GITquot G$ equals the expected dimension $\Delta$. Since the actual dimension cannot be less than or equal to -2, we are in case 3 of Proposition 5.3, so $\Delta > 0$ and the recursion in Theorem 5.1 terminates on case (3) with $d_n' \le \dfrac{1}{2} d_1' \cdots d_{n-1}'$  and $(d_1', \dots, d_n') \ne (1,\dots,1,2,d',d')$. For $\Delta > 0$ at least three of the $d_i'$s must be greater than 1. Since $\Delta(\vec{d})$ is unchanged if we remove any 1s, the conditions of the following Lemma are satisfied for $\{ d_i' \}$ with 1s removed. Since the cases $(2,2,2,2)$ and $(3,3,3)$ do not have trivial stabilizer, the Lemma implies that $\Delta > 3$.

\begin{lemma}
Suppose $2 \le d_1 \le \cdots \le d_n \le \dfrac{1}{2} d_1 \cdots d_{n-1}$ for $n \ge 3$ and $\Delta(\vec{d}) > 0$. Then $(d_1, \dots, d_n) = (3,3,3)$ with $\Delta = 2$, $(d_1, \dots, d_n) = (2,2,2,2)$ with $\Delta = 3$, or $\Delta > 3$.
\end{lemma}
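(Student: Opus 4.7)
I will split the argument on $n$, using a uniform bound for $n \ge 5$ and direct enumeration (still driven by one master inequality) for $n \in \{3,4\}$. The master inequality rewrites the hypothesis $d_n \le \tfrac12 d_1\cdots d_{n-1}$ as $\prod_i d_i \ge 2 d_n^2$, which gives $\prod_i d_i - d_n^2 \ge \tfrac12 \prod_i d_i$. Combining this with the trivial bounds $\sum_{i<n} d_i^2 \le (n-1) d_{n-1}^2$ and $\prod_i d_i \ge 2^{n-2} d_{n-1}^2$ (the latter using $d_n \ge d_{n-1}$ and $d_i \ge 2$ for $i \le n-2$) yields
\[
\Delta(\vec{d}) \;=\; \bigl(\textstyle\prod_i d_i - d_n^2\bigr) - \sum_{i<n} d_i^2 + (n-1) \;\ge\; \bigl(2^{n-3} - (n-1)\bigr) d_{n-1}^2 + (n-1).
\]
For $n \ge 5$ the inequality $2^{n-3} \ge n-1$ holds, so the coefficient of $d_{n-1}^2$ is non-negative and $\Delta \ge n-1 \ge 4$. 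Since $\Delta$ is an integer, this gives $\Delta > 3$, as required.

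For $n = 4$ the case $d_1 \ge 3$ is handled by a sharpening of the master inequality: using $d_1, d_2 \ge 3$ in place of $d_i \ge 2$ for $i \le n-2$ improves the lower bound on $\prod_i d_i/2$ to $\tfrac{9}{2} d_3^2$, whence $\Delta \ge \tfrac{3}{2} d_3^2 + 3 \ge 17$. For $n = 4$ with $d_1 = 2$ I split on $(d_2, d_3)$: when $(d_2,d_3) = (2,2)$ the constraint gives $d_4 \in \{2,3,4\}$ and direct computation produces the three values $\Delta \in \{3,6,7\}$, with $\Delta = 3$ only at $(2,2,2,2)$; the remaining subcases $(d_2,d_3) \ne (2,2)$ are handled by the same refinement ($\prod/2 \ge d_2 d_3^2 \ge 3 d_3^2$ when $d_2 \ge 3$, or $\prod/2 \ge 2 d_3^2$ with $d_3 \ge 3$ when $d_2 = 2$), giving $\Delta \ge 4$ in each remaining subcase.

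For $n = 3$ the ordering and $d_3 \le \tfrac12 d_1 d_2$ together with $d_2 \le d_3$ force $d_1 \ge 2$; if $d_1 = 2$ then $d_3 = d_2 = d$ and a direct calculation gives $\Delta(2,d,d) = -2$, violating $\Delta > 0$. Hence $d_1 \ge 3$, and the admissible triples with small $d_1$ are easily enumerated: for $d_1 = d_2 = 3$ the constraint restricts $d_3 \in \{3,4\}$ giving $\Delta \in \{2,4\}$; for $d_1 = 3$ and $d_2 \ge 4$ a short direct computation using $\Delta(3,d,d) = d^2 - 7$ and the fact that $\Delta(3,d,d_3)$ is a downward parabola in $d_3$ pinned at its boundary shows $\Delta \ge 9$; for $d_1 \ge 4$ the master inequality gives $\Delta \ge \Delta(4,4,4) = 18$. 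Thus the only triple with $0 < \Delta \le 3$ is $(3,3,3)$.

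The main obstacle is conceptual rather than computational: the hypothesis region is not closed under componentwise decrease (for example $(3,3,3)$ lies in the region for $n=3$, but decreasing any entry to $2$ either leaves the region altogether or drops $\Delta$ to $-2$), so one cannot minimise $\Delta$ by reducing every $d_i$ to $2$ and reading off $\Delta(2,\dots,2) = 2^n - 3n - 1$. This is precisely why the small-$n$ cases require a finite enumeration in place of a uniform reduction, while for $n \ge 5$ the master inequality is strong enough to bypass this difficulty entirely.
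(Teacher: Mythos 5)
Your proposal is correct in substance and follows essentially the same strategy as the paper: bound $\prod_i d_i - d_n^2$ from below and $\sum_{i<n} d_i^2$ from above so that the coefficient of $d_{n-1}^2$ is non\-negative for large $n$, then finish $n=3,4$ by a short enumeration. The only real difference is cosmetic: the paper first replaces $d_n$ by $d_{n-1}$ via the downward-parabola argument and factors out $d_{n-1}^2$ to get the coefficient $2^{n-2}-n$ (so the uniform bound kicks in at $n>4$ and $n=4$ gives $\Delta\ge n-1=3$ with equality traced to $(2,2,2,2)$), whereas you use $\prod_i d_i - d_n^2 \ge \tfrac12\prod_i d_i$ and get $2^{n-3}-(n-1)$, with the same threshold. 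All of your $n\ge 5$ and $n=4$ subcases check out (for the $n=4$, $d_1=2$, $d_2\ge 3$ subcase you do need the slightly sharper bound $\sum_{i<4}d_i^2\le 4+2d_3^2$, or the observation that equality in $\sum_{i<4}d_i^2\le 3d_3^2$ is impossible, to push $\Delta\ge 3$ up to $\Delta\ge 4$; this is implicit but easily supplied).

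There is one step that fails as literally written: in the $n=3$, $d_1\ge 4$ subcase you claim "the master inequality gives $\Delta\ge\Delta(4,4,4)=18$." It does not: at $(4,4,4)$ the master inequality yields only $\Delta\ge \tfrac12\cdot 64-16-16+2=2$, and more generally for $(4,d_2,d_3)$ it yields $\Delta\ge d_2^2-14$, which is $\le 3$ precisely when $d_2=4$. The factor-of-two loss in $\prod_i d_i - d_3^2\ge\tfrac12\prod_i d_i$ is too coarse here. The conclusion is nevertheless true and the fix is already in your toolkit: the parabola reduction you invoke for $d_1=3$ gives $\Delta(d_1,d_2,d_3)\ge\Delta(d_1,d_2,d_2)=(d_1-2)d_2^2-d_1^2+2\ge(d_1-3)d_1^2+2\ge 18$ for $d_1\ge 4$, which is exactly how the paper handles this case. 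With that one-line repair the proof is complete.
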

\begin{proof}
As we explained in section 5.2, $\Delta$ considered as a function of $d_n$ is a downward parabola that increases monotonically from $d_n = d_{n-1}$ to its maximum at $d_n = \dfrac{1}{2} d_1 \cdots d_{n-1}$. Thus, for $d_n \le \dfrac{1}{2} d_1 \cdots d_{n-1}$ we have that
\beas
\Delta(d_1, \dots, d_n) &\ge& \Delta(d_1, \dots, d_{n-1}, d_{n-1}) \cr
 &=& d_1 \cdots d_{n-1} d_{n-1} - d_1^2 - \cdots - d_{n-1}^2 - d_{n-1}^2 + n - 1 \cr
 &\ge& (d_1 \cdots d_{n-2} - n)d_{n-1}^2  + n - 1 \cr
 &\ge& (2^{n-2} - n)d_{n-1}^2  + n - 1
\eeas
For $n > 4$, the expression in parentheses in the last line is positive, so we have $\Delta > 3$.

For $n=4$, the expression in parentheses in the last line vanishes, so we have that $\Delta \ge 3$. For equality, the last inequality requires $d_1 = d_2 = 2$, the penultimate inequality requires $d_1 = d_2 = d_3$, and the first inequality requires that $d_4 = d_3$. Thus we have $\Delta > 3$ or $(d_1, \dots, d_4) = (2,2,2,2)$ with $\Delta = 3$ in this case.

For $n=3$, $\Delta > 0$ requires that $d_1 \ge 3$ since for $(d_1,d_2,d_3) = (2,d_2,d_3)$, the maximum $\Delta$, achieved for $d_3 = d_2$, is -2. From the penultimate line above, the expression in parenthesis will then be positive (and consequently $\Delta$ will be at least 18) unless $d_1 = 3$. For $d_1 = 3$, the second line above gives $\Delta \ge (d_2^2 - 9) + 2$. This will be at least 9 unless $d_2 = 3$. Finally, for $(d_1,d_2,d_3) = (3,3,d_3)$, we find that $\Delta = 2$ for the case $(3,3,3)$ and $\Delta > 3$ otherwise.
\end{proof}

\bibliographystyle{unstr}

\end{document}